\newtheorem{example}{Example}[section]
\newtheorem{theoreme}[example]{Theorem}
\newtheorem{corollaire}[example]{Corollary}
\newtheorem{proposition}[example]{Proposition}
\newtheorem{lemme}[example]{Lemma}
\newcommand{\qed}{\hfill$\Box$ \vspace{0.5 cm}}
\newenvironment{proof}{{\it Proof~: }}{\qed}
\newcommand{\ie}{i.e.}
\newcommand{\moy}[1]{\ensuremath{\langle #1 \rangle}}
\newcommand{\somme}[2]{\ensuremath{\sum_{#1}^{#2}}}
\newcommand{\pk}[2][k]{\ensuremath{p_{#1}(#2)}}
\newcommand{\harmo}[2]{\ensuremath{H_{#1}^{(#2)}}}
\newcommand{\defharmo}{%
where $\harmo{K}{\alpha} = \somme{k=1}{K} k^{-\alpha}$
is the $K$-th harmonic number for $\alpha$
}
\newcommand{\forPoissonFinis}{For large Poisson networks with $N$ nodes and average degree $z$}
\newcommand{\forPoisson}{For Poisson networks with size tending towards infinity and average degree $z$}
\newcommand{\forVsfFinis}{For large discrete power-law networks with $N$ nodes and exponent $\alpha$}
\newcommand{\forVsf}{For discrete power-law networks with size tending towards infinity and
exponent $\alpha$}
\newcommand{\forSfFinis}{For large continuous power-law networks with $N$ nodes,
exponent $\alpha$ and
minimal degree $m$}
\newcommand{\forSf}{For continuous power-law networks with size tending towards infinity,
exponent $\alpha$
and minimal degree $m$}
\newcommand{\refplotsseuil}{
For technical details on our plots, on the computation of thresholds,
and for discussions on the origins of differences
between experiments and predictions, see Section~\ref{sec_plots}.
}
\newcommand{\refplotscc}{
For technical details on our plots
see Section~\ref{sec_plots}.
}
\newcommand{\lefttoright}{}
\newcommand{\degc}[4]{
\begin{table}[!h]
\centering
\begin{tabular}{c||cc|cc||cc|cc}
 &
\multicolumn{2}{c|}{continuous} &
\multicolumn{2}{c||}{Poisson} &
\multicolumn{2}{c|}{discrete} &
\multicolumn{2}{c}{Poisson} \\
 &
\multicolumn{2}{c|}{power-law} &
\multicolumn{2}{c||}{} &
\multicolumn{2}{c|}{power-law} &
\multicolumn{2}{c}{} \\
$\alpha$ &
prev. & exp. &
prev. & exp. &
prev. & exp. &
prev. & exp.\\
\hline
2.5 & #3\\
  3 & #4
\end{tabular}
\caption{Values of the threshold for #2 on discrete and continuous power-law networks of exponents $2.5$ and $3$, and on Poisson networks having the same average degree (see~Table~\ref{tab_cases}). The values are the analytic previsions at the infinite limit and the ones obtained for experiments with networks of $N=100\ 000$ nodes.}
\label{tab_#1}
\end{table}}
\newcommand{\degcn}[4]{
\begin{table}[!h]
\centering
\begin{tabular}{c||cc|cc||cc|cc}
 &
\multicolumn{2}{c|}{continuous} &
\multicolumn{2}{c||}{Poisson} &
\multicolumn{2}{c|}{discrete} &
\multicolumn{2}{c}{Poisson} \\
 &
\multicolumn{2}{c|}{power-law} &
\multicolumn{2}{c||}{} &
\multicolumn{2}{c|}{power-law} &
\multicolumn{2}{c}{} \\
$\alpha$ &
bound & exp. &
bound & exp. &
bound & exp. &
bound & exp.\\
\hline
2.5 & #3\\
  3 & #4
\end{tabular}
\caption{Values of the threshold for #2 on discrete and continuous power-law networks of exponents $2.5$ and $3$, and on Poisson networks having the same average degree (see~Table~\ref{tab_cases}). The values  are the ones obtained for experiments with networks of $N=100\ 000$ nodes, and the theoretical upper bounds.}
\label{tab_#1}
\end{table}}
\newcommand{\scaletroisres}{0.43}
\newcommand{\includetroisres}[1]{
\mbox{}\hspace{-0.4cm}\includegraphics[scale=\scaletroisres]{er_#1}\hspace{-0.7cm}
\includegraphics[scale=\scaletroisres]{sf_#1}\hspace{-0.7cm}
\includegraphics[scale=\scaletroisres]{vsf_#1}
}
\newcommand{\figreal}[2]{
\begin{figure}[h!]
\centering
\includegraphics[scale=0.4]{#1.distrib.eps}
\hspace*{0.1cm}
\includegraphics[scale=0.4]{#1.sommets.eps}
\hspace*{0.1cm}
\includegraphics[scale=0.4]{#1.aretes.eps}
\caption{{\em #2} network.
From left to right: degree distribution, node removals and link removals.}
\label{fig_#1}
\end{figure}
}
\begin{document}

\graphicspath {{PLOT/}}


\begin{center}
{\LARGE \bf
Impact of Random Failures and Attacks on\\
\vskip 0.3cm
Poisson and Power-Law Random Networks
}
\vskip 0.5cm
{\large
Cl\'emence Magnien\,\footnote{\textsc{lip6} -- {\sc cnrs}
-- Universit\'e Pierre et Marie Curie -- 104 avenue du pr\'esident Kennedy,
750016 Paris, France,},
Matthieu Latapy\,\footnotemark[1]
and
Jean-Loup Guillaume\,\footnotemark[1]
}
\vskip 0.2cm
\small
Contact: clemence.magnien@lip6.fr
\vskip 0.2cm
\end{center}


\begin{abstract}
It appeared recently that the underlying degree distribution of
networks may play a crucial role concerning their robustness.
Empiric and
analytic results have been obtained, based on asymptotic and
mean-field approximations. Previous work insisted on the fact
that power-law degree distributions induce high resilience to
random failure but high sensitivity to attack strategies,
while Poisson degree distributions are quite sensitive in both
cases. Then, much work has been done to extend these results.

We aim here at
studying in depth these results,
their origin, and 
limitations.
We review in detail previous contributions and give
full proofs in a unified framework, and identify the approximations on
which these results rely. We then present
new results aimed at enlightening some important
aspects. We also provide extensive rigorous experiments which help
 evaluate the relevance of the analytic results.

We reach the conclusion that, even if  the basic
results of the field are clearly true and important, they are in practice much
less striking than generally thought. The differences between random
failures and attacks are not so huge and can be explained with simple
facts. Likewise, the differences in the behaviors induced by power-law
and Poisson distributions are not as striking as often claimed.

\medskip
\noindent
Categories and Subject Descriptors:
A.1 [{\bf Introductory and Survey}];
C.2.1 [{\bf Com\-pu\-ter-Communication Networks}]: Network Architecture and Design -- {\em Network topology};
G.2.2 [{\bf Discrete Mathematics}]: Graph Theory -- {\em Network Problems};

\medskip
\noindent
General Terms: Experimentation, Reliability, Security

\end{abstract}


\vskip 0.2cm

\subsection*{Introduction.}

It has been shown recently, see for instance
\cite{albert02statistical,dorogovtsev02evolution,newman03structure,strogatz01exploring,watts98collective},
that most real-world complex networks have non-trivial properties in
common.
In particular, the degree distribution
(probability $p_k$ that a randomly chosen node has $k$ links, for each
$k$) of most real-world complex networks is heterogeneous
and well fitted by a power-law, \ie{} $p_k \sim k^{-\alpha}$, with
an exponent $\alpha$ between $2$ and $3$ in general.
This property has been observed in many cases, including internet and web
graphs \cite{faloutsos99powerlaw,govindan00heuristics,crit_int3,IPconn,pastor01dynamical,magoni01analysis,chang01inferring,pastor01dynamical,crit_int2,crit_int1,chen02origin,barabasi02modeling,bu02distinguishing,albert99diameter,Adamic2000Web,kumar99trawling,broder00graph,leonardi03},
social networks
\cite{Liljeros2001SexualContacts,newman01scientific1,newman01scientific2,ebel02,jin01structure},
and biological networks
\cite{kohn99cell,uetz2000protein,jeong00largescale,farkas03yeast}.

In most of these cases, the existence of a path in the network from
most nodes to most others, called {\em connectivity}, is a crucial
feature.
For instance, in the case of the internet, it means that computers can
communicate; in the case of the web if means that one may reach most
pages from most others by following hyperlinks; and in the case of
social networks it conditions the ability of information and diseases
to spread.
Note that
connectivity may be a desirable feature (in the case of the internet for instance),
or an unwanted one
(in the case of virus propagation, for instance), depending on the
application under focus.

Networks are subject to damages (either accidental or not) which
may affect connectivity. For instance, failures may occur on
computers on the internet, causing removal of nodes in internet and web
graphs.  Likewise, in social networks, people can die from a disease,
or people deemed likely to propagate the disease can be vaccinated,
which corresponds to node removals.
For the study of these phenomena, accidental failures may
be modeled by removals of random nodes and/or links in the considered network, while attacks may
be modeled by removals following a given strategy.

\medskip

Networks of different natures may behave differently when one removes
nodes and/or links.
The choice of the removed nodes or links may also influence
significantly the obtained behavior. It has been confirmed that this
is indeed the case in the famous paper \cite{albert00error}, in which
the authors consider networks with Poisson and power-law degree
distributions\,\footnote{%
More specifically, they considered Erd\"os-R\'enyi random graphs and graphs obtained
with the preferential attachment model~\cite{barabasi99emergence}, which are not representative of
all networks with Poisson or power-law degree distributions.
In particular networks obtained with the preferential attachment model are not equivalent
to the random power-law networks we consider in this paper, see Section~\ref{sec_net}.},
and then remove nodes either randomly (failures) or in decreasing
order of their degree (attacks). They measure the size of the largest connected
component (\ie\ the largest set of nodes such that there is a path in the network
from any node to any other node of the set) as a function of the fraction of removed
nodes.

The authors of~\cite{broder00graph} had pursued the same kind of idea
earlier. They tried to establish whether the connectivity of the web
is mainly due to the (very popular) pages with a very large number of
incoming links by studying the connectivity of the web graph from
which the links going to these pages have been removed.

The authors of~\cite{albert00error} obtained the results presented in
Figure~\ref{fig_ps_as_cc},  which shows two things:
the removal strategies play an important role,
and the two kinds of networks behave significantly differently:
in particular, it seems that networks with power-law degree
distributions are
 very resilient to random failures, but very
sensitive to attacks. This particularity is now referred as the {\em
Achille's heel of the internet}~\cite{nature00cover,Barabasi2003Handbook}.
In the case of a social network on which one wants to design
vaccination strategies, it means that one may expect better
efficiency by vaccinating people with the highest number of acquaintances than with random
vaccination~\cite{Dezso2002virusScaleFree,Vespignani2002Immuniztion,Cohen2003Immunization,holme04efficient}.

\begin{figure}[!h]
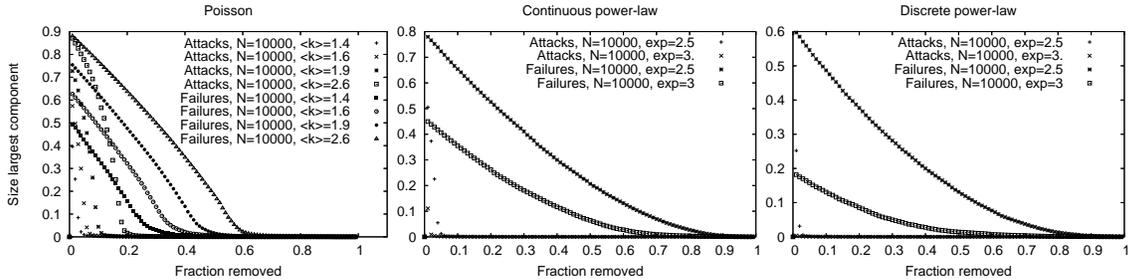

\begin{center}
\includetroisres{10000_ps_as_cc.eps}
\caption{Size of the largest connected component as a function
of the fraction of randomly removed nodes (failures) and nodes removed
in decreasing order of node degrees (classical attacks).
\lefttoright
We will define properly these different kinds of networks in
Section~\ref{sec_net}.
\refplotscc
\label{fig_ps_as_cc}
}
\end{center}
\end{figure}

Since then, much work has been done to extend this initial result.
Other kinds of failures and attacks, in particular cascade ones, as
well as other kinds of topologies, have been studied. See for
instance~\cite{lee05robustness,crucitti04error,crucitti04model,newth04evolving,Broido2002Resilience,park03static,Zhou2004RobustnessAS,Holme2002Vulnerability,Flajolet2002Robustness,Motter2002RangeAttacks,motter02cascade,Motter2004Cascade,Pertet2005Cascading,Zhao2004Cascading}.
Some studies introduced other criteria for measuring the state of the
network, see for
instance~\cite{Latora2001Efficient,Crucitti2003EfficiencyScale-Free,Broido2002Resilience,park03static}.
Cases where the underlying networks have non-trivial degree
correlations have also been studied, see for
instance~\cite{Boguna2003EpidemiCorrelations,Vazquez2003ResilienceCorrelations}.
Recent studies focused on the identification and design of robust
topologies, or repair strategies, see for
instance~\cite{valente04peak,paul04optimizarion,paul04optimizarionbis,tanizawa05optimization,costa04reinforcing,chi05stability,holme04efficient,Gallos2004Tolerance,gallos05stability,rezaei05disaster,Raidl2002Augmentation,Beygelzimer2004Improving,Beygelzimer2005EdgeModification}. More details are given in section~\ref{sec_prelim_modeling}.

\medskip

Important efforts have been made to give analytic
results based on mean-field
approximations completing the experimental ones, see in
particular~\cite{Cohen2000RandomBreakdown,Cohen2001Attack,Cohen2003Handbook,Newman2000Robustness,Newman2003Handbook,albert02statistical}.
Our aim is to present these results in detail, to deepen them with new results, and to discuss
their implications.

A significant part of this paper is therefore devoted to detailing
the existing proofs of previously known results.
Indeed, these proofs rely strongly
on mean-field approximations which are classical in statistical
mechanics, but quite unusual in computer science. They are therefore
only sketched in the original papers, and many approximations are
made implicitly.
Therefore we deem in important  to give proofs with full
details and explicit approximations.

The original papers moreover focus on specific cases of interest.  We
give here a unified and complete view of the questions
under concern,  including some new results,
which makes it possible to deepen significantly our
understanding of the field. In particular, we give results
concerning  failures and attacks on links, as well as results on finite
cases, which have received little attention. We also compare the
two different approaches for the study of power-law networks proposed
in~\cite{Cohen2001Attack,Cohen2000RandomBreakdown,Cohen2003Handbook}
and in~\cite{Newman2000Robustness,Newman2003Handbook}.

This paper may therefore be considered as an in-depth and didactic
survey of the main current results of the field, with the aim of
unifying the different approaches and questions that have been studied,
which leads to the introduction of some
 new results.

\medskip

This paper is organized as follows. Section~\ref{sec_prelim} is devoted to
 preliminaries, which consist of definitions and models, of methodological
discussions, and of some preliminary results.
In particular, the approximations made in the proofs in
this paper are presented and explained.  Sections~\ref{sec_p}
and~\ref{sec_a} deal with failures and attacks,
respectively. We present classical results of the field,
as well as several new results which aim at improving our understanding of the
phenomena under concern. We discuss the behavior of several
real-world complex networks in Section~\ref{sec_real}, and compare
them to expected behaviors from theory. Finally, we give an in-depth
discussion and synthesis of our understanding of the field in
Section~\ref{sec_conclu}.


\section{Preliminaries.}
\label{sec_prelim}

Before entering the core of the paper, we need some important
preliminaries. They consist of preliminary definitions and results, mainly concerning probability distributions and the models we will
consider, but also of methodological aspects. This section should be
read carefully before the rest of the paper since important notions are
introduced and discussed here.

Let us insist on the fact that
most results in this paper are obtained using {\em approximations},
aimed at simplifying the computation. These
approximations are valid in the limit of networks with large sizes.
 They typically consist of neglecting the difference between $N$ and $N \pm n$ when $n$ is
small compared to $N$, or of supposing that random values are equal
to the average. More subtle approximations are also done,
belonging to the {\em mean-field} approximation framework, classical
in statistical mechanics and widely used in the context
of complex networks~\cite{Vespignani2005TCS,Moore2005Traceroute,Newman2001RandomGraphs}.

It is important to understand that the proofs we provide are
valid only in this framework,
as we have no formal guarantee that all approximations are valid.
This is why we will
always explicitly point out the approximations we make, and we
will always compare analytic results to experiments.
Moreover, we believe that efforts should be made to obtain exact results
and proofs in this context: most results presented here are currently
beyond the areas to which exact methods have been applied with success.
Presenting exact results is however out of the scope of this paper.

\subsection{Poisson and power-law distributions.}

A probability distribution is given by the probability $p_k$, for all $k$,
that the considered value is equal to $k$.
The sum of all $p_k$ must be equal to 1. 
A Poisson distribution is characterized by $p_k = e^{-z}\frac{z^k}{k!}$,
where $z$ is the average value of the distribution. The probability
of occurrence of a  value $x$ in such a distribution therefore decays
exponentially with its difference $|z-x|$ to the average, which
means that, in practice, all the values are close to this
average.

A power-law distribution with exponent $\alpha$ is such that
$p_k$ is proportional $k^{-\alpha}$. 
In such distributions, the probability of
occurrence of a value $x$ decays only polynomially with $x$. This
implies that, though most values are small, one may obtain very large
values.
In the whole paper, we will generally consider exponents between $2$
and $4$, which are the relevant cases in our context (see
Section~\ref{sec_prelim_model}), but we will also state some results
valid out of this range.

We will consider here two types of power-law distributions,
which are the most widely used in the literature: {\em discrete}
and {\em continuous} power-law distributions. They are both
defined by their exponent $\alpha$ and their minimal value
$m$.

The corresponding continuous power-law is a Pareto distribution,
such that $\int_m^{\infty} C x^{-\alpha} \mathrm{d}x=1$.
$C$ is a normalization constant that we can compute:
$\int_m^{\infty} C\ x^{-\alpha} \mathrm{d}x
= C\ \frac{m^{-\alpha+1}}{\alpha-1} = 1$. We then obtain
$C = m^{\alpha-1}(\alpha-1)$.
To obtain discrete values, we then take $p_k$ equal to
$\int_k^{k+1} C x^{-\alpha} \mathrm{d}x$,  which is proportional
to $k^{-\alpha}$ in the limit where $k$ is large\,\footnote{
  One can also define $p_k$ to be proportional to
  $\int_{k-1/2}^{k+1/2} x^{-\alpha}\mathrm{d}x$,
  see~\cite{Cohen2001Reply}.
  This has little impact on the obtained results.
}.
And finally, $p_k = m^{\alpha-1}(\alpha-1)\int_k^{k+1} x^{-\alpha} \mathrm{d}x = m^{\alpha-1}(k^{-\alpha+1} - (k+1)^{-\alpha+1})$. We will
mainly use this form in the sequel but at some points we will switch
back to the continuous form.


The corresponding discrete power-law distribution is
$p_k = \frac{1}{C}\ k^{-\alpha},\ k \ge m$,
where $C=\sum_{k=m}^\infty k^{-\alpha}$ is the normalization constant
necessary to ensure that each $p_k$ is between $0$ and $1$ and that
their sum is $1$. In such a distribution, therefore, $p_k$ is exactly
proportional to $k^{-\alpha}$ for all $k \ge m$. In order to simplify the
computation, we will always take $m=1$ for discrete power-law
distributions in this paper. This implies that $C=\zeta(\alpha)$,
where $\zeta$ is the Riemann zeta function defined for $\alpha>1$ by
$\zeta(\alpha) = \sum_{k=1}^\infty k^{-\alpha}$.
Then,
$p_k = \frac{1}{\zeta(\alpha)}\ k^{-\alpha}$.

Discrete and continuous power-law distributions
each have their own advantages and drawbacks. For instance, continuous
power-laws are easier to use in experiments than discrete ones, which
themselves are more rigorous than continuous ones for small values.
For a more complete discussion on the advantages and drawbacks of
discrete and continuous distributions, see for
instance~\cite{Dorogovtsev2001Comment,Cohen2001Reply}.
We will use both of them in the sequel, and discuss their differences.

\subsubsection*{Bounded distributions}
\label{secK}


Given a distribution $p_k$ as defined above, one may sample a finite
number $N$ of values from it.
In  such a sample, there is a maximal value $K$.
Therefore, the actual distribution of the values in this sample, \ie\
the fraction $p_k(N)$ of values equal to $k$ for each $k$, is slightly
different from the original distribution $p_k$.
In particular, for all $k>K$, $p_k(N)=0$ (while in general $p_k > 0$).
We will therefore call these distributions {\em bounded distributions}.
The difference between bounded and unbounded distributions goes
to zero when $N$ tends towards infinity, but for any finite value of
$N$ it may play a role in our observations.

We detail below important properties of bounded distributions, starting with
their expected maximal value $K$.

The maximal value $K$ among a sample of $N$ values from a given distribution $p_k$ is
a random variable, and it is possible to give the exact formula for its expected value:
let $X_1, \ldots, X_N$ be the values sampled from the distribution,
and let $Y = \max_{i=1..N} X_i$.
$Y$ then has the following distribution:
$$P(Y=K) = (\sum_{k=0}^K p_k)^N - (\sum_{k=0}^{K-1} p_k)^N.$$
It is the probability that all values are lesser than or equal to $K$,
minus the probability that all values are lesser than $K$,
\ie{} the probability that all values are lesser than or equal to $K$ and
at least one is equal to $K$,
and its expected value is given by:
$$E[Y] = \sum_{k=0}^\infty k P(Y=k).$$
However, deriving numerical values from this formula is too intricate.
We will therefore use an approximation:

\begin{lemme} \cite{Cohen2000RandomBreakdown}
\label{lem_K_general}
For a given distribution $p_k$ such that \mbox{$p_k>0$} for all $k$,
the expected maximal value $K$ among a sample of $N$ values can be approximated by
$$\somme{0}{K-1} p_k = 1 - \frac{1}{N}.$$
\end{lemme}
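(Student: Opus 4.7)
The plan is to view this as a mean-field characterization of $K$: rather than compute $E[Y]$ from the intractable formula $E[Y] = \sum_{k} k \bigl[(\sum_{j=0}^{k} p_j)^N - (\sum_{j=0}^{k-1} p_j)^N\bigr]$, I would identify $K$ as the value at which, on average, exactly one of the $N$ i.i.d.\ samples is expected to attain or exceed it.

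Concretely, the first step is to note that for a single draw $X$, $P(X \geq K) = 1 - \sum_{k=0}^{K-1} p_k$, so by linearity of expectation the expected number of indices $i$ with $X_i \geq K$ is $N\bigl(1 - \sum_{k=0}^{K-1} p_k\bigr)$. The second step is the heuristic identification: the maximum $Y$ should lie near the value of $K$ at which this count drops through $1$, since for smaller $K$ one expects several samples above $K$ (pushing the maximum higher) and for larger $K$ one expects none. Setting the expected count equal to $1$ yields $1 - \sum_{k=0}^{K-1} p_k = 1/N$, which is exactly the claimed formula.

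A useful sanity check is to substitute this value back into the exact CDF of the maximum: $P(Y < K) = \bigl(\sum_{k=0}^{K-1} p_k\bigr)^N = (1 - 1/N)^N \to 1/e$ as $N \to \infty$, so with probability roughly $1 - 1/e$ the maximum reaches level $K$. This confirms that $K$ captures the correct typical scale of $Y$, even though it is not the exact mean. The main obstacle, which is interpretive rather than computational, is that identifying this typical scale with $E[Y]$ requires the distribution of $Y$ to be sufficiently concentrated around it; this concentration is precisely the mean-field assumption that the paper's framework relies on, and it is genuinely used (rather than proved) for the Poisson and power-law distributions considered in subsequent sections.
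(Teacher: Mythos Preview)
Your proposal is correct and follows essentially the same heuristic as the paper: both identify $K$ by setting the expected number of samples at or above $K$ equal to one, i.e.\ $N\sum_{k\ge K} p_k = 1$. The paper's proof adds only one small observation you omit, namely that the hypothesis $p_k>0$ forces this single expected value to be exactly $K$ rather than strictly larger (otherwise the same count-equals-one condition would hold at $K+1$); your $1/e$ sanity check, on the other hand, is a nice addition not present in the paper.
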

\begin{proof}{
The claim is equivalent to $\somme{K}{\infty} p_k = \frac{1}{N}$,
which means that $K$ is such that there is only one value larger
than $K$ in the sample.
Moreover this value must be exactly equal to $K$, otherwise there
would be only one value larger than $K+1$ and we would have
$\sum_{K+1}^\infty p_k = \frac{1}{N}$, which is impossible since
$p_k>0$ for all $k$.
}\end{proof}

We can apply this result to the three cases of interest:
\begin{lemme}
\label{lem_K_er}
For a Poisson distribution with average value $z$, the expected
maximal value $K$ among a sample of $N$ values can be approximated by
$$
\somme{0}{K-1} \frac{z^k}{k!} = e^z\left(1-\frac{1}{N}\right).
$$
\end{lemme}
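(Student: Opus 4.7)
The plan is simply to specialize the general approximation of Lemma~\ref{lem_K_general} to the Poisson case. First I would check that the hypothesis of Lemma~\ref{lem_K_general} is met: for the Poisson distribution $p_k = e^{-z}\frac{z^k}{k!}$ with $z>0$, we have $p_k>0$ for every $k\ge 0$, so the lemma applies directly.

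Next I would substitute this expression for $p_k$ into the identity $\sum_{k=0}^{K-1} p_k = 1 - \frac{1}{N}$ provided by Lemma~\ref{lem_K_general}. This yields
$$\sum_{k=0}^{K-1} e^{-z}\frac{z^k}{k!} \;=\; 1 - \frac{1}{N}.$$
Factoring $e^{-z}$ out of the (finite) sum and multiplying both sides by $e^z$ gives exactly the claimed formula $\sum_{k=0}^{K-1} \frac{z^k}{k!} = e^z\left(1-\frac{1}{N}\right)$.

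There is essentially no obstacle here: the statement is a direct instantiation of the previous lemma, and the only non-trivial step is the elementary verification that the strict positivity hypothesis $p_k>0$ holds for the Poisson law, which is immediate. The only subtlety worth mentioning is that, as for Lemma~\ref{lem_K_general}, this is an approximation of the expected maximum rather than an exact identity, so I would make clear that the equality above implicitly defines $K$ as a real number which is then rounded (or interpreted in the mean-field sense discussed in the preliminaries), rather than as a true expectation.
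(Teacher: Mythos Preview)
Your proposal is correct and follows exactly the same approach as the paper, which simply states that the result is a direct application of Lemma~\ref{lem_K_general} with $p_k = e^{-z}\frac{z^k}{k!}$. Your additional verification of the positivity hypothesis and the remark on the approximation nature of $K$ are more detailed than the paper's one-line proof but entirely in the same spirit.
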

\begin{proof}{
Direct application of Lemma~\ref{lem_K_general} with
$p_k = e^{-z}\frac{z^k}{k!}$.
}\end{proof}




\begin{lemme} \cite{Cohen2000RandomBreakdown}
\label{lem_K_sf}
For a continuous power-law with exponent $\alpha$ and minimal value $m$,
the expected maximal value $K$ among a sample of $N$ values can be approximated by
$K = m N^\frac{1}{\alpha-1}.$
\end{lemme}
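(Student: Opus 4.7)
The plan is to apply Lemma~\ref{lem_K_general} in its continuous form, replacing the sum $\sum_{k=K}^\infty p_k$ by the tail integral of the Pareto density $C x^{-\alpha}$, with normalization constant $C = m^{\alpha-1}(\alpha-1)$ as computed just before the bounded-distributions subsection. The equivalent statement of Lemma~\ref{lem_K_general} is $\sum_{k=K}^\infty p_k = 1/N$, so in the continuous setting I would work with the equation $\int_K^\infty C x^{-\alpha}\,\mathrm{d}x = 1/N$.

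The computation itself is a one-line antiderivative: $\int_K^\infty x^{-\alpha}\,\mathrm{d}x = K^{1-\alpha}/(\alpha-1)$ (valid for $\alpha > 1$, which is the relevant range here), so after multiplying by $C$ the equation becomes $m^{\alpha-1} K^{1-\alpha} = 1/N$. Solving for $K$ gives $K^{\alpha-1} = m^{\alpha-1} N$, hence $K = m N^{1/(\alpha-1)}$, which is the claim.

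There is essentially no obstacle in the calculation itself; the only genuine step is making explicit that passing from the discrete sum in Lemma~\ref{lem_K_general} to the integral above is another instance of the mean-field / large-$N$ approximation already invoked throughout the paper. In particular, since the discrete $p_k$ was defined as $\int_k^{k+1} C x^{-\alpha}\,\mathrm{d}x$, the tail sum $\sum_{k=K}^\infty p_k$ telescopes exactly to $\int_K^\infty C x^{-\alpha}\,\mathrm{d}x$, so in this case the replacement is actually an equality and no additional approximation beyond the one already made in Lemma~\ref{lem_K_general} is needed. This is the point I would emphasize in the write-up before performing the integration.
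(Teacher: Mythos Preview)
Your proof is correct and follows exactly the same route as the paper: apply Lemma~\ref{lem_K_general} in the form $\sum_{k\ge K} p_k = 1/N$, use the definition $p_k = (\alpha-1)m^{\alpha-1}\int_k^{k+1} x^{-\alpha}\,\mathrm{d}x$ so that the tail sum becomes the tail integral, and evaluate to get $m^{\alpha-1}K^{1-\alpha}=1/N$. Your remark that the telescoping makes the sum-to-integral replacement exact is a nice clarification that the paper leaves implicit.
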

\begin{proof}{
From Lemma~\ref{lem_K_general}, $K$ satisfies
$\sum_1^{K-1} p_k = 1 - \frac{1}{N}$.
Therefore, $\frac{1}{N} = \somme{K}{\infty} p_k$. We have that
$p_k 
     = (\alpha-1)m^{\alpha-1}\int_k^{k+1} x^{-\alpha}\mathrm{d}x$.
Therefore,
$\frac{1}{N} = (\alpha-1)m^{\alpha-1} \int_K^\infty x^{-\alpha}\mathrm{d}x
             = m^{\alpha-1} K^{-\alpha+1}$.
The result follows directly.
}\end{proof}

\begin{lemme}
\label{lem_K_vsf}
For a discrete power-law with exponent $\alpha$, the expected
maximal value $K$ among a sample of $N$ values can be approximated by
$\zeta(\alpha)(1-\frac{1}{N}) = \harmo{K-1}{\alpha}$,
\defharmo{}.
\end{lemme}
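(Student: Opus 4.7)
The plan is to apply Lemma~\ref{lem_K_general} directly, just as was done for the Poisson and continuous power-law cases. The only task is to substitute the specific form of $p_k$ for the discrete power-law distribution and rearrange.

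First, I would recall that in this paper we have fixed the minimal value $m=1$ for discrete power-law distributions, so that the normalization constant is exactly $C = \zeta(\alpha)$ and $p_k = \frac{1}{\zeta(\alpha)} k^{-\alpha}$ for all $k \geq 1$. Since $p_k > 0$ for every $k \geq 1$, the hypothesis of Lemma~\ref{lem_K_general} is satisfied.

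Next, I would plug this expression into the approximate identity $\sum_{k=1}^{K-1} p_k = 1 - \frac{1}{N}$ given by Lemma~\ref{lem_K_general}. Pulling the constant $\frac{1}{\zeta(\alpha)}$ out of the sum yields
\[
\frac{1}{\zeta(\alpha)} \sum_{k=1}^{K-1} k^{-\alpha} = 1 - \frac{1}{N}.
\]
Finally, multiplying both sides by $\zeta(\alpha)$ and recognizing the left-hand sum as the generalized harmonic number $\harmo{K-1}{\alpha} = \sum_{k=1}^{K-1} k^{-\alpha}$ yields the claimed relation $\zeta(\alpha)\bigl(1-\frac{1}{N}\bigr) = \harmo{K-1}{\alpha}$.

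There is essentially no obstacle in this proof: unlike the continuous case, where one integrates $x^{-\alpha}$ to obtain a closed-form expression in $K$, the discrete sum does not simplify further, so the best one can do is rewrite it in terms of the harmonic number. The main observation is thus that, in contrast with Lemma~\ref{lem_K_sf}, the resulting equation defines $K$ only implicitly, which is why the statement is given as an equation rather than an explicit formula for $K$.
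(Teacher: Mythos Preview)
Your proof is correct and follows exactly the paper's approach: the paper's own proof reads simply ``Direct application of Lemma~\ref{lem_K_general} with $p_k = \frac{k^{-\alpha}}{\zeta(\alpha)}$.'' You have merely spelled out the substitution and the rearrangement, which is fine.
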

\begin{proof}{
Direct application of Lemma~\ref{lem_K_general} with
$p_k = \frac{k^{-\alpha}}{\zeta(\alpha)}$.
}\end{proof}

These results may be used in practice to approximate the expected maximal
value $K$ among a sample of $N$ values. For Lemmas~\ref{lem_K_er} and~\ref{lem_K_vsf},
it is obtained iteratively by setting $K=0$ and increasing it until
$\sum_{k=0}^K p_k \ge 1-\frac{1}{N}$.

Figure~\ref{fig_K_100000} plots the estimates of the maximal
value for samples of size $N=100\,000$, for the three types of distributions we consider,
 obtained
from the results above, together with experimental values obtained by
computing the average of the maximum values of $1\,000$ sets of $N$ random values.

In the case of power-law distributions, our approximations
underestimate slightly the experimental values.
For Poisson distributions, the evaluation fits experiments exactly, but for
some precise values of the average only. This is due to the fact that $K$ can only take
integer values in the evaluation: it is actually the first integer such that
$\somme{k=0}{K} p_k > \frac{N-1}{N}$.
Therefore, this sum may sometimes be significantly
larger than $\frac{N}{N-1}$, and then the evaluation of $K$ is poor.
To evaluate this bias, we have plotted the relative error
$\left(\somme{k=0}{K} p_k\right) \frac{N-1}{N}$ in
Figure~\ref{fig_K_100000}~(left).

\begin{figure}[!h]
\begin{center}
\includegraphics[scale=0.47]{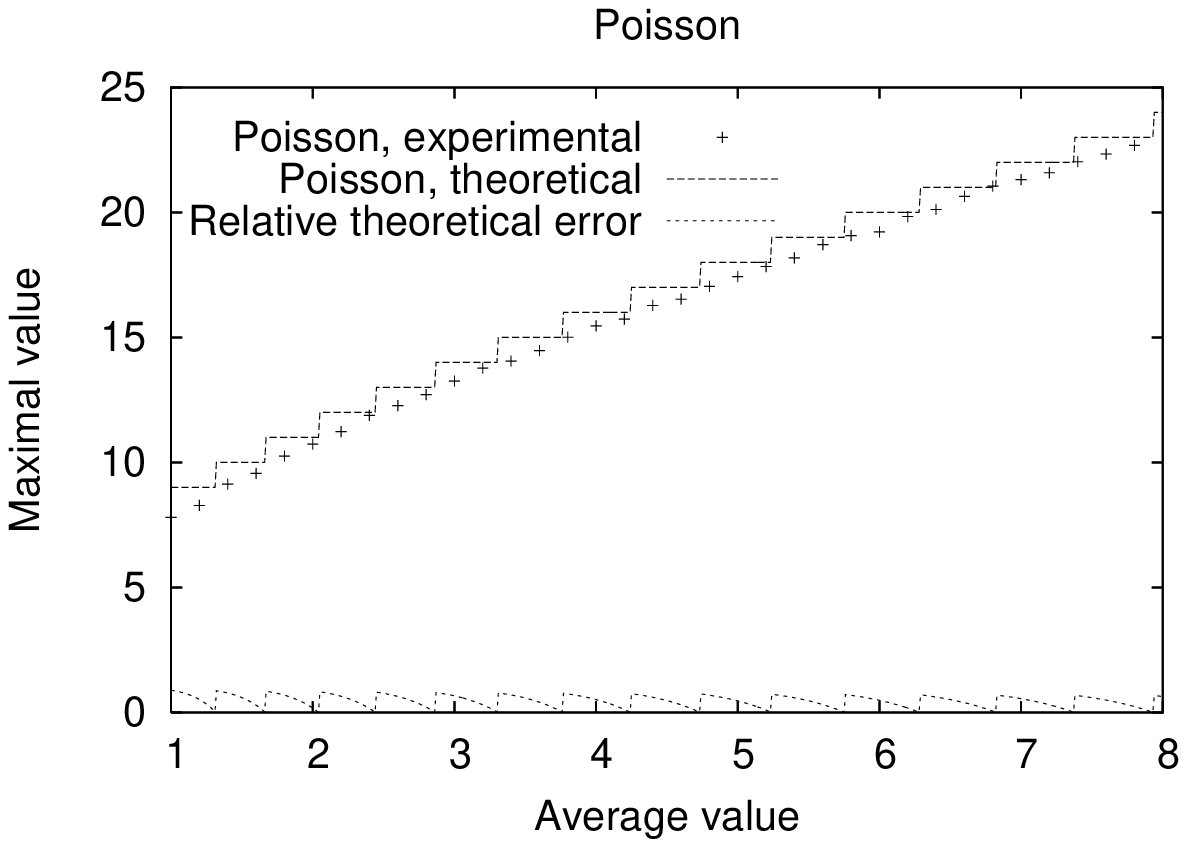}
\ \ \ \ \
\includegraphics[scale=0.47]{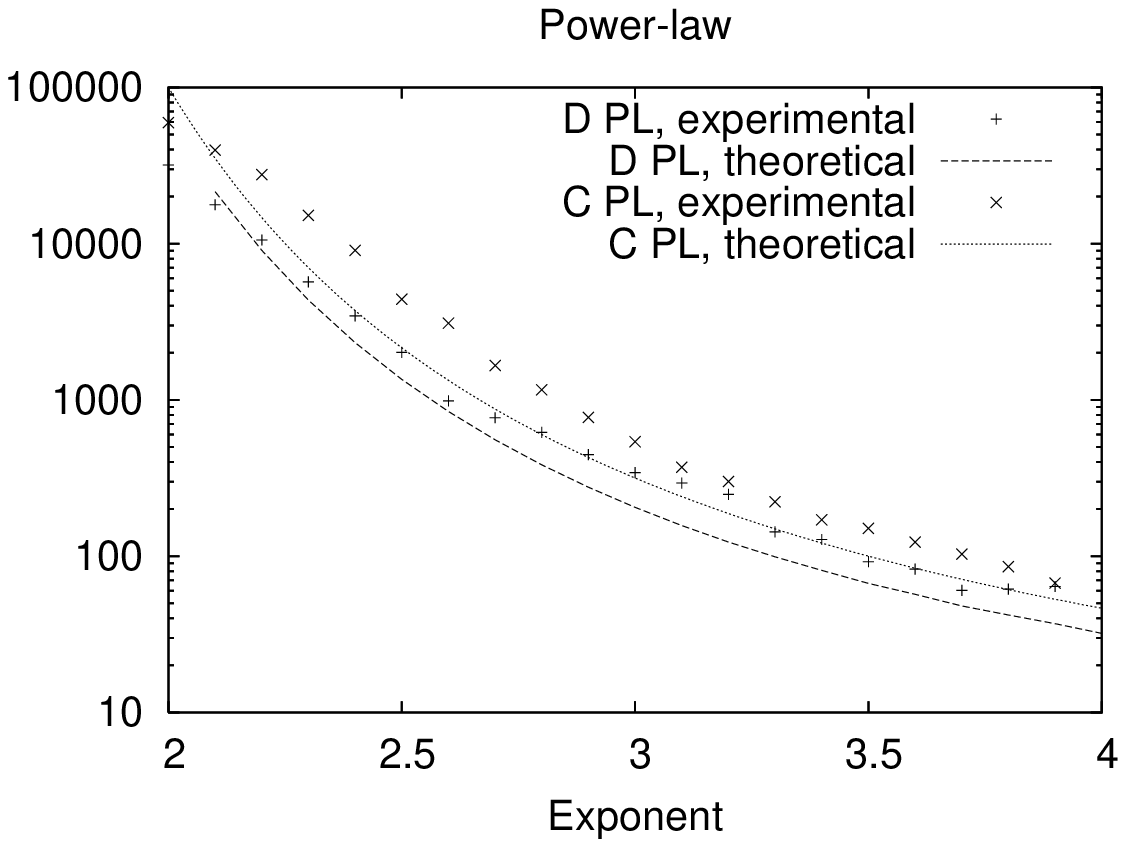}
\caption{\label{fig_K_100000}
Analytic and experimental estimates of the expected maximal among
$N=100\,000$ sampled values.
Left: for Poisson distributions, as a function of the average value;
right: for discrete and continuous power-law distributions, as a function
of the exponent.}
\end{center}
\end{figure}

All in all, our approximations for the expected maximal value are quite accurate,
and we will use them in the rest of the paper.

An important point here is to notice that, for all the distributions we
consider here, the expected maximal among $N$ sampled values grows
sublinearly with $N$.
For Poisson distributions this is obvious.
Lemma~\ref{lem_K_sf} explicitly states it for
continuous power-laws, and one may also check
the discrete power-law case.
As we will see, this is important for some approximations we will make in
the following, and for some results in Section~\ref{sec_na}.
\label{pag_K}

\bigskip

Until now, we discussed the fact that sampling $N$ values from a distribution induces an
expected maximal value. But it also induces an expected distribution of
the $N$ values, denoted by $p_k(N)$, which is different from the
original distribution $p_k$. We now study more precisely this expected
distribution.

\begin{lemme}
\label{lem_distdeg_finis_general}
The expected distribution $p_k(N)$ of $N$ values sampled from a given
distribution $p_k$ can be approximated, for all $k\le K$, by
$$
p_k(N) = \frac{N}{N-1}\ p_k
$$
where $K$ is the expected maximal value, related to $p_k$ and $N$ by
Lemma~\ref{lem_K_general}.
\end{lemme}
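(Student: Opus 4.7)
The idea is that sampling $N$ values from $p_k$ effectively truncates the support of the distribution to $\{0,1,\ldots,K\}$, where $K$ is the expected maximum, and the renormalization of this truncated distribution is exactly the claimed factor $\frac{N}{N-1}$.

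More precisely, my plan is the following. First, since by definition no value larger than $K$ appears in the sample, we have $p_k(N)=0$ for all $k>K$. For $k\le K$, the sampling mechanism picks each value according to $p_k$, so the relative weights within the sampled support must remain proportional to the original ones: there exists a constant $C$ (independent of $k$) such that $p_k(N)=p_k/C$ for $k\le K$. This is the mean-field step, where we identify the empirical histogram with its expectation and treat the sample as if it were exactly distributed according to the truncated distribution.

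Second, I would fix $C$ by imposing that $p_k(N)$ is a probability distribution, \ie{}
$$\sum_{k=0}^K p_k(N) = 1, \quad \text{so} \quad C = \sum_{k=0}^{K} p_k.$$
Using Lemma~\ref{lem_K_general}, $\sum_{k=0}^{K-1} p_k = 1 - \frac{1}{N}$, hence
$$C = \left(1 - \tfrac{1}{N}\right) + p_K = \tfrac{N-1}{N} + p_K.$$
Within the same approximation regime as Lemma~\ref{lem_K_general} (where $p_K$ itself is of order $\frac{1}{N}$ and is neglected against the leading $\frac{N-1}{N}$), we get $C \simeq \frac{N-1}{N}$, which yields $p_k(N) = \frac{N}{N-1}\,p_k$ as claimed.

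The only delicate step is the last one: one has to argue that discarding $p_K$ in the normalization is legitimate. This is consistent with the global approximation scheme of the paper, since $p_K$ contributes a correction of order $1/N$ to a factor already of order $1/N$, \ie{} a term of order $1/N^2$ that we systematically neglect. I would point this out explicitly, as it is really the place where the mean-field spirit enters: we have already committed to treating expectations as actual values when identifying the truncation point with $K$, so there is no reason to retain a sub-leading correction in the normalization.
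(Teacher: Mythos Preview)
Your proposal is correct and follows essentially the same route as the paper: assume proportionality $p_k(N)\propto p_k$ on the truncated support, normalize, invoke Lemma~\ref{lem_K_general} to get $\sum_{k\le K-1}p_k=1-\tfrac{1}{N}$, and drop the single boundary term $p_K$ to obtain the factor $\tfrac{N}{N-1}$. The paper's proof is the same argument, phrased with the multiplicative constant on the other side and with the neglected quantity described as ``the difference between $\sum_{K}^{\infty}p_k$ and $\sum_{K-1}^{\infty}p_k$'' rather than as $p_K$ explicitly.
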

\begin{proof}{
Since we sample values from a truncated distribution, we must have that $p_k(N)$ is proportional to $p_k$ for all $k\le K$:
$p_k(N) = C\ p_k$, and that the sum of all $p_k(N)$ is $1$:
$\sum_{k=0}^{\infty} p_k(N)= 1$. Moreover, we know from
Lemma~\ref{lem_K_general} that $\somme{K}{\infty} p_k = \frac{1}{N}$.
We obtain $1 = \somme{k=0}{\infty} p_k(N) = \somme{k=0}{K} p_k(N) =
C \somme{k=0}{K} p_k = C (1-\frac{1}{N})$, where we neglected the difference
between $\somme{K}{\infty} p_k$ and $\somme{K-1}{\infty} p_k$. The
claim follows.
}\end{proof}

\begin{lemme}
\label{lem_distdeg_finis_er}
For a Poisson distribution with average value $z$, the expected
distribution $p_k(N)$ of a sample of $N$ values can be approximated, for all $k\le K$, by
$$
p_k(N) = \frac{N}{N-1}\ \frac{e^{-z}z^k}{k!},
$$
where $K$ is the expected maximal value, related to $p_k$ and $N$ by
Lemma~\ref{lem_K_er}.
\end{lemme}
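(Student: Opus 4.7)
The plan is to simply specialize the general statement of Lemma~\ref{lem_distdeg_finis_general} to the Poisson case. That earlier lemma already tells us that, for any distribution $p_k$ satisfying the mild assumptions used to justify its proof, the expected empirical distribution obtained from $N$ samples is approximately $p_k(N) = \frac{N}{N-1}\, p_k$ for $k \le K$, where $K$ is the expected maximal sampled value characterized by Lemma~\ref{lem_K_general}.

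First, I would recall that the Poisson distribution of parameter $z$ is exactly $p_k = e^{-z}\, z^k / k!$, and that this distribution satisfies $p_k > 0$ for every $k \ge 0$, so the hypothesis of Lemma~\ref{lem_distdeg_finis_general} holds. Next, I would invoke Lemma~\ref{lem_K_er} to obtain the associated expected maximal value $K$ (so that the statement makes sense and the range $k \le K$ is well-defined in the Poisson setting). Finally, substituting the Poisson expression for $p_k$ into the conclusion of Lemma~\ref{lem_distdeg_finis_general} yields
\[
p_k(N) \;=\; \frac{N}{N-1}\cdot \frac{e^{-z}\, z^k}{k!},
\]
which is exactly the stated formula.

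There is essentially no obstacle here: the work has already been done in the general lemma (where the key step was the renormalization on the truncated support, neglecting the difference between $\sum_{K}^{\infty} p_k$ and $\sum_{K-1}^{\infty} p_k$), and the present statement is a pure specialization. The only thing worth mentioning in the proof is that the same mean-field style approximations underlying Lemma~\ref{lem_distdeg_finis_general} are inherited here, so the conclusion is valid in the large-$N$ limit and for $k \le K$. Thus the proof reduces to a one-line substitution with a pointer to the appropriate instance of Lemma~\ref{lem_K_general}.
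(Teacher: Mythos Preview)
Your proposal is correct and matches the paper's own proof, which is simply stated as a direct application of Lemma~\ref{lem_distdeg_finis_general} with $p_k = e^{-z}\frac{z^k}{k!}$. Your additional remarks (positivity of $p_k$, the role of Lemma~\ref{lem_K_er}, the inherited approximations) are all accurate elaborations of that one-line argument.
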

\begin{proof}{
Direct application of Lemma~\ref{lem_distdeg_finis_general}
with $p_k = e^{-z}\frac{z^k}{k!}$.
}\end{proof}

\begin{lemme}
\label{lem_distdeg_finis_sf}
For a continuous power-law distribution with exponent $\alpha$ and minimal
value $m$, the expected distribution $p_k(N)$ of a sample of $N$
values can be approximated, for all $m\le k\le K$, by
$$
p_k(N) = \frac{N}{N-1} m^{\alpha-1}\left(k^{-\alpha+1} - (k+1)^{-\alpha+1}\right),
$$
where $K$ is the expected maximal value, related to $p_k$ and $N$ by
Lemma~\ref{lem_K_sf}.
\end{lemme}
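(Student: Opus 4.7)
The plan is to deduce this statement as an immediate specialization of Lemma~\ref{lem_distdeg_finis_general}. First I would recall the explicit expression for $p_k$ in the continuous power-law case obtained at the beginning of the section: starting from the Pareto density $(\alpha-1)m^{\alpha-1}x^{-\alpha}$ on $[m,\infty)$ and integrating over the unit interval $[k,k+1]$ yields $p_k = m^{\alpha-1}\!\left(k^{-\alpha+1} - (k+1)^{-\alpha+1}\right)$ for every $k \ge m$.

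With this in hand, the proof reduces to substituting this formula into the general identity $p_k(N) = \frac{N}{N-1}\,p_k$ provided by Lemma~\ref{lem_distdeg_finis_general}. The prefactor $N/(N-1)$ carries over unchanged, and the associated expected maximum $K$ is the one supplied by Lemma~\ref{lem_K_sf}. The range $m \le k \le K$ in the statement simply records that the Pareto distribution has support in $[m,\infty)$ and that sampling $N$ values caps the empirical maximum at $K$.

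I do not anticipate any real obstacle here: the argument is exactly parallel to the proof of Lemma~\ref{lem_distdeg_finis_er} for the Poisson case, with the Poisson probability mass replaced by the continuous power-law discrete form. All the underlying approximations, namely the one behind Lemma~\ref{lem_K_general} and the one replacing $\sum_{K}^{\infty} p_k$ by $\sum_{K-1}^{\infty} p_k$ inside the proof of Lemma~\ref{lem_distdeg_finis_general}, have already been absorbed upstream and do not need to be revisited. The only minor point worth mentioning in passing is the choice of discretization (integration over $[k,k+1]$ rather than $[k-1/2,k+1/2]$), which was already flagged as inessential earlier in the section.
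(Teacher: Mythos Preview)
Your proposal is correct and matches the paper's own proof exactly: it is stated as a direct application of Lemma~\ref{lem_distdeg_finis_general} with $p_k = m^{\alpha-1}\!\left(k^{-\alpha+1} - (k+1)^{-\alpha+1}\right)$. Your additional remarks on the range $m\le k\le K$, the upstream approximations, and the discretization convention are accurate but go beyond what the paper records.
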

\begin{proof}{
Direct application of Lemma~\ref{lem_distdeg_finis_general}
with $p_k = m^{\alpha-1}(k^{-\alpha+1} - (k+1)^{-\alpha+1})$.
}\end{proof}

\begin{lemme}
\label{lem_distdeg_finis_vsf}
For a discrete power-law distribution with exponent $\alpha$,
the expected distribution $p_k(N)$ of a sample of $N$ values can be approximated, for
all $k\le K$, by
$$
p_k(N) = \frac{N}{N-1}\ \frac{k^{-\alpha}}{\zeta(\alpha)}
=\frac{k^{-\alpha}}{\harmo{K-1}{\alpha}},
$$
\defharmo{},
and $K$ is the expected maximal value, related to $p_k$ and $N$ by
Lemma~\ref{lem_K_vsf}.
\end{lemme}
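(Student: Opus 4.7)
The plan is to derive this lemma as a direct application of Lemma~\ref{lem_distdeg_finis_general}, combined with the characterization of $K$ for the discrete power-law case given in Lemma~\ref{lem_K_vsf}. The structure mirrors exactly the proofs already given for the Poisson case (Lemma~\ref{lem_distdeg_finis_er}) and the continuous power-law case (Lemma~\ref{lem_distdeg_finis_sf}), so no substantially new work is required.

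For the first equality, I would simply plug $p_k = k^{-\alpha}/\zeta(\alpha)$ into the general formula $p_k(N) = \frac{N}{N-1}\, p_k$ from Lemma~\ref{lem_distdeg_finis_general}, which is legitimate since we are told to take $m=1$ for discrete power-laws. This immediately yields
$$p_k(N) = \frac{N}{N-1}\ \frac{k^{-\alpha}}{\zeta(\alpha)}.$$

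For the second equality, I would rewrite the prefactor $\frac{N}{N-1}\cdot\frac{1}{\zeta(\alpha)}$ as $\frac{1}{\zeta(\alpha)(1-1/N)}$. By Lemma~\ref{lem_K_vsf}, the expected maximum $K$ satisfies $\zeta(\alpha)(1-1/N) = H_{K-1}^{(\alpha)}$, so the prefactor equals $1/H_{K-1}^{(\alpha)}$, which gives the claimed second form. This step is purely algebraic substitution of a previously established approximation.

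There is no real obstacle here, since both ingredients are already proved; the main thing to be careful about is simply recording that the approximation inherits all the implicit assumptions of Lemma~\ref{lem_distdeg_finis_general} (in particular, the neglected difference between $\sum_{K}^{\infty}p_k$ and $\sum_{K-1}^{\infty}p_k$), and that the identification of $\zeta(\alpha)(1-1/N)$ with $H_{K-1}^{(\alpha)}$ is an approximation in the same mean-field spirit rather than an exact identity.
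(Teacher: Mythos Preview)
Your proposal is correct and follows exactly the same approach as the paper, which simply states ``Direct application of Lemma~\ref{lem_distdeg_finis_general}, with $p_k = \frac{k^{-\alpha}}{\zeta(\alpha)}$.'' Your additional remark deriving the second equality from Lemma~\ref{lem_K_vsf} is the natural unpacking of what the paper leaves implicit.
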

\begin{proof}{
Direct application of Lemma~\ref{lem_distdeg_finis_general},
with $p_k = \frac{k^{-\alpha}}{\zeta(\alpha)}$.
}\end{proof}

The results above give a precise description of what one may
expect from finite samples from Poisson  and power-law distributions.
They will be useful  when dealing with finite networks below.

\subsubsection*{First moments of a distribution}

The average $\moy{k} = \somme{k=0}{\infty} k p_k$ of a distribution
$p_k$ is also called its {\em first moment}, the $i$-th moment being
defined as $\moy{k^i} = \somme{k=0}{\infty} k^i\ p_k$. In the
continuous case, the $i$-th moment is similarly
defined as $\moy{k^i} = \int_{k=m}^{\infty} k^i\ p_k$.
 In the whole paper,
the first and second moments will play a central role. We
present here the results we will need about them. Namely, we give
formul\ae{} for the Poisson and power-law cases, both in the infinite
case and in the case of a sample of finite size $N$.

\begin{lemme}
\label{lem_moments_finis_er}
For a Poisson distribution with average value $z$, the first
two moments of the expected distribution of a sample of $N$ values can be approximated by
$$\moy{k} = \left(\frac{N}{N-1} \right)
\somme{k=0}{K}\frac{e^{-z}z^k}{(k-1)!}
\mbox{\ \ \ and \ \ \ }
\moy{k^2} = \left( \frac{N}{N-1} \right)
\somme{k=0}{K}\frac{ke^{-z}z^k}{(k-1)!},
$$
where $K$ is the expected maximal value, related to
$p_k$ and $N$ by Lemma~\ref{lem_K_er}.
\end{lemme}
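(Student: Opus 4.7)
The plan is to derive both formulas directly from Lemma~\ref{lem_distdeg_finis_er}, which already gives a closed-form approximation for the bounded Poisson distribution $p_k(N) = \frac{N}{N-1}\,\frac{e^{-z}z^k}{k!}$ valid for $k \le K$, with $p_k(N) = 0$ for $k > K$ by definition of a bounded distribution. Since the moments of $p_k(N)$ are defined by $\moy{k} = \sum_{k=0}^{\infty} k\,p_k(N)$ and $\moy{k^2} = \sum_{k=0}^{\infty} k^2\,p_k(N)$, the vanishing of $p_k(N)$ above $K$ truncates both series at $K$.

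First, I would substitute the expression from Lemma~\ref{lem_distdeg_finis_er} into the truncated sum for $\moy{k}$, factor out the constant $\frac{N}{N-1}$, and simplify $\frac{k}{k!} = \frac{1}{(k-1)!}$ to obtain
$$\moy{k} = \frac{N}{N-1}\sum_{k=0}^{K}\frac{e^{-z}z^k}{(k-1)!},$$
which matches the stated formula. Then I would repeat the same substitution for $\moy{k^2}$, simplifying $\frac{k^2}{k!} = \frac{k}{(k-1)!}$, giving
$$\moy{k^2} = \frac{N}{N-1}\sum_{k=0}^{K}\frac{k\,e^{-z}z^k}{(k-1)!}.$$

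There is no substantial obstacle here: the lemma is essentially a bookkeeping corollary of Lemma~\ref{lem_distdeg_finis_er}. The only point worth being explicit about is the convention for the $k=0$ term, where $(k-1)!$ would be formally undefined; this is harmless because in both sums the factor $k$ (respectively $k^2$) kills the $k=0$ contribution before the simplification is applied, so the written sums can be understood as starting effectively at $k=1$. The value of $K$ itself is the one given by Lemma~\ref{lem_K_er}, which enters only through the truncation of the summation range and carries along the same approximations already discussed for bounded distributions.
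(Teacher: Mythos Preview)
Your proposal is correct and matches the paper's approach exactly: the paper's proof simply reads ``Direct application of Lemma~\ref{lem_distdeg_finis_er},'' and you have spelled out the straightforward substitution and factorial simplification that this entails. Your remark about the $k=0$ term being harmless is a nice clarification that the paper leaves implicit.
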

\begin{proof}{
Direct application of Lemma~\ref{lem_distdeg_finis_er}.
}\end{proof}

\begin{lemme}
\label{lem_moments_er}
For a Poisson distribution with average value $z$, the first
two moments are
$$
\moy{k}=z \mbox{ and } \moy{k^2} = z^2 + z.
$$
\end{lemme}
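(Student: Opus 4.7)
The plan is to compute the two moments directly from the definition $\moy{k^i} = \sum_{k=0}^\infty k^i p_k$ with $p_k = e^{-z} z^k / k!$, using the Taylor series identity $e^z = \sum_{k=0}^\infty z^k/k!$. This is the infinite (unbounded) case, so no sample-size approximation is needed: the computation is an exact identity.

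For the first moment, I would note that the $k=0$ term vanishes and shift the index: $\moy{k} = e^{-z} \sum_{k=1}^\infty k \, z^k/k! = e^{-z} z \sum_{k=1}^\infty z^{k-1}/(k-1)!$. The remaining sum is the Taylor series for $e^z$, so $\moy{k} = e^{-z} z e^z = z$.

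For the second moment, the key trick is the algebraic identity $k^2 = k(k-1) + k$, which splits the sum into two pieces that each telescope against the factorial. Concretely, $\moy{k^2} = e^{-z}\sum_{k=0}^\infty k(k-1)\, z^k/k! + \moy{k}$; in the first sum the $k=0$ and $k=1$ terms vanish, and after shifting I get $e^{-z} z^2 \sum_{k=2}^\infty z^{k-2}/(k-2)! = z^2$. Adding the already-computed $\moy{k} = z$ yields $\moy{k^2} = z^2 + z$.

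There is no real obstacle here: both computations are standard manipulations of the Poisson series. The only subtlety worth flagging is that, unlike Lemma \ref{lem_moments_finis_er}, the sums here run to infinity so one does not need the truncation factor $N/(N-1)$ nor the cutoff $K$; the result is exact rather than approximate.
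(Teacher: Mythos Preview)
Your proposal is correct and is precisely the ``direct computation'' the paper invokes: the paper's proof consists of the single line ``Direct computation, with $p_k = e^{-z}\frac{z^k}{k!}$,'' and your index shifts together with the decomposition $k^2 = k(k-1)+k$ are exactly how one carries that out.
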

\begin{proof}{
Direct computation, with $p_k = e^{-z}\frac{z^k}{k!}$.
}\end{proof}

\begin{lemme}\cite{Cohen2000RandomBreakdown}
\label{lem_moments_finis_sf}
For a continuous power-law distribution with exponent $\alpha$ and
minimal value $m$, the first two moments of the expected distribution
of a sample of $N$ values can be approximated by
$$
\begin{array}{lclclclcl}
\moy{k} & = & m^{\alpha-1}K^{-\alpha+2}\ \frac{\alpha-1}{-\alpha+2} &
\mbox{ and }  &
\moy{k^2} & = & m^{\alpha-1}K^{-\alpha+3}\frac{\alpha-1}{-\alpha+3} &
\mbox{ if } & 1<\alpha<2,\\
\moy{k} & = & m\ \frac{\alpha-1}{\alpha-2} &
\mbox{ and }  &
\moy{k^2} & = & m^{\alpha-1}K^{-\alpha+3}\ \frac{\alpha-1}{-\alpha+3} &
\mbox{ if } & 2<\alpha<3,\\
\moy{k} & = & m\ \frac{\alpha-1}{\alpha-2} &
\mbox{ and }  &
\moy{k^2} & = & m^2\ \frac{\alpha-1}{\alpha-3} &
\mbox{ if } & \alpha>3,
\end{array}
$$
where $K$ is related to $p_k$ and $N$ by Lemma~\ref{lem_K_sf}.
\end{lemme}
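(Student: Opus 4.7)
The plan is to compute the two moments directly from the density, and then to perform a case analysis on $\alpha$ identifying which of the two boundary terms of the antiderivative dominates. First I would use Lemma~\ref{lem_distdeg_finis_sf}, but switch to the continuous form $p(k) = (\alpha-1)m^{\alpha-1}k^{-\alpha}$, as the excerpt announces is done at various points; the prefactor $N/(N-1)$ of Lemma~\ref{lem_distdeg_finis_sf} tends to $1$ and is absorbed into the approximation framework announced in the preliminaries. Then I would write
$$\langle k \rangle = (\alpha-1)m^{\alpha-1}\int_m^K k^{-\alpha+1}\,dk = m^{\alpha-1}\,\frac{\alpha-1}{-\alpha+2}\bigl(K^{-\alpha+2} - m^{-\alpha+2}\bigr),$$
and similarly
$$\langle k^2 \rangle = (\alpha-1)m^{\alpha-1}\int_m^K k^{-\alpha+2}\,dk = m^{\alpha-1}\,\frac{\alpha-1}{-\alpha+3}\bigl(K^{-\alpha+3} - m^{-\alpha+3}\bigr).$$

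Next I would carry out the case analysis. Since Lemma~\ref{lem_K_sf} yields $K = m N^{1/(\alpha-1)}$, we have $K \gg m$ for large $N$, so in the difference $K^{-\alpha+i+1} - m^{-\alpha+i+1}$ the $K$-term dominates when its exponent is positive and the $m$-term dominates when the $K$-exponent is negative. For $\langle k \rangle$ the threshold is $\alpha = 2$: in the regime $1<\alpha<2$ one keeps only $K^{-\alpha+2}$, yielding the first line of the claim, while in the regime $\alpha>2$ the dominant term is $-m^{-\alpha+2}$ and, after rewriting $-1/(-\alpha+2) = 1/(\alpha-2)$ and collapsing $m^{\alpha-1}m^{-\alpha+2}=m$, one recovers $m(\alpha-1)/(\alpha-2)$. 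An identical argument for $\langle k^2 \rangle$, with threshold $\alpha = 3$, gives the $K^{-\alpha+3}$ expression for $\alpha<3$ and, using $m^{\alpha-1}m^{-\alpha+3}=m^2$, the expression $m^2(\alpha-1)/(\alpha-3)$ for $\alpha>3$.

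The main obstacle is purely bookkeeping rather than conceptual: one must track the signs of $-\alpha+2$ and $-\alpha+3$ carefully when declaring a term dominant, to ensure that what is kept is of the correct sign so that the resulting moment is positive, and one should also check that the asymptotic assumption $K \gg m$ indeed makes the subdominant term negligible rather than merely smaller (this is immediate once $K/m = N^{1/(\alpha-1)}$ is substituted). The boundary cases $\alpha = 2$ and $\alpha = 3$ are excluded from the statement precisely because the two terms are then of comparable size and a logarithmic correction appears, so no separate treatment is required.
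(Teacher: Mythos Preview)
Your proposal is correct and follows essentially the same route as the paper: approximate $N/(N-1)$ by $1$, compute $\langle k\rangle$ and $\langle k^2\rangle$ as the continuous integrals $\int_m^K$, and then drop the subdominant boundary term in each case according to the sign of $-\alpha+2$ and $-\alpha+3$ using $K\gg m$. The paper's proof is just a slightly terser version of exactly this computation.
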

\begin{proof}{
If we approximate $\frac{N-1}{N}$ by $1$, we obtain
directly in the continuous case:
$\moy{k} = m^{\alpha-1}(\alpha-1)\int_m^K x x^{-\alpha} \mathrm{d}x
         = \frac{(\alpha-1)m^{\alpha-1}}{-\alpha+2} (K^{-\alpha+2} - m^{-\alpha+2})$
and
$
\moy{k^2} = m^{\alpha-1}(\alpha-1)\int_m^K x^2 x^{-\alpha} \mathrm{d}x
          = \frac{(\alpha-1)m^{\alpha-1}}{-\alpha+3} (K^{-\alpha+3} - m^{-\alpha+3})
$.\\
Moreover, when $N$ is large, we have $K\gg m$ and we can approximate
$K^{-\alpha+2} - m^{-\alpha+2}$ by $K^{-\alpha+2}$
and
$K^{-\alpha+3} - m^{-\alpha+3}$ by $K^{-\alpha+3}$
if $1 < \alpha < 2$;
$K^{-\alpha+2} - m^{-\alpha+2}$ by $-m^{-\alpha+2}$
and
$K^{-\alpha+3} - m^{-\alpha+3}$ by $K^{-\alpha+3}$
if
$2 < \alpha < 3$; and
$K^{-\alpha+2} - m^{-\alpha+2}$ by $-m^{-\alpha+2}$
and
$K^{-\alpha+3} - m^{-\alpha+3}$ by $-m^{-\alpha+3}$
if $\alpha > 3$.
The results follow.
}\end{proof}

\begin{lemme} \cite{Cohen2000RandomBreakdown}
\label{lem_moments_sf}
For a continuous power-law distribution with exponent $\alpha$ and
minimal value $m$, the first two moments are
$$
\moy{k} = m\ \frac{\alpha-1}{\alpha-2} \mbox{\ \  if\ \  } \alpha > 2
\mbox{\ \ \ \ \ and\ \ \ \ \ }
\moy{k^2} = m^2\ \frac{\alpha-1}{\alpha-3} \mbox{\ \  if\ \  } \alpha > 3,
$$
and they diverge in all the other cases.
\end{lemme}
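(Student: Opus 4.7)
The plan is to compute the two moments directly from the definition in the continuous case, namely
$\moy{k^i} = \int_m^\infty k^i\, p(k)\,\mathrm{d}k$,
where the density is $p(k) = (\alpha-1)m^{\alpha-1} k^{-\alpha}$ for $k\ge m$, as derived at the start of this section. This is really just calculus, so no mean-field approximation is needed; the only subtlety is tracking which values of $\alpha$ make each integral converge.

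First I would substitute into the first moment to get
$\moy{k} = (\alpha-1)m^{\alpha-1}\int_m^\infty k^{-\alpha+1}\,\mathrm{d}k,$
which is a convergent integral exactly when the exponent satisfies $-\alpha+1<-1$, i.e.\ $\alpha>2$. Evaluating the antiderivative $\frac{k^{-\alpha+2}}{-\alpha+2}$ between $m$ and $\infty$, the upper limit vanishes and one is left with $(\alpha-1)m^{\alpha-1}\cdot\frac{m^{-\alpha+2}}{\alpha-2}=m\,\frac{\alpha-1}{\alpha-2}$, which is the claimed expression. For $\alpha\le 2$, the same primitive diverges at infinity (or becomes $\log$ at $\alpha=2$), so the expectation is infinite.

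Next I would perform the analogous computation for $\moy{k^2}$, where the integrand is $k^{-\alpha+2}$. Convergence at infinity now requires $-\alpha+2<-1$, i.e.\ $\alpha>3$, and evaluating the antiderivative $\frac{k^{-\alpha+3}}{-\alpha+3}$ between $m$ and $\infty$ yields $m^2\,\frac{\alpha-1}{\alpha-3}$, as claimed. For $\alpha\le 3$ the integral diverges.

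There is no real obstacle here: the whole argument is a pair of elementary integrations together with a convergence check, and the result can also be read off directly as a limiting case of Lemma~\ref{lem_moments_finis_sf} by letting $K\to\infty$ in the regime $\alpha>3$ (respectively $\alpha>2$ for the first moment), which provides an internal sanity check that the asymptotic forms derived there match the truly infinite case.
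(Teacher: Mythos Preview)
Your argument is correct. The paper's own proof is the one-liner ``direct application of Lemma~\ref{lem_moments_finis_sf} with $K$ tending towards infinity,'' which is exactly the limiting argument you mention at the end as a sanity check; your direct integration is simply the same computation carried out without the detour through the finite-$K$ lemma.
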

\begin{proof}{
Direct application of Lemma~\ref{lem_moments_finis_sf} with $K$
tending towards infinity.
}\end{proof}

\begin{lemme}
\label{lem_moments_finis_vsf}
For a discrete power-law distribution with exponent $\alpha$, the first
two moments of the expected distribution of a sample of $N$ values can be approximated by
$$
\moy{k} = \frac{\harmo{K}{\alpha-1}}{\harmo{K-1}{\alpha}}
\mbox{\ \ \  and\ \ \  }
\moy{k^2} = \frac{\harmo{K}{\alpha-2}}{\harmo{K-1}{\alpha}},
$$
\defharmo{},
where $K$ is the expected maximal value, related to
$p_k$ and $N$ by Lemma~\ref{lem_K_vsf}.
\end{lemme}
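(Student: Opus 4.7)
The plan is essentially to mimic the Poisson derivation in Lemma~\ref{lem_moments_finis_er}: expand the definitions of $\moy{k}$ and $\moy{k^2}$ on the sample and plug in the expected sampled distribution $p_k(N)$ from Lemma~\ref{lem_distdeg_finis_vsf}. Since the sample can only produce values up to the expected maximum $K$, both moment sums are truncated at $K$, which is exactly what lets the resulting expressions be written as generalized harmonic numbers.

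Concretely, I would first write
$$\moy{k} = \somme{k=1}{K} k\, p_k(N), \qquad \moy{k^2} = \somme{k=1}{K} k^2\, p_k(N),$$
using that $p_k(N)=0$ for $k>K$ and $k<1$ (recall $m=1$ in the discrete case). Next I would substitute the form $p_k(N) = k^{-\alpha}/\harmo{K-1}{\alpha}$ provided by Lemma~\ref{lem_distdeg_finis_vsf}. The denominator $\harmo{K-1}{\alpha}$ is a constant with respect to the summation, so it pulls out, leaving
$$\moy{k} = \frac{1}{\harmo{K-1}{\alpha}}\somme{k=1}{K} k^{-\alpha+1}, \qquad \moy{k^2} = \frac{1}{\harmo{K-1}{\alpha}}\somme{k=1}{K} k^{-\alpha+2}.$$
Finally, I would recognize each truncated sum as a generalized harmonic number, using the definition $\harmo{K}{\beta} = \somme{k=1}{K} k^{-\beta}$ with $\beta=\alpha-1$ and $\beta=\alpha-2$ respectively, yielding the stated formul\ae.

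There is no real obstacle here: the argument is a direct computation, and the only approximation has already been absorbed in the statement of Lemma~\ref{lem_distdeg_finis_vsf} (via the $\frac{N}{N-1}$ renormalization that turns $\zeta(\alpha)$ into $\harmo{K-1}{\alpha}$ through Lemma~\ref{lem_K_vsf}). The one small subtlety worth noting explicitly is the asymmetry between the indices $K$ appearing in the numerators and $K-1$ appearing in the denominator; I would remark that this is not a typo but exactly what falls out of the two lemmas cited, and that the discrepancy is negligible in the large-$N$ regime in which all of our approximations live.
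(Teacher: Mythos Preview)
Your proposal is correct and is precisely the paper's approach: the paper's proof is the one-liner ``Direct application of Lemma~\ref{lem_distdeg_finis_vsf}'', and you have simply spelled out the two lines of computation that this application entails. Your remark on the $K$ versus $K-1$ index asymmetry is a nice clarification that the paper does not make explicit.
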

\begin{proof}{
Direct application of Lemma~\ref{lem_distdeg_finis_vsf}.
}\end{proof}

\begin{lemme} \cite{Newman2003Handbook}
\label{lem_moments_vsf}
For a discrete power-law distribution with exponent $\alpha$, the first
two moments are
$$\moy{k} = \frac{\zeta(\alpha-1)}{\zeta(\alpha)}
\mbox{\ \  and\ \  }
\moy{k^2} = \frac{\zeta(\alpha-2)}{\zeta(\alpha)}.$$
\end{lemme}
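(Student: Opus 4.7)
The plan is to derive the statement as the infinite-$N$ limit of Lemma \ref{lem_moments_finis_vsf}, following exactly the same pattern used to pass from Lemma \ref{lem_moments_finis_sf} to Lemma \ref{lem_moments_sf} in the continuous case. Since $m=1$ by convention for discrete power-laws in this paper, the finite-sample formulas already provided read $\moy{k} = \harmo{K}{\alpha-1}/\harmo{K-1}{\alpha}$ and $\moy{k^2} = \harmo{K}{\alpha-2}/\harmo{K-1}{\alpha}$, so the work reduces to controlling the two harmonic numbers as $N$ grows.

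First I would observe that Lemma \ref{lem_K_vsf} forces $K \to \infty$ as $N \to \infty$: the defining equation $\zeta(\alpha)(1 - 1/N) = \harmo{K-1}{\alpha}$ pushes $\harmo{K-1}{\alpha}$ to $\zeta(\alpha)$, which the partial sums can only reach in the limit of infinite $K$. Second, by the very definition $\zeta(\beta) = \sum_{k=1}^{\infty} k^{-\beta}$, the harmonic numbers $\harmo{K}{\beta}$ and $\harmo{K-1}{\beta}$ tend to $\zeta(\beta)$ as $K \to \infty$ whenever the series converges. Substituting these limits in the numerators and denominators of the finite formulas immediately yields the two stated expressions.

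A slightly faster alternative, avoiding the detour through the finite-$N$ lemma, is to compute the moments directly from $p_k = k^{-\alpha}/\zeta(\alpha)$: for $i=1,2$ we get $\moy{k^i} = \sum_{k=1}^{\infty} k^i p_k = \zeta(\alpha)^{-1}\sum_{k=1}^{\infty} k^{i-\alpha} = \zeta(\alpha-i)/\zeta(\alpha)$. Either route produces the claim in a couple of lines.

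There is no real technical obstacle here; the only point worth flagging, rather than any hard step, is the implicit convergence condition. The values $\zeta(\alpha-1)$ and $\zeta(\alpha-2)$ only make sense for $\alpha > 2$ and $\alpha > 3$ respectively, so the statement should be read as holding in those regimes, with the moments diverging otherwise. This exactly mirrors the behavior already noted in Lemma \ref{lem_moments_sf} for the continuous case, so no new phenomenon appears and no separate argument is needed.
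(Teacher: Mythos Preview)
Your proposal is correct, and your ``slightly faster alternative'' is exactly the paper's proof: the paper simply writes ``Direct computation, with $p_k = \frac{k^{-\alpha}}{\zeta(\alpha)}$,'' which is your one-line substitution $\moy{k^i} = \zeta(\alpha)^{-1}\sum_{k\ge 1} k^{i-\alpha} = \zeta(\alpha-i)/\zeta(\alpha)$.

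Your primary route, passing to the limit in Lemma~\ref{lem_moments_finis_vsf}, is also valid and is in fact the method the paper uses for the continuous analog (Lemma~\ref{lem_moments_sf} is obtained from Lemma~\ref{lem_moments_finis_sf} by letting $K\to\infty$). So you have correctly identified both natural arguments; the paper just happens to pick the shorter one in the discrete case. Your remark on the convergence conditions $\alpha>2$ and $\alpha>3$ is a useful addition that the paper leaves implicit here.
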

\begin{proof}{
Direct computation, with $p_k = \frac{k^{-\alpha}}{\zeta(\alpha)}$.
}\end{proof}

We would like to discuss here the differences between the moments
of bounded and unbounded distributions.
Although the difference between the distributions themselves
is small, both for the Poisson and the power-law case
(the ratio between a bounded and an unbounded distribution is approximately $N/(N-1)$),
this is not the case for the moments of these distributions.
In practice, we can notice that for Poisson distributions,
the values of the first and second moments are almost identical
for bounded and unbounded distributions,
while there is a noticeable difference for power-law distributions.
This can be understood as follows:
these differences are strongly related to the quantities
$\somme{k=K+1}{\infty} kp_k$ and $\somme{k=K+1}{\infty} k^2p_k$.
In both cases, values of $p_k$ for $k>K$ are quite small:
$\somme{k=K}{\infty} p_k = 1/N$.
For Poisson networks, this $1/N$ is distributed among $p_k$ which decrease exponentially,
and $K$ is small.
Therefore the values of $k p_k$ and $k^2 p_k$ for $k>K$ are small.
For power-law networks on the other hand, $K$ is large, and
the probabilities $p_k$ decrease only polynomially,
 therefore values of $k p_k$ and $k^2 p_k$ for $k>K$ are much larger.


These observations will explain in the following why in some
cases theoretical predictions for the finite case and for the
infinite limit
are almost identical for Poisson networks and quite different for
power-law networks.

\bigskip

\noindent
We finally have all the preliminary results we need on distributions;
we can now use them in the context of complex networks.

\subsection{Modeling issues.}
\label{sec_prelim_model}

In this section we detail the models of networks we will consider, then discuss the modeling of failures and attacks we will use. We finally present results concerning the connectivity of random networks, which will play a key role in the sequel.

\subsubsection*{Random networks}
\label{sec_net}

Given an integer $N$ and a distribution $p_k$ one can easily generate
a network taken uniformly at random among the ones having $N$ nodes
and degree distribution $p_k$. Indeed, it is sufficient to sample the
degree of each of the $N$ nodes with respect to $p_k$, then to attach
to each node as many {\em stubs} as its degree, and finally to construct
links by choosing random pairs of stubs.  If the sum of  degrees is odd,
then one just has to sample again the degree of a random node until the sum
becomes even.
This model is known as the
{\em configuration model}~\cite{Bender1978Configuration}
and is widely used in the literature,
see for instance~\cite{bollobas85random,molloy95critical,molloy98size,aiello00random}.
We will
call {\em random
networks} all networks obtained using it\,\footnote{These networks may contain loops (links from one
node to itself) and multiple
links (more than one link between two given nodes) in small quantities,
which we will neglect in our reasoning as explained in
Section~\ref{sec_mfa}.}.

If one chooses a Poisson distribution of average $z$ then one obtains
an equivalent of the Erd\"os-R\'enyi model~\cite{erdos59random}
in which  the network
is constructed from $N$ initially disconnected  nodes by adding
$M = \frac{z\ N}{2}$ links between randomly chosen pairs of nodes. One then
obtains a network taken uniformly at random among the ones having $N$
nodes and $M$ links.

As already discussed in the introduction, and as we will  see all
along this contribution, the degree distribution of a network may
be seen as responsible for some of its most important features (like
robustness). Studying random networks with prescribed degree
distributions is therefore a key issue. Much
work has already been done to this regard, see for
instance~\cite{Newman2003Handbook,aiello00random,PV01,Cohen2003Handbook}.
These networks are particularly well suited for formal analysis, and most
formal results obtained on complex networks in the literature,
including the ones  on robustness, rely on this modeling,
this is why we use it here.
For a more detailed discussion on network modeling and other approaches,
see Section~\ref{sec_prelim_modeling}.

We will focus on three classes of networks, namely the ones with
Poisson, continuous power-law and  discrete power-law degree
distributions, which we will call {\em Poisson networks},
{\em continuous power-law networks} and {\em discrete power-law
networks} respectively.

In our experiments, we will consider Poisson networks with average
degree $z$ between $1$ and $8$, because for $z<1$ the networks do not
have a giant component
(see p.~\pageref{sec_largest_conn} and Lemma~\ref{lem_poisson_conn}),
 and we have observed that the behaviors for
\mbox{$z \ge 8$} are very similar to and easily predictable from the ones
observed for $z=8$. Concerning power-law networks, we will always take
the minimal degree $m$ equal to $1$, which fits most real-world
cases. We will consider exponents between $2$ and $4$ because below
$2$ the average degree is infinite (see Lemmas~\ref{lem_moments_sf}
and~\ref{lem_moments_vsf}) and above $4$ the network has only small
connected components, as we will see below (see Lemmas~\ref{lem_cpl_conn} and~\ref{lem_dpl_conn}).
Moreover, most real-world cases fit in these ranges.

Let us insist finally on the fact that real-world complex networks
may have other properties that influence their robustness, like for
instance correlations between degrees, clustering (local density),
and others.
Capturing these properties in formal models however
remains a challenge.
See section~\ref{sec_prelim_modeling} for more details.


We will however discuss them informally when observing the behavior of real-world networks in
Section~\ref{sec_real}.

\subsubsection*{Failures and attacks}

There are many ways to model various kinds of failures and attacks.
We will focus here on removals of nodes and/or links. We will
suppose that failures are random, in contrast to attacks, which
follow strategies.

{\em Random node failures} are then series of removal of nodes
chosen at random. Equivalently, one may choose a fraction of the
nodes at random and then remove them all. Likewise, {\em random
link failures} consist of series  of removal of links chosen at
random.

Attacks on the other hand follow  a {\em strategy} for removing nodes or/and links
which has to be defined. We then say that we observe an
{\em attack following this strategy}. For instance, we presented
in the introduction the most famous strategy, which consists of
removing nodes  in decreasing order of their degrees. We will
call this the {\em classical attack}, and we will define other
strategies in Section~\ref{sec_a}.

Notice moreover that, when one removes a node, one also removes
all the links attached to it. This leads to  the {\em link point of
view} of node failures and attacks,
 which consists of observing the fraction
of {\em links} removed during {\em node} failures or attacks.

In the sequel we will consider all these situations: random node
or link failures, attacks following various strategies, and link
point of view of node failures and attacks.
%
%
In these various cases we want to observe the resilience of networks, which requires to use a criterion to capture the impact of failures or attacks on a network.
%
We will here consider the size of its largest connected component, or more precisely the fraction of nodes in this component as a function of the number of nodes or links removed. This captures the ability of nodes to communicate, which is central in our context:
the smaller this fraction, the greater the impact of the removals. Notice however that one may use other criteria to measure the impact of failures or attacks, see Section~\ref{sec_prelim_modeling} for more details.


\subsubsection*{Largest connected component}
\label{sec_largest_conn}

In many cases, the largest connected component of a random network
contains most nodes of the network. More precisely, depending on
the underlying degree distribution, the size of the largest connected
component may scale linearly with the size of the network.
The network is then said to have a {\em giant component}.

There actually exists a precise and simple criterion on the degree
distribution of a random network to predict if this network will
have a giant connected component or not. Since most of the results
we will discuss later in this contribution rely on an appropriate use
of this criterion, we recall it here.

\begin{theoreme}
\cite{molloy95critical,aiello00random,Cohen2000RandomBreakdown,Newman2003Handbook}
\label{thseuil}
A random network with  size $N$ tending towards infinity and with
degree distribution $p_k$ such that it has maximal value $K < N^{1/4}$
almost surely has a giant component
if and only if:
$$
\langle k^2 \rangle - 2 \langle k \rangle = \sum_{k=0}^{K} k(k-2) p_k > 0.
$$
\end{theoreme}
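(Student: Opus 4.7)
The plan is to use the standard branching-process heuristic associated with the configuration model, in the same mean-field spirit as the rest of the paper. First I would argue that, as $N\to\infty$, the local neighborhood of a randomly chosen node looks like a tree: the probability that two half-edges incident to the ball of radius $r$ around a given node get paired (creating a short cycle) vanishes, provided the maximum degree $K$ grows slowly enough. This is exactly where the hypothesis $K<N^{1/4}$ is used, since the number of pairs of half-edges inside a bounded-radius neighborhood is controlled by powers of $K$, while the total number of half-edges to pair against is of order $\moy{k}\,N$. Under this tree approximation, exploring the connected component of a random starting node becomes a Galton--Watson branching process.

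Next I would identify the offspring distribution of that branching process. A uniformly chosen node has degree $k$ with probability $p_k$, so the root has offspring distribution $p_k$. However, once I arrive at a node by following an edge, the probability of landing on a node of degree $k$ is biased by $k$: it equals $k p_k/\moy{k}$, because a node of degree $k$ offers $k$ half-edges out of $\moy{k}N$ in total. Having arrived, the number of \emph{new} edges to explore from that node is $k-1$. Hence after the first step, the mean number of offspring is
\[
\mu \;=\; \sum_{k\geq 1}(k-1)\,\frac{k\,p_k}{\moy{k}} \;=\; \frac{\moy{k^2}-\moy{k}}{\moy{k}}.
\]
Classical branching-process theory then gives that the process survives with positive probability, equivalently that the explored component is of unbounded size with positive probability, if and only if $\mu>1$. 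Rewriting $\mu>1$ as $\moy{k^2}-\moy{k}>\moy{k}$ and using $\sum_k k^2 p_k - 2\sum_k k p_k = \sum_k k(k-2)p_k$ yields exactly the announced criterion.

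To finish, I would argue that when the criterion holds, a positive fraction of starting nodes lead to extensive components, and that all such extensive components actually coalesce into a \emph{single} giant component. This uniqueness step uses the observation that two distinct linear-sized components would have order $N^2$ pairs of half-edges between them, and in the configuration model at least one such pair gets matched with overwhelming probability. Conversely, if $\mu\leq 1$, the branching process dies out almost surely, so the component of a typical node is finite in expectation, which (together with a first-moment bound on the number of nodes in large components) rules out a giant component.

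The main obstacle is turning the tree-like heuristic into a rigorous statement: one must couple the breadth-first exploration in the configuration model with the Galton--Watson process up to a depth at which a macroscopic fraction of the graph has been visited, and control the accumulated error from early cycles. This is precisely where the technical condition $K<N^{1/4}$ enters, bounding the second-moment growth of the exploration front. Making this coupling quantitative, and upgrading ``positive survival probability'' into ``linear-sized giant component with high probability,'' is the delicate part and is exactly what the cited references~\cite{molloy95critical,aiello00random} carry out; for the purposes of this paper I would only sketch it and then invoke those results.
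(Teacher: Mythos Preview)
Your sketch is the standard branching-process heuristic and is correct as a mean-field derivation of the Molloy--Reed criterion; the identification of the size-biased offspring distribution, the computation $\mu=(\moy{k^2}-\moy{k})/\moy{k}$, and the rewriting $\mu>1\iff\moy{k^2}-2\moy{k}>0$ are all fine, and your remarks on where the hypothesis $K<N^{1/4}$ enters are accurate in spirit.

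However, you should be aware that the paper itself does \emph{not} prove this theorem. Immediately after stating it, the authors write: ``This theorem has been rigorously proved in \cite{molloy95critical,aiello00random} and has been proved in the mean-field approximation framework in \cite{Cohen2000RandomBreakdown,Newman2003Handbook}. Detailing these proofs is out of the scope of this paper.'' So there is no proof in the paper to compare against; your proposal already goes further than the paper does by sketching the argument rather than simply citing it. What you have written is essentially the mean-field derivation attributed to \cite{Cohen2000RandomBreakdown,Newman2003Handbook}, together with a correct acknowledgment that the rigorous coupling and uniqueness steps are deferred to \cite{molloy95critical,aiello00random}.
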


This theorem has been rigorously proved in
\cite{molloy95critical,aiello00random} and has been proved in the
mean-field approximation framework in
\cite{Cohen2000RandomBreakdown,Newman2003Handbook}. Detailing these
proofs is out of the scope of this paper.

This result can be applied to the three kinds of networks we consider
here (since their maximal degree is sublinear, as explained in
Section~\ref{secK}, page~\pageref{pag_K}),
which gives the following results\,\footnote{Refer
to Section~\ref{sec_mfa} for the conditions under which the previous
theorem is going to be applied.}.

\begin{lemme}
\label{lem_poisson_conn}
A Poisson network with size tending towards infinity
and average degree $z$ almost surely has a giant component
if and only if $z>1$.
\end{lemme}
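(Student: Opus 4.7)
The plan is to apply Theorem~\ref{thseuil}, which characterizes the existence of a giant component in a random network through the condition $\langle k^2 \rangle - 2\langle k \rangle > 0$ on the degree distribution. For a Poisson distribution this reduces to a trivial quadratic inequality in $z$, so the proof essentially amounts to three steps: verify that the hypotheses of Theorem~\ref{thseuil} apply, plug in the known moments of the Poisson distribution, and solve for the sign of the resulting expression.

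First I would verify the applicability of Theorem~\ref{thseuil}. The theorem requires that the maximal degree $K$ of the network satisfies $K < N^{1/4}$ almost surely. By Lemma~\ref{lem_K_er}, for a Poisson distribution with fixed average $z$ the expected maximum $K$ grows very slowly with $N$ (essentially logarithmically, since the tail decays faster than any polynomial because of the factorial), and in particular $K = o(N^{1/4})$. This was already noted in Section~\ref{secK} under the general sublinearity observation, so the technical hypothesis of the theorem is satisfied.

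Next I would compute the relevant quantity $\langle k^2 \rangle - 2\langle k \rangle$ using Lemma~\ref{lem_moments_er}, which states that $\langle k \rangle = z$ and $\langle k^2 \rangle = z^2 + z$ for a Poisson distribution with average $z$. Substituting directly gives
\[
\langle k^2 \rangle - 2 \langle k \rangle = z^2 + z - 2z = z^2 - z = z(z-1).
\]
Since we only consider $z>0$ (otherwise the network has no links at all), the factor $z$ is strictly positive, and therefore the sign of $z(z-1)$ is the sign of $z-1$. Thus $\langle k^2 \rangle - 2 \langle k \rangle > 0$ if and only if $z > 1$, and the lemma follows from Theorem~\ref{thseuil}.

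There is no real obstacle here: the entire content has been packaged into the two preceding results, and the lemma is a direct corollary. The only mildly subtle point is checking the sublinear growth hypothesis on $K$, which is why I would explicitly invoke Lemma~\ref{lem_K_er} and the observation made at the end of Section~\ref{secK} rather than leave it implicit.
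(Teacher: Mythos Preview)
Your proof is correct and follows exactly the same approach as the paper, which simply writes ``Direct application of Theorem~\ref{thseuil} and Lemma~\ref{lem_moments_er}.'' Your version merely spells out the substitution $\langle k^2\rangle - 2\langle k\rangle = z^2 - z$ and makes the sublinearity hypothesis on $K$ explicit, which the paper leaves implicit.
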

\begin{proof}{
Direct application of Theorem~\ref{thseuil} and Lemma~\ref{lem_moments_er}.
}\end{proof}

\begin{lemme}
\label{lem_cpl_conn}
A continuous power-law network with size tending towards infinity,
exponent $\alpha$ and minimal degree $m=1$ almost surely has a giant
component if and only if
$\alpha<4$.
\end{lemme}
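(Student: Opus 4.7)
The plan is to feed the two moment formulas of Lemma~\ref{lem_moments_sf} into the connectivity criterion of Theorem~\ref{thseuil}, which says that a giant component exists (in the appropriate mean-field sense) if and only if $\langle k^2\rangle - 2\langle k\rangle > 0$. Since the moment formulas take three different forms depending on whether $\alpha$ lies in $(1,2)$, $(2,3)$, or $(3,\infty)$, I will split the argument into these three ranges. Setting $m=1$ throughout (as stipulated by the lemma) will simplify everything.

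For $\alpha \in (2,3)$, Lemma~\ref{lem_moments_sf} says that $\langle k\rangle$ is finite while $\langle k^2\rangle$ diverges, so the quantity $\langle k^2\rangle - 2\langle k\rangle$ is trivially positive and the criterion is satisfied. For $\alpha \in (1,2)$, both moments diverge, but one must argue that $\langle k^2\rangle$ dominates $2\langle k\rangle$; the cleanest way is to use the finite-size formulas of Lemma~\ref{lem_moments_finis_sf}, where $\langle k\rangle \sim K^{-\alpha+2}$ and $\langle k^2\rangle \sim K^{-\alpha+3}$, so the ratio $\langle k^2\rangle/\langle k\rangle$ grows like $K$ as $N\to\infty$, making the criterion hold in the limit. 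In both of these sub-ranges a giant component therefore exists, which is consistent with $\alpha < 4$.

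The decisive case is $\alpha > 3$, where Lemma~\ref{lem_moments_sf} gives both moments in closed form:
\[
\langle k\rangle = \frac{\alpha-1}{\alpha-2}, \qquad \langle k^2\rangle = \frac{\alpha-1}{\alpha-3}.
\]
A direct computation yields
\[
\langle k^2\rangle - 2\langle k\rangle
= \frac{(\alpha-1)\bigl[(\alpha-2) - 2(\alpha-3)\bigr]}{(\alpha-2)(\alpha-3)}
= \frac{(\alpha-1)(4-\alpha)}{(\alpha-2)(\alpha-3)}.
\]
For $\alpha > 3$ the denominator and the factor $(\alpha-1)$ are positive, so the sign of the expression is that of $4-\alpha$. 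Hence the criterion is satisfied precisely when $\alpha < 4$ and fails for $\alpha > 4$, with equality at the boundary $\alpha = 4$ excluded from the "giant component" regime.

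Assembling the three cases shows that a giant component exists almost surely if and only if $\alpha < 4$, which is the claim. The only subtle point — not really a difficulty but worth mentioning — is that the ranges $\alpha \in (1,2)$ and $\alpha \in (2,3)$ require appealing to the divergence of $\langle k^2\rangle$ and the finite-size approximation of Lemma~\ref{lem_moments_finis_sf} respectively, since Lemma~\ref{lem_moments_sf} returns only the uninformative statement that the moments diverge; the quantitative comparison in these ranges is where one must be slightly careful, although the conclusion is clear given the growth rates dictated by Lemma~\ref{lem_K_sf}.
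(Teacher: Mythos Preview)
Your proof is correct and follows exactly the approach of the paper, which simply states ``Direct application of Theorem~\ref{thseuil} and Lemma~\ref{lem_moments_sf}.'' You have merely expanded this one-line justification into the explicit case analysis and computation it entails; the decisive $\alpha>3$ calculation and the handling of the divergent-moment ranges are precisely what the paper leaves implicit.
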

\begin{proof}{
Direct application of Theorem~\ref{thseuil} and Lemma~\ref{lem_moments_sf}.
}\end{proof}

\begin{lemme}
\label{lem_dpl_conn}
A discrete power-law network with  size tending towards infinity
and exponent $\alpha$  almost surely has a giant component
if and only if $\alpha$ is such that
$\frac{\zeta(\alpha-2)}{\zeta(\alpha-1)} > 2$.
\end{lemme}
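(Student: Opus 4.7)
The plan is to mimic exactly the strategy used for Lemma~\ref{lem_poisson_conn} and Lemma~\ref{lem_cpl_conn}: combine the Molloy--Reed criterion of Theorem~\ref{thseuil} with the explicit moment formulas for the discrete power-law distribution provided by Lemma~\ref{lem_moments_vsf}, and then perform a one-line algebraic rearrangement to recover the announced form.

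Concretely, I would first invoke Theorem~\ref{thseuil}, which says that in the limit $N \to \infty$ the random network has a giant component if and only if $\moy{k^2} - 2 \moy{k} > 0$. Before substituting, I would add a short sentence checking that the hypothesis $K < N^{1/4}$ is met in the regimes of $\alpha$ we care about, appealing to the sublinearity discussion on page~\pageref{pag_K} (for a discrete power-law with $m = 1$ and $\alpha > 2$, the expected maximum $K$ grows sufficiently slowly). Then I would plug in $\moy{k} = \zeta(\alpha-1)/\zeta(\alpha)$ and $\moy{k^2} = \zeta(\alpha-2)/\zeta(\alpha)$ from Lemma~\ref{lem_moments_vsf}, giving
$$
\frac{\zeta(\alpha-2)}{\zeta(\alpha)} - 2\,\frac{\zeta(\alpha-1)}{\zeta(\alpha)} > 0 .
$$
Since $\zeta(\alpha)$ and $\zeta(\alpha-1)$ are strictly positive whenever they are finite, I would multiply through by $\zeta(\alpha)/\zeta(\alpha-1)$ to obtain the equivalent inequality $\zeta(\alpha-2)/\zeta(\alpha-1) > 2$, which is exactly the statement of the lemma.

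The only genuine subtlety, rather than a difficulty, is the interpretation of the criterion when the moments of the unbounded distribution are not all finite, i.e.\ when $\alpha \le 3$ and $\zeta(\alpha-2)$ diverges. In that regime the criterion $\moy{k^2} - 2 \moy{k} > 0$ is satisfied trivially (the second moment dominates), and symmetrically $\zeta(\alpha-2)/\zeta(\alpha-1) > 2$ holds in the limit as $\zeta(\alpha-2) \to +\infty$; so the stated equivalence continues to make sense. I would handle this by stating the argument in the regime $\alpha > 3$ where both sides are finite, and noting in a single sentence that for smaller exponents both the analytical criterion and the inequality of the lemma are satisfied, so the existence of a giant component is automatic. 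This is the same implicit convention that was used in Lemma~\ref{lem_cpl_conn}, and there is no new obstacle to overcome beyond this bookkeeping.
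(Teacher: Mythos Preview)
Your proposal is correct and follows exactly the same approach as the paper, which simply states ``Direct application of Theorem~\ref{thseuil} and Lemma~\ref{lem_moments_vsf}.'' Your additional remarks on the $K < N^{1/4}$ hypothesis and the divergent-moment regime are reasonable elaborations, but the paper treats these as already absorbed in the mean-field framework and does not spell them out.
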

\begin{proof}{
Direct application of Theorem~\ref{thseuil} and Lemma~\ref{lem_moments_vsf}.
}\end{proof}

One may compute the numerical value from this last lemma. One then
obtains the condition $\alpha < 3.48$ for
discrete power-law networks.
In summary, the criterion of Theorem~\ref{thseuil} gives very simple
conditions under which the random networks we consider have a giant component.

\subsection{Mean-field framework and generating functions.}
\label{sec_mfa}

As already emphasized at the beginning of Section~\ref{sec_prelim},
most results in this paper are made using {\em approximations},
valid in the mean-field framework. Most of these approximations are
classical and very simple,
like for instance neglecting small values when compared to large ones, but some are specific to random networks and
deserve more attention. We detail them below. We then present the
generating function framework, which makes it possible to embed these
approximations in a powerful formalism. We finally recall some results
on generating functions which  will be useful in the rest of the
paper.

\subsubsection*{Mean-field approximations in random networks}

The fact that stubs are linked fully at random in a random
network is a feature which has important consequences in our
context. In particular, when one removes a link chosen at
random in such a network, this is equivalent to the removal of
two stubs at random, and so the obtained network is still random (with
a different degree distribution in general). Likewise, when one
removes a node, the obtained network is also random. These simple remarks will be essential in the following.

Mean-field approximations are very helpful in the study of random
networks since they allow to neglect some correlations which would
otherwise be very hard to handle.

Consider for instance the neighbors of a given node,
which we will call {\em source} node, in a large random
network. Suppose that the network is sparse (the probability for
two randomly chosen nodes to be linked together is almost $0$)
and that its maximal degree is small compared to its size, which
will always be true in our context. Then the probability that two
of these neighbors are directly linked together is negligible.
Likewise, if we take
all the nodes at distance $2$ of the source node then the probability
of having a link between two of them is very small and may also be
neglected. So does the probability to have a link from a node at
distance $2$ to more than one node at distance $1$, or to the
source. Continuing this reasoning, the network may be considered
locally as a tree: any subnetwork composed of the nodes at a distance
lower than a given finite value is a tree if the size of the network
tends towards infinity.

The approximation above relies on the fact that we neglect very
small probabilities, or equivalently that we consider the limit
where the size of the network tends towards infinity.

In the same manner, it is known that any random network with a
maximal degree lower than $\sqrt{\moy{k}N}$ almost surely has no
loops or multiple links~\cite{Chung2002Connected,Burda2003Uncorrelated}.
Likewise, Theorem~\ref{thseuil} is formally true only for
networks with maximal degree less than $N^{1/4}$. For both cases the conditions might not be true for all the networks under concern, however we will consider that the networks do not possess loops and multiple
links and that Theorem~\ref{thseuil} can be applied.

The mean-field framework allows another important approximation which comes from the fact that there is no distinction between choosing a stub at random and following a link at random from a random starting node.
Indeed, since links are formed by pairs of randomly chosen
stubs, it makes in principle no difference.

One consequence is that we suppose that there is no correlation between
the degree of a node and the degrees of its neighbors, \ie\ that the
random starting node we choose has no impact on the neighboring node we
will reach. This is indeed true when the maximal degree is
below $N^{\min(1/2, 1/(\alpha-1))}$~\cite{Burda2003Uncorrelated},
but not if the maximal degree is larger. We will neglect the possible correlations, which is classical in the mean-field approach, even if the
above condition is not fulfilled.

This approximation may be used to describe the degree distribution
of neighbors of nodes, in other words the degree of a node reached
by starting from a
randomly chosen node and following one of its links chosen at random.
According to the mean-field approximation above, this is equivalent to
choosing a random stub and therefore the probability that a random
stub belongs to a given node is proportional to this node's degree,
\ie{} the probability of reaching a
node of degree $k$ is proportional to $k\ p_k$.
The sum of these probabilities must be equal to $1$,
we therefore obtain the following
probability:
$\frac{k\ p_k}{\somme{j=0}{\infty} jp_j} = \frac{k\ p_k}{\moy{k}}$.

We can derive from this the probability $q_k$ that a neighbor of a node
has $k$ {\em other} neighbors, which will be useful in the sequel.
It is nothing but the probability that a node obtained by following
a link has $k+1$ neighbors, and so:
\begin{equation}
\label{eqqk}
q_k = \frac{(k+1)p_{k+1}}{\moy{k}}.
\end{equation}

\subsubsection*{Basics on generating functions}

Generating functions, also called formal power series, are
powerful formal objects widely used in mathematics, computer science
and physics. They encode series of numbers $(s_k)_{k\ge 0}$ as
functions $f(x) = \somme{k=0}{\infty} s_k x^k$. Operations on the
series of numbers then correspond to operations on the associated
functions, which often are much more powerful.
See~\cite{Wilf1994Functionology} for a general introduction.

The application of generating functions  to the random network context
is presented in details in~\cite{Newman2001RandomGraphs}. Using them
to encode series of  probabilities (like for instance degree
distributions), the authors show how mean-field approximations may
be embedded with benefit in this formalism. Once this is done, it is
possible to manipulate the associated notions efficiently and easily.
We give an overview of  this approach below, and we refer
to~\cite{Newman2001RandomGraphs} for a detailed and didactic
presentation with illustrations.
We follow the notations in this reference, and we will use them all
along the paper.

Let us begin by encoding the degree distribution $p_k$ by the
following generating function:
\begin{equation}
\label{eqg0}
G_0(x) = \somme{k=0}{\infty} p_k x^k.
\end{equation}
This function is an encoding of the whole distribution since one
may obtain $p_k$ by differentiating it $k$ times, then evaluate it
at $x=0$ and divide the result by $k!$:
$p_k=G_0^{(k)}(0)/k!$.
Moreover, we have $G_0(1)= \somme{k=0}{\infty} p_k = 1$
(this is true for all generating functions encoding distributions of probabilities),
and the average is given by
$\moy{k} = \somme{k=1}{\infty} k\ p_k = G_0'(1)$.

Going further, let us consider the generating function $G_1$ for the
number of other neighbors of a node chosen by following one random link
of a randomly chosen node. This number is distributed according
to $q_k$, defined in Equation~\ref{eqqk}. We then have
\begin{equation} \label{eqg1}
G_1(x) = \somme{k=0}{\infty} q_k x^k
= \frac{\somme{k=0}{\infty}(k+1)p_{k+1}x^k}{\moy{k}}
= \frac{\somme{k=1}{\infty} kp_k x^{k-1}}{\moy{k}}
= \frac{G'_0(x)}{\moy{k}}.
\end{equation}

This generating function will be useful in the sequel. For more
details on how  to use generating functions in the context of random
networks, see~\cite{Newman2001RandomGraphs}.

We give now a few results on generating functions which will play an
important role. These results are rewritings of results
in~\cite{Newman2000Robustness,Newman2003Handbook}. They aim at
expressing the existence of a giant component in terms of generating
functions.

Let us consider a random network 
with degree distribution $p_k$ encoded in $G_0$. Let us suppose
that some of its nodes (resp. links, \ie\ pairs of stubs) are
marked. All marked nodes are to be removed, we are therefore interested
in components composed of unmarked nodes, \ie\ sets of unmarked nodes
such that there exists a path composed only of unmarked nodes (resp.
links) between any two of them. We will call such
sets of nodes {\em clusters} and we are interested in the existence of a giant such cluster.

Let us consider a node reached by following a random link, \ie\ a node
obtained by picking a random stub. We will first compute the number of
unmarked nodes that can be reached from this node by following links
between unmarked nodes (resp. unmarked links) only.

Two cases may occur: either the chosen node (resp.
stub) is marked, in which case the cluster is of size $0$, or it is
unmarked. 
Let us denote by $r_k$ the probability that it is unmarked and has $k$ other stubs, 
and by $F_1(x)$ the corresponding generating function:
$F_1(x) = \sum_{k=0}^{\infty} r_k x^k$.
Note that the case
where the chosen node (resp. stub) is marked plays no role in $F_1(x)$.
Note also that $F_1(1)$ is the fraction of unmarked nodes (resp. links)
in the network.

When the size of the network tends towards infinity, the clusters have
a limit distribution of sizes. We will call {\em finite} clusters the
ones with a finite size in this limit distribution, while we call {\em
infinite} clusters the others. We denote by $H_1(x)$ the generating
function for the distribution of the size of such {\em finite}
clusters. Notice that $H_1(x)$ does not take into account infinite
clusters, if they exist.

\begin{lemme}
\label{lemh1prelim}
\cite{Newman2000Robustness,Newman2003Handbook}
The generating function $H_1(x)$ satisfies the following
self-consistency condition:
$$H_1(x) = 1 - F_1(1) + x F_1(H_1(x)).$$
\end{lemme}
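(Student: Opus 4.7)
The plan is to derive the functional equation by conditioning on what happens at the first node reached through the random stub, and then exploiting the tree-like local structure of random networks (as justified earlier in Section~\ref{sec_mfa}) to express the remainder of the cluster as a sum of independent sub-clusters, each again distributed according to $H_1$ itself. The whole calculation lives at the level of generating functions, so the arithmetic is routine once the combinatorial picture is set up correctly.

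First I would separate the two mutually exclusive events at the starting stub. With probability $1 - F_1(1)$ the chosen stub is marked (or belongs to a marked node), in which case the corresponding cluster is empty; this contributes the constant term $1 - F_1(1)$ to $H_1(x)$. With probability $r_k$ the chosen stub reaches an unmarked node that carries $k$ \emph{other} stubs. Conditioned on this event, the cluster contains that node (a factor $x$) together with the $k$ sub-clusters reached through each of the other stubs.

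The key approximation step is that these $k$ sub-clusters are treated as independent and identically distributed. This is legitimate in the mean-field framework because the local neighbourhood of a random node in a sparse random network is almost surely a tree when $N \to \infty$, so the $k$ branches cannot meet each other or loop back to the source. Moreover, each such branch is again initiated by a random stub of a random node, and therefore its size is generated by $H_1(x)$ itself. Under these hypotheses, the size distribution of the union of the $k$ branches is generated by $[H_1(x)]^k$, and multiplying by $x$ for the unmarked root node gives a contribution $x\,[H_1(x)]^k$ weighted by $r_k$.

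Putting the two cases together,
\begin{equation*}
H_1(x) \;=\; (1 - F_1(1)) \;+\; \sum_{k=0}^{\infty} r_k \, x \, [H_1(x)]^k \;=\; 1 - F_1(1) + x F_1(H_1(x)),
\end{equation*}
which is the claimed self-consistency equation. The main obstacle is conceptual rather than computational: it is the justification that the $k$ sub-clusters are i.i.d.\ copies generated by the \emph{same} function $H_1$, and that finite clusters contribute fully while infinite ones are automatically excluded because $H_1$ was defined only on the finite part of the size distribution. Both points rest on the tree approximation and the absence of degree-degree correlations discussed in Section~\ref{sec_mfa}, so once those are invoked the equation follows directly.
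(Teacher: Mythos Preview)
Your argument is correct and follows essentially the same route as the paper: condition on whether the first stub/node is marked (contributing $1-F_1(1)$), and if unmarked with $k$ other stubs, use the locally tree-like structure to treat the $k$ sub-clusters as i.i.d.\ copies generated by $H_1$, yielding $x\sum_k r_k H_1(x)^k = xF_1(H_1(x))$. The paper's proof is slightly terser but the decomposition and the key approximation (neglecting cycles in finite clusters via the mean-field framework of Section~\ref{sec_mfa}) are identical.
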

\begin{proof}{
The cluster is of size $0$ if the chosen node (resp. stub) is marked,
which happens with probability $1-F_1(1)$ since $F_1(1)$ is the
fraction of unmarked nodes (resp. links).

In the other case, let us denote by $r_k$ the probability that the
initial node has $k$ other links, \ie\
$F_1(x) = \sum_{k=0}^{\infty} r_k x^k$.
Since we consider networks whose size tends towards infinity,
according to the mean-field framework we can neglect cycles (\ie{}
multiple paths between two nodes) in finite clusters. Then, the size
of the cluster is $1$ plus the sum of the sizes of the clusters at the
end of these $k$ links. The distribution for the sum of the sizes of
$k$ independent clusters is given by $H_1^k(x)$, see
\cite{Newman2001RandomGraphs}. Moreover, the distribution of $1$ plus
a value is obtained by multiplying the corresponding generating
function of this value by $x$. We therefore obtain $H_1(x) = 1 -
F_1(1) + x \sum_{k=0}^{\infty} r_k H_1^k(x) = 1 - F_1(1) + xF_1(H_1(x))$.
}\end{proof}

\begin{theoreme}
\cite{Newman2000Robustness,Newman2003Handbook}
\label{th-seuil-newman}
If $\tau$ is the fraction of marked nodes (resp. links)
such that removing all these marked nodes (resp. links)
gives a network with no giant component,
then $\tau$ is such that $F'_1(1)=1$.
\end{theoreme}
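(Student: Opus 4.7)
My plan is to extract the phase-transition condition directly from the self-consistency equation of Lemma~\ref{lemh1prelim}. The first step is to interpret $H_1(1)$. Since $H_1$ generates the size distribution of finite clusters reached by following a random link (including the degenerate size-$0$ contribution when the endpoint is marked), $H_1(1)$ is the total probability that this link leads to a finite cluster. As long as no giant component of unmarked nodes exists, every cluster is finite and $H_1(1) = 1$; plugging $x=1$ into $H_1(x) = 1 - F_1(1) + x F_1(H_1(x))$ confirms that $H_1(1)=1$ is indeed a fixed point, since $1 - F_1(1) + F_1(1) = 1$.

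The second step is to differentiate the self-consistency equation implicitly and evaluate at $x=1$, using $H_1(1)=1$, to solve for $H_1'(1)$. This produces an expression of the form $H_1'(1) = F_1(1)/(1 - F_1'(1))$. Since $H_1'(1)$ is the expected size of the finite cluster reached from a random link, its divergence signals that probability mass escapes from the finite clusters, i.e.\ that an infinite (giant) cluster has appeared. The divergence happens precisely at $F_1'(1)=1$, which is the announced threshold.

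The main obstacle will be justifying rigorously that the divergence of the mean finite cluster size really does coincide with the emergence of the giant component, rather than with some other analytic feature of the self-consistency equation. The cleanest argument is a bifurcation analysis: below threshold, $H_1(1)=1$ is the unique stable solution of $H = 1 - F_1(1) + F_1(H)$; at $F_1'(1)=1$ the slope of the right-hand side at $H=1$ crosses $1$, a new stable solution $H_1(1)<1$ splits off, and its complement $1 - H_1(1)$ represents the probability mass carried by the giant component. Within the mean-field framework adopted throughout this paper, however, it suffices to identify the singularity of $H_1'(1)$ with the percolation threshold, which is the standard move in the generating-function approach of~\cite{Newman2001RandomGraphs}. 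As a sanity check I would verify that, when no node or link is marked, $F_1$ coincides with $G_1$ from Equation~\ref{eqg1}, so $F_1'(1) = G_0''(1)/\moy{k} = (\moy{k^2}-\moy{k})/\moy{k}$, and the condition $F_1'(1)=1$ reduces to Molloy and Reed's criterion $\moy{k^2} - 2\moy{k} = 0$ of Theorem~\ref{thseuil}, as expected.
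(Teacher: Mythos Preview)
Your proposal is correct and follows essentially the same route as the paper: differentiate the self-consistency equation of Lemma~\ref{lemh1prelim}, evaluate at $x=1$ using $H_1(1)=1$ in the subcritical regime, obtain $H_1'(1)=F_1(1)/(1-F_1'(1))$, and identify the threshold with the divergence at $F_1'(1)=1$. The paper handles your ``main obstacle'' by explicitly stating (just before the proof) the mean-field approximation that the average component size is finite if and only if there is no giant component; your bifurcation remark and the Molloy--Reed sanity check are nice additions but go slightly beyond what the paper provides.
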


Before proving this result, we need a new approximation, made
implicitly
in~\cite{Newman2000Robustness,Newman2003Handbook}. It consists of
assuming that the average size of components in a random network
is finite if and only if there is no giant component. This is an approximation since one can construct graphs such that all the components
are of infinite but sub-linear size (thus there is no giant component),
in which case the average is infinite. Conversely, there may be a giant
component but a finite average size\,\footnote{Computing the distribution
of component sizes is a difficult task~\cite{molloy95critical,Newman2000Robustness}.}.
This approximation is however
necessary for the following proof of Theorem~\ref{th-seuil-newman}.
\medskip\\

\begin{proof}{
Suppose we marked enough nodes (resp. links) to ensure that there is
no giant cluster, or equivalently that there is no giant component in
the network where marked nodes have been removed. According to the
assumption above, the average size of components is finite (which does
not mean that there is no infinite component) and is given by
$H'_1(1)$. From Lemma~\ref{lemh1prelim},
$H'_1(x) = F_1(H_1(x)) + x F'_1(H_1(x)) H'_1(x)$,
and since $H_1(1) = 1$,
we obtain $$H'_1(1) = \frac{F_1(1)}{1-F'_1(1)}.$$

If there is still a giant component in the network the above
calculations do not hold since $H_1(1)$ is no longer equal to $1$.
The calculations are valid only for fractions of removed nodes
(resp. links) in the interval $]\tau,1]$ for a given $\tau$ which
is the fraction of marked (thus removed) nodes (resp. links) above
which there is no giant component anymore.

Notice now that the expression above for $H'_1(1)$ diverges at
the point $F'_1(1) = 1$, which defines $\tau$. If we choose to remove a
fraction of nodes (resp. links) closer and closer to $\tau$, but still
larger than it, the size of remaining components grows. It keeps
growing until the point where the fraction of removed nodes (resp.
links) is not large enough to destroy the giant component. At this
point, the average size of finite components tends towards infinity.
}\end{proof}

The  result we have just described is very  powerful and general. We
will see that it can be applied to many cases and give simple results
with direct proofs: to compute the fraction of nodes (resp. links) to
remove from a network in order to ensure that the resulting network
contains no giant component, it is sufficient to give an expression
for $F_1(x)$ and then to determine the fraction which leads to
$F'_1(1)=1$.

One must however keep in mind that they rely on mean-field
approximations, and that the formalism sometimes makes it
difficult to see exactly  when approximations are performed.

\medskip

We insist on the fact that, in the current state of our knowledge,
the above-mentioned approximations are necessary to derive the results we
seek. It is important however to pursue the development of exact methods
in order to validate these results and deepen our understanding. It is
important, too, to know  exactly the approximations we make and when
we make them. We will carefully point out the uses of these
approximations in the whole paper.

\subsection{Plots and thresholds.}
\label{sec_plots}

In all plots of this paper, {\em Poisson}, {\em C PL} and {\em D PL}
stand for Poisson networks, continuous power-law networks,
and discrete power-law networks, respectively.

The first main kind of plots we will consider in the sequel
represents the fraction of nodes in the largest connected component of
a network as a function of the fraction of removed nodes or links.
Figure~\ref{fig_ps_as_cc} provides an example.
To produce these plots,
we sampled a large number of networks (typically
$1\,000$) on which we repeated the experiment, and then plotted the
average behavior.
In order to be able to compare the various kinds of networks, we
selected two typical exponents for the power-law, namely $2.5$ and $3$,
produced continuous and discrete power-law networks with these exponents,
as well as Poisson networks with the same average degrees.
These values are summarized in
Table~\ref{tab_cases}.
The figures of this kind are Figures~\ref{fig_ps_cc},
\ref{fig_psa_cc}, \ref{fig_pa_cc}, \ref{fig_as_cc}, \ref{fig_asa_cc},
\ref{fig_nas_cc} and \ref{fig_naa_cc}.

\begin{table}[h!]
\centering
\begin{tabular}{|c|c|c|}
\hline
         & \multicolumn{2}{|c|}{average degree}\\
\hline
exponent & continuous power-law & discrete power-law \\
\hline
$2.5$ & $2.6$ & $1.9$\\
\hline
$3$   & $1.6$ & $1.4$\\
\hline
\end{tabular}
\caption{
The exponents we consider in our experiments on power-law networks,
and the average degrees they induce
(obtained in practice with minimal value $m=1$ and $N=100\,000$ nodes;
they are slightly lower than previsions from Lemma~\ref{lem_moments_finis_sf}
for continuous power-laws, but fit very well the previsions from
Lemma~\ref{lem_moments_finis_vsf} for discrete power-laws).}
\label{tab_cases}
\end{table}

In our context, it is usual to witness a {\em threshold} phenomenon
(typical of statistical mechanics and more precisely percolation
theory, see for instance \cite{Stauffer1994}): there exists a critical
value $p_c$ such that, whenever the fraction of removed nodes (or links,
depending on the context) is lower than $p_c$, the network almost surely
still has a giant component, whereas whenever the fraction of removed nodes (or links) is greater than $p_c$ the network almost surely does not have a giant component anymore. In other words, the threshold is reached
when the fraction of nodes in the largest connected component goes to zero (there is no giant component anymore). These thresholds play a central
role in the phenomenon under consideration and will be often studied in
the sequel.

Notice that, for a given finite size network, the notion of threshold
does not make sense: the fraction of nodes in the largest connected
component will never be zero. In this case, there are several ways to
define a threshold. One may notice that, when we reach the threshold,
the slope of the plot of the fraction of nodes in the largest connected
component in function of the number of removed nodes goes to infinity.
In finite-size computation, we may therefore consider that we reach the
threshold when this slope is maximal~\cite{Newman2000Robustness}.
Notice that this does not always make sense: it may happen, like in
Figure~\ref{fig_ps_cc} (right), that the slope is maximal at $0$ (while
the expected value of the threshold is closer to $1$). One can then adopt the convention that in such cases there is no threshold, but this reduces
our ability to discuss practical cases.

Another solution, described in~\cite{Paul2005Resilience},
consists of computing the degree distribution of the network
after each removal of a node or a link, and see if it satisfies the
criterion of Theorem~\ref{thseuil} for it to have a giant component.
The threshold is then the fraction of nodes or links to remove
so that the network does not satisfy this criterion anymore.

%

The solution we choose is to consider that the threshold is reached when
the largest connected component contains less than a given (small)
fraction of all the nodes. We chose a fraction which makes both
definitions quite equivalent in our cases, namely $0.05$. In other words,
we consider that a network does  not have a giant component whenever the
size of its largest connected component is less than $5\,\%$ of the
whole. Notice that changing this value may have a impact on numerical
results. However, similar observations would be made.

This leads us to the second main kind of plots encountered in this
paper.
For a given node or link removal strategy,
these plots represent the threshold as a function
of the main character of each kind of networks: the average degree for
Poisson networks, and the exponent of the power-law for
power-law networks.
We plot experimental results obtained
by averaging results on large number of networks (typically $1\,000$),
for different sizes of networks (typically $1\,000$,
$10\,000$ and $100\,000$).
To help in the comparison between different kinds of networks,
we add on these plots vertical lines at the values quoted in
Table~\ref{tab_cases}.
We also plot the theoretical predictions we obtain,
together with experimental results, to make it
possible to compare them.
The figures of this kind are Figures~\ref{fig_ps_seuil},
\ref{fig_psa_seuil}, \ref{fig_pa_seuil}, \ref{fig_as_seuil},
\ref{fig_asa_seuil}, \ref{fig_nas_seuil} and \ref{fig_naa_seuil}.

We will see that the experimental results do not always fit analytic
predictions very closely. This is influenced in part by the choice to
consider that a giant component must contain at least $5\,\%$ of the
nodes, as explained above.
But other factors impact this. In the case of random
failures, for instance, there is a significant difference between the
infinite limit and the finite case, even for large sizes.
This is why we
present results for both finite cases and the infinite limit when possible.
This makes it possible to observe the error
due to the asymptotic approximation. More generally, the difference
between predictions and numerical values  are due to the approximations
made in the derivations of the analytic results.

For Poisson networks, for instance, we are faced with the same problem
as the one concerning the evaluation of the maximal degree of finite
networks, see Section~\ref{secK} and Figure~\ref{fig_K_100000}:
since some parameters can only take integer values, their analytic
evaluation may lead to values quite different from their true values.
Therefore, we have chosen to use only the analytic values of the
threshold for which the error due to this effect is minimal.

Notice finally that the plot for a particular instance may vary
significantly from the average behavior, in particular for power-law
and/or small networks. We do not enter in these considerations here.


\subsection{Towards a more realistic modeling}
\label{sec_prelim_modeling}

Modeling large networks is a complex task and many parameters have to
be taken into account.
The modeling approach we use in this paper relies on {\em random sampling}:
given a set of properties of a real network that one wants to reproduce,
the goal is to choose with uniform probability a graph among the set of all
graphs having these properties.
This approach has the advantage of allowing to study precisely the impact of
a given property: if a certain behavior is observed on graphs obtained with
such a model, then we can conclude that this behavior is a consequence of
the studied property.
This type of model is also well-suited for exact proofs.
However, it also suffers from limitations:
some properties cannot currently be reproduced by this type of models.
It is for instance currently impossible to sample a graph uniformly at random
among all graphs having a given clustering coefficient,
or even a given number of triangles~\cite{Newman2003Clustering,Guillaume2004BipartiIPL,Guillaume2004BipartiCAAN}.
In summary, this type of model is well-adapted for understanding the impact of some properties 
and for formal
proofs,
but is not currently able to reproduce all properties of real-world networks.

In this paper, we focus on the degree distribution of graphs,
and use models producing random graphs with given degree distributions.
However, the sole degree distribution
cannot reproduce the complexity of networks
(for instance it is  possible to
produce completely different networks having the exact same degree
distribution).
It has also been shown that when the sole
degree distribution is taken into account, high degree nodes tend to be
connected to each other, which might not be
realistic~\cite{Doyle05robust}.

Efforts have therefore been made to study other properties
taking into account the tendency of nodes to be linked to nodes of the same degree or not,
 and incorporate
them into random models.
For instance, this can be captured by
the assortativity
parameter~\cite{Newman2002AssortativeMixing},
degree correlations
which are the probabilities $P(d|d')$ that a node of degree $d$ is linked to
another node of degree $d'$~\cite{Boguna2003EpidemiCorrelations,Vazquez2003ResilienceCorrelations},
or by the $s(g)$ parameter
which is the sum of
$d_id_j$ for all links $(i,j)$, normalized by the larger value
obtainable on a graph with the same degree sequence~\cite{Doyle05robust,Li06towards}.



Some authors have studied network resilience to failures and attacks,
in a similar way as what we present in this paper,
on random networks with degree correlations.
In \cite{Vazquez2003ResilienceCorrelations}, the authors study the
resilience of assortative networks (nodes are linked to similar nodes)
and disassortative networks to failures. The resilience depends on the
second moment of the degree distribution and the correlations:
assortative networks are very resilient even when the second
moment of the degree distribution is finite, while disassortative networks can
be fragile when the  second moment is divergent.

\medskip


Another, and orthogonal, modeling approach consists of
taking into account properties that play a role in the construction
of a network.
For instance, in the case of the internet,
one can consider the way routers work,
or other technological or economic constraints.
One then iterates an evolution process which respects a set of properties and
produces in the end a graph similar to the original network.

This type of approach has the advantage of being able to take into
account many properties that cannot be considered in random modeling,
for instance clustering~\cite{watts98collective,kleinberg99smallworld,dorogovtsev00structure,dorogovtsev02evolution,Klemm2002Clustered,Holme2002GrowingClustering}.
This makes it relevant for producing graphs for simulation purposes.
However, the evolution rules in these models create graph structures
that are hard to characterize, and in the end the properties of the
obtained graphs are not always fully understood.
A simple example of this is the preferential attachment model~\cite{barabasi99emergence},
which produces trees (if each new node creates a single link)
or graphs with no nodes of degree one (if new nodes create more than one link).
These models therefore allow to study the impact of the construction rule
on the observed behaviors,
but do not allow yet to study the impact of some topological properties of the networks.
Also, in the case of optimization models, it is not always easy
to find interesting parameters to optimize. Indeed performance related measures are quite
natural in the context of computer networks but can be harder to find
in social networks for instance.
In summary, this approach is relevant for taking into account complex network
properties, and for simulation purposes, but does not allow to study precisely
the impact of global topological properties.
Therefore, these two types of approaches are complementary.


The HOT framework,  {\em Heuristically Optimal Topology} or {\em Highly Optimized/Organized
Tolerance/Tradeoffs},  belongs to this approach. This framework has been mainly used in the context of the
internet~\cite{Doyle05robust,Li04first}.
The authors of these papers propose to rewire a network with a given
degree distribution in a non random fashion which preserves the degree
sequence. The nonrandomness consists of rewiring with the aim of
optimizing some properties of the networks to mimic real properties,
for instance technological or  economical ones. This kind of optimization can also be
found in the context of biological systems~\cite{Doyle08reverse}.


\medskip
Concerning network robustness, we only consider here the size of the largest
component, which describes the ability of nodes to communicate,
 as an indicator of the state of the network after failures or attacks.
However other
approaches have been introduced.

In~\cite{Latora2001Efficient,Holme2002Vulnerability} the authors use
the efficiency, also called average inverse geodesic length, which is
computed as $1/(N(N-1))\sum_{i,j}1/d_{ij}$, where $d_{ij}$ is the
distance between $i$ and $j$. An high efficiency means that pairs of
nodes are on average close to each other. This is very similar to the
average distance but allows to consider disconnected networks, which
is pertinent in our context.  In a similar fashion, the authors
of~\cite{park03static} introduce the Diameter-Inverse-K (DIK) measure
defined as $d/K$, where $d$ is the average distance between pairs of
connected nodes, and $K$ is the fraction of pairs of nodes which are
connected. It allows to take into account disconnected graphs and for
connected graphs allows to distinguish between short or large average
distance.  In~\cite{Crucitti2003EfficiencyScale-Free}, the authors
consider that using shortest paths can put a higher load on some nodes
which can increase when failures or attacks occur. In
consequence, each node is associated with a given capacity that cannot
be exceeded without a loss of efficiency of the node, which forces the use of longer paths.
This concept
 has mainly been used for cascading failures and attacks
but could be used as a more evolved definition of efficiency.

More specific measures have also been introduced. In particular, in
the case of the internet, nodes and links are used to carry some
demand and the efficiency of the network can be measured as its
capacity at carrying it. In~\cite{Li04first,Doyle05robust}, the
authors define the maximal throughput based on the bandwidth of nodes,
a routing matrix and the traffic demand for all nodes. In the case of
the internet, high degree nodes are at the periphery of the network
and the fragility of the network lies more on the low degree core
nodes than on the high degrees periphery nodes.

Some attack strategies based on
these measures of importance have also been introduced.
For instance, attack strategies that focus on important nodes rather than
on large degree nodes.
In~\cite{Holme2002Vulnerability} for instance, the authors
compare the removal of highest degree nodes with the removal of
highest betweenness nodes. Furthermore, the order of removal can be
either chosen before any removal occurs, or recalculated after each
removal.

Finally, more complex types of failures and attacks,
like cascading failures, have been considered in the literature,
for instance in~\cite{crucitti04error,lee05robustness,crucitti04model,newth04evolving,motter02cascade,Motter2004Cascade,Pertet2005Cascading,Zhao2004Cascading}.

\medskip

We presented here a quick overview of the different types of graph modeling
that exist in the literature, together with a brief discussion on their
advantages and drawbacks.
We also presented other attack or failures strategies, as
well as other definitions of network resilience.
Studying all this in detail is however out of the scope of this paper.



\section{Resilience to random failures.}
\label{sec_p}

The aim of this section is to study the resilience of random networks
to random failures.
Recall that random node (resp. link) failures consist of the
removal of randomly chosen nodes (resp. links).

We will first consider random node failures
(Section~\ref{sec_ps}) on general random networks, and then
apply the obtained results to Poisson and power-law networks.
We will see that the empirical observations cited in the preliminaries
concerning the different behaviors of Poisson and power-law networks
are formally confirmed.
In order to deepen our understanding of random node failures, we will
consider in Section~\ref{sec_psa} these failures from the {\em link}
point of view: what fractions of the {\em links} are removed during
random node failures?
Finally, we will consider random {\em link} failures
(Section~\ref{sec_pa}).

\subsection{Random node failures.}
\label{sec_ps}

In this section, we first present a general result on random node failures,
independent of the type of underlying network, as long as it is a {\em
random} network.
We detail the two main proofs proposed for this result
\cite{Newman2000Robustness,Newman2003Handbook,Cohen2000RandomBreakdown,Cohen2003Handbook}.
We then apply this general result to the special cases under concern:
Poisson and power-law (both discrete and continuous) networks.
Figure~\ref{fig_ps_cc} displays the behaviors observed for these three
types of networks.

\begin{figure}[!h]
\begin{center}
\includetroisres{10000_ps_cc.eps}
\end{center}
\caption{
\label{fig_ps_cc}
Size of the largest connected component as a function of the
fraction of randomly removed nodes.
\lefttoright
\refplotscc
}
\end{figure}

As explained in the preliminaries, there is a fundamental difference
between Poisson and power-law networks: in the Poisson case the giant
component is destroyed when a fraction of the nodes significantly lower
than $1$ has been
removed, whereas in the power-law cases one needs to remove almost all
nodes. The aim of this section is to formally confirm this, and give
both formal and intuitive explanations of this phenomenon.

\subsubsection{General results}
\label{sec_ps_gl}

Our aim here is to prove the following general result,
which gives the value of the threshold for random node failures.

\begin{theoreme} \cite{Newman2000Robustness,Newman2003Handbook,Cohen2000RandomBreakdown,Cohen2003Handbook}
\label{th_ps_general}
The threshold $p_c$ for random node failures in large random networks
with degree distribution $p_k$
is given by
$$
p_c = 1 - \frac{\moy{k}}{\moy{k^2} - \moy{k}}.
$$
\end{theoreme}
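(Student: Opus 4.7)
My plan is to apply Theorem~\ref{th-seuil-newman} in its node-marking form, specializing the marking mechanism to uniform random removal. Since each node is independently removed (marked) with probability $p$, the mark is independent of the node's degree. Therefore, when we select a node by following a random link from a random starting node, it is unmarked with probability $1-p$, and conditional on being unmarked its number of other stubs is still distributed according to $q_k = (k+1)p_{k+1}/\moy{k}$ from Equation~\ref{eqqk}. This yields the simple factorization
$$F_1(x) = (1-p)\,\somme{k=0}{\infty} q_k x^k = (1-p)\,G_1(x),$$
reducing the problem to a derivative computation on the already-defined generating function $G_1$.

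Next, I would differentiate and evaluate at $x=1$. Using $G_1(x)=G_0'(x)/\moy{k}$ from Equation~\ref{eqg1}, we get $G_1'(1)=G_0''(1)/\moy{k}=(\moy{k^2}-\moy{k})/\moy{k}$. The threshold condition $F_1'(1)=1$ from Theorem~\ref{th-seuil-newman} therefore becomes
$$(1-p_c)\,\frac{\moy{k^2}-\moy{k}}{\moy{k}} \;=\; 1,$$
which rearranges directly to the announced formula $p_c = 1-\moy{k}/(\moy{k^2}-\moy{k})$.

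The hard part is conceptual rather than computational: one must be convinced that after random removal the surviving network is still amenable to the mean-field generating function framework of Section~\ref{sec_mfa}. This holds because uniform marking preserves the conditional uniformity of the configuration model and introduces no degree correlations between unmarked nodes; the clean factorization of $F_1(x)$ is precisely the concrete manifestation of this independence. As a cross-check I would sketch the alternative derivation via Theorem~\ref{thseuil}: after removing each endpoint with probability $p$, the surviving degree of a node of original degree $k$ is binomial with parameters $(k,1-p)$, giving $\moy{k'}=(1-p)\moy{k}$ and $\moy{k'^2}=(1-p)^2\moy{k^2}+p(1-p)\moy{k}$. Substituting these into the criterion $\moy{k'^2}-2\moy{k'}=0$, dividing through by $(1-p)$, and solving for $p$ yields the same expression for $p_c$, confirming the result via two independent routes.
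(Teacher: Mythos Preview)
Your proposal is correct and mirrors the paper's own treatment almost exactly: the paper also gives both derivations, first the Cohen et al.\ route via the binomially-thinned degree distribution and Theorem~\ref{thseuil} (your cross-check, stated there as Lemma~\ref{lem_ps_pk} and Proposition~\ref{prop_ps_moments}), and then the Newman generating-function route via $F_1(x)=(1-p)G_1(x)$ and Theorem~\ref{th-seuil-newman} (your main argument). The only cosmetic difference is that you present them in the opposite order.
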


\bigskip

\noindent
Notice that this theorem states that in some cases $p_c$ might be less
than $0$. But we have:
$$p_c = 1-\frac{\moy{k}}{\moy{k^2} - \moy{k}} \le 0 \iff
\langle k^2 \rangle - 2 \langle k \rangle \le 0.$$
According to Theorem~\ref{thseuil}, this implies  that the network
almost surely has no giant component. In this case, the notion of
threshold therefore has no meaning, and the theorem is irrelevant.

Theorem~\ref{th_ps_general} has been derived in different ways in
the literature. The two main methods were proposed
in~\cite{Cohen2000RandomBreakdown,Cohen2003Handbook} and
in~\cite{Newman2000Robustness,Newman2003Handbook}. We detail both approaches below.

\bigskip

Let us begin with the proof in~\cite{Cohen2000RandomBreakdown,Cohen2003Handbook}. It
relies on the fact that random node failures on a random network lead to a network which may still be considered as random (with a different degree
distribution), as explained in the preliminaries. Therefore, by computing the degree distribution of this network, one can use the criterion in
Theorem~\ref{thseuil} to decide if there is still a giant component or
not.


\begin{lemme} \cite{Cohen2000RandomBreakdown,Cohen2003Handbook}
\label{lem_ps_pk}
In a large random network with degree distribution $p_k$,
after the removal of a
fraction $p$ of the nodes during random node failures
the degree distribution \pk{p}
is given by
$$
\pk[k]{p} = \sum_{k_0=k}^{\infty} p_{k_0} {k_0 \choose k}(1-p)^{k} p^{k_0-k}.
$$
\end{lemme}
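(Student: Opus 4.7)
The plan is to compute the new degree distribution by conditioning on a surviving node's original degree and treating its neighbors' removal statuses as independent Bernoulli trials. More precisely, I would argue as follows.

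First, I would observe that random node failures amount, in the large-$N$ limit, to deciding independently for each node whether it is removed, with probability $p$, or kept, with probability $1-p$. This is an approximation (the true process picks a fixed fraction without replacement), but it is consistent with the mean-field framework of Section~\ref{sec_mfa} and the difference vanishes as $N \to \infty$. Since the removal is blind to degree, the distribution of \emph{original} degrees among the surviving nodes is still $p_{k_0}$.

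Second, I would fix a surviving node of original degree $k_0$ and compute the distribution of its \emph{new} degree, i.e.\ the number of its original neighbors that also survived. Because the underlying graph is random and we are working in the mean-field framework, the $k_0$ neighbors may be treated as independent random nodes of the network, so their survival events are independent Bernoulli$(1-p)$ trials. Hence the conditional probability that exactly $k$ of them remain is
$$P(k \text{ survive} \mid k_0) = \binom{k_0}{k}(1-p)^{k} p^{k_0-k}.$$

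Finally, I would combine these two ingredients by summing over all possible original degrees $k_0 \ge k$:
$$p_k(p) = \sum_{k_0=k}^{\infty} p_{k_0} \binom{k_0}{k}(1-p)^{k} p^{k_0-k},$$
which is the claimed formula.

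The only genuinely delicate step is the independence claim used in the second point: strictly speaking, the fates of two neighbors of the same node are only approximately independent, because they are constrained to be distinct nodes of the graph and, in the exact (non-replacement) sampling of removed nodes, the events ``neighbor $i$ is removed'' are weakly correlated. Both effects are of order $1/N$ and are precisely the sort of correlations neglected in the mean-field approximation discussed in Section~\ref{sec_mfa}, so the formula is exact in this framework.
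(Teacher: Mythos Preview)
Your proof is correct and follows the same approach as the paper: condition on a surviving node's original degree $k_0$, treat each neighbor's survival as an independent Bernoulli$(1-p)$ trial to get the binomial factor, and sum over $k_0$. The paper's proof is in fact a one-sentence version of your second step (it simply asserts the conditional probability $\binom{k_0}{k}(1-p)^k p^{k_0-k}$), whereas you spell out more carefully the mean-field approximations that justify the independence and the preservation of the original degree distribution among survivors.
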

\begin{proof}{
If a given node had degree $k_0$ before the removal, then the probability
that it has degree $k'\le k_0$ after the removal is
${k_0\choose  k'} (1-p)^{k'} p^{k_0-k'}$.
Indeed, $k_0 - k'$ of its neighbors have been removed with probability
$p$, and $k'$ of its neighbors have not been removed with
probability $(1-p)$.
}\end{proof}

In order to apply Theorem~\ref{thseuil}, we now have to compute the first
and second moments of the new degree distribution:

\begin{proposition} \cite{Cohen2000RandomBreakdown,Cohen2003Handbook}
\label{prop_ps_moments}
With the notations of Lemma~\ref{lem_ps_pk}, the first and second
moments of the degree distribution \pk{p} are
$$
\moy{k(p)} = (1-p)\moy{k}
\hspace*{1cm}
\mbox{ and }
\hspace*{1cm}
\moy{k^2(p)} = (1-p)^2\moy{k^2} + p(1-p)\moy{k}.
$$
\end{proposition}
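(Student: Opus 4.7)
The plan is to compute the two moments directly from the formula for $p_k(p)$ given in Lemma~\ref{lem_ps_pk}, swapping the order of the double summation and recognizing the inner sum as a moment of a binomial distribution. More conceptually, the formula says that a node of original degree $k_0$ has, after the removal step, a new degree distributed as a binomial $\mathrm{Bin}(k_0, 1-p)$ (each of its neighbors survives independently with probability $1-p$). So the computation is really about the mean and variance of $\mathrm{Bin}(k_0, 1-p)$, averaged afterwards over the original degree $k_0$ drawn from $p_k$.

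For the first moment, I would write
$$\moy{k(p)} = \somme{k=0}{\infty} k\ \pk[k]{p} = \somme{k_0=0}{\infty} p_{k_0} \somme{k=0}{k_0} k {k_0 \choose k}(1-p)^k p^{k_0-k},$$
and use the standard identity that the mean of $\mathrm{Bin}(k_0,1-p)$ equals $(1-p)k_0$. Pulling out the $(1-p)$ then gives $(1-p)\somme{k_0=0}{\infty} k_0 p_{k_0} = (1-p)\moy{k}$.

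For the second moment, the same swap of summations reduces the question to evaluating
$$\somme{k=0}{k_0} k^2 {k_0 \choose k}(1-p)^k p^{k_0-k},$$
which is $E[B^2]$ for $B\sim\mathrm{Bin}(k_0,1-p)$. Using $E[B^2] = \mathrm{Var}(B) + E[B]^2 = k_0(1-p)p + k_0^2 (1-p)^2$ and then summing against $p_{k_0}$ yields $p(1-p)\moy{k} + (1-p)^2 \moy{k^2}$, as claimed.

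There is no real obstacle here; the only mildly delicate point is the bookkeeping of the binomial identities (either one derives $\sum_k k{k_0\choose k} x^k (1-x)^{k_0-k} = k_0 x$ and $\sum_k k^2 {k_0 \choose k} x^k (1-x)^{k_0-k} = k_0 x(1-x) + k_0^2 x^2$ by direct manipulation of factorials, or one quotes them as the first two moments of the binomial distribution). Everything else is an exchange of summation, which is valid because all terms are non-negative.
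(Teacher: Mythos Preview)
Your proposal is correct and follows essentially the same approach as the paper: swap the order of summation in the double sum, recognize the inner sum as the first and second moments of a $\mathrm{Bin}(k_0,1-p)$ random variable, and then average over $k_0$ against $p_{k_0}$. The only cosmetic difference is that the paper isolates the binomial-moment identities as a separate technical lemma (derived by differentiating $(x+y)^{k_0}$), whereas you quote them directly as the mean and variance of the binomial distribution.
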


\noindent
In order to prove this proposition, we need the following technical lemma which gives the first and second moment of a binomial distribution.

\begin{lemme}
\label{lem_ps_choose}
For any integer $k$ and $k_0$, and any real $p$, we have
$$
\somme{k=0}{k_0} k{k_0 \choose k} (1-p)^k p^{k_0-k} = (1-p)k_0,
$$
and
$$
\somme{k=0}{k_0} k^2{k_0 \choose k}(1-p)^kp^{k_0-k}
= (1-p)^2k_0^2 + p(1-p)k_0.
$$
\end{lemme}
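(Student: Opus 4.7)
The plan is to recognize the two sums as the first and second moments of a binomial random variable $X \sim \mathrm{Bin}(k_0, 1-p)$, so that the identities will follow from the well-known formulas $E[X] = (1-p)k_0$ and $E[X^2] = \mathrm{Var}(X) + E[X]^2 = p(1-p)k_0 + (1-p)^2 k_0^2$. For a self-contained derivation inside the paper, however, I would prove both identities directly by algebraic manipulation of the binomial coefficients, so that the argument is purely combinatorial and requires no probabilistic background.

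For the first identity, the key step is the absorption identity $k\binom{k_0}{k} = k_0\binom{k_0-1}{k-1}$, valid for $k \ge 1$. Substituting this into the sum, pulling out the factor $k_0(1-p)$, and reindexing with $j = k-1$ turns the sum into $k_0(1-p)\sum_{j=0}^{k_0-1}\binom{k_0-1}{j}(1-p)^j p^{(k_0-1)-j}$. The remaining sum equals $((1-p)+p)^{k_0-1} = 1$ by the binomial theorem, giving $(1-p)k_0$.

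For the second identity, I would decompose $k^2 = k(k-1) + k$ so as to reuse the first identity for the linear part. Applying the absorption identity twice, $k(k-1)\binom{k_0}{k} = k_0(k_0-1)\binom{k_0-2}{k-2}$ for $k\ge 2$, and reindexing with $j = k-2$, the $k(k-1)$ part collapses to $k_0(k_0-1)(1-p)^2$ by the binomial theorem again. Adding this to the already-computed $(1-p)k_0$ and rearranging yields
\[
k_0(k_0-1)(1-p)^2 + (1-p)k_0 = (1-p)^2 k_0^2 + (1-p)k_0\bigl(1-(1-p)\bigr) = (1-p)^2 k_0^2 + p(1-p)k_0,
\]
which is the desired expression.

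There is essentially no obstacle here: both identities reduce to the binomial theorem after one or two uses of the absorption identity, and the edge cases $k_0 = 0, 1$ are handled trivially since the missing terms vanish. The only thing worth being careful about is bookkeeping when reindexing, and making sure to treat the $k=0$ (and $k=1$ for the second identity) terms as contributing zero so that the absorption identity can be applied uniformly.
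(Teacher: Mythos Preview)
Your argument is correct but takes a different route from the paper. The paper proceeds by differentiating the binomial expansion $(x+y)^{k_0}=\sum_{k=0}^{k_0}\binom{k_0}{k}x^k y^{k_0-k}$ with respect to $x$, multiplying by $x$, and then evaluating at $x=1-p$, $y=p$; repeating the differentiate-and-multiply step once more yields the second identity. Your approach instead uses the absorption identity $k\binom{k_0}{k}=k_0\binom{k_0-1}{k-1}$ (once for the first sum, twice after splitting $k^2=k(k-1)+k$ for the second) and reindexes to recover a plain binomial sum. Both methods ultimately reduce to the binomial theorem and are equally short; the paper's differentiation trick is slightly more mechanical and generalizes uniformly to higher moments, while your combinatorial version avoids calculus altogether and makes the bookkeeping more explicit.
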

\begin{proof}{
Let us start with:
$$
(x+y)^{k_0} = \somme{k=0}{k_0} {k_0 \choose k} x^k y^{k_0-k}.
$$
If we differentiate this equality with respect to $x$ and then multiply the
resulting equality by $x$, we obtain:
$$
xk_0(x+y)^{k_0-1} = \somme{k=0}{k_0} {k_0 \choose k}k x^{k}y^{k_0-k}.
$$
We obtain the first claim by setting $x=1-p$ and $y=p$ in this equation.

\noindent
If again we differentiate the last equation with respect to $x$ and
multiply the resulting equality by $x$, we obtain:
$$
x(k_0(x+y)^{k_0-1} + xk_0(k_0-1)(x+y)^{k_0-2})
= \somme{k=0}{k_0} k^2 {k_0 \choose k} x^{k}y^{k_0-k}.
$$
By setting $x=1-p$ and $y=p$ we obtain:
$$
\begin{array}[t]{rcl}
\somme{k=0}{k_0} k^2 {k_0 \choose k} (1-p)^{k}p^{k_0-k}
&=& (1-p) k_0 + (1-p)^2k_0(k_0-1)\\
&=& (1-p)^2k^2_0 + p(1-p)k_0,
\end{array}
$$
which ends the proof.
}\end{proof}

\noindent
We can now prove Proposition~\ref{prop_ps_moments}:\\
\begin{proof}{
The claims follow from the following series of equations.\\
$$
\begin{array}[t]{rcl}
\moy{k(p)} &=& \somme{k=0}{\infty} k\pk[k]{p}\\
&=& \somme{k=0}{\infty}
    k \somme{k_0=k}{\infty} p_{k_0}{k_0 \choose k}(1-p)^kp^{k_0-k}\\
&=& \somme{k_0=0}{\infty}
    \somme{k=0}{k_0} k p_{k_0}{k_0 \choose k}(1-p)^kp^{k_0-k}\\
&=& \somme{k_0=0}{\infty} p_{k_0}
    \somme{k=0}{k_0} k{k_0 \choose k}(1-p)^kp^{k_0-k}\\
&=& \somme{k_0=0}{\infty} p_{k_0} (1-p) k_0\\
&=& (1-p)\moy{k}
\end{array}
$$
where we made an inversion of the sum between the second and third line
and used Lemma~\ref{lem_ps_choose} between lines four and five.
$$
\begin{array}[t]{rcl}
\moy{k^2(p)} &=& \somme{k=0}{\infty} k^2\pk[k]{p}\\
&=& \somme{k=0}{\infty}k^2
    \somme{k_0=k}{\infty} p_{k_0}{k_0 \choose k}(1-p)^kp^{k_0-k}\\
&=& \somme{k_0=0}{\infty}
    \somme{k=0}{k_0} k^2 p_{k_0}{k_0 \choose k}(1-p)^kp^{k_0-k}\\
&=& \somme{k_0=0}{\infty} p_{k_0}
    \somme{k=0}{k_0} k^2{k_0 \choose k}(1-p)^kp^{k_0-k}\\
&=& \somme{k_0=0}{\infty} p_{k_0} [(1-p)^2 k_0^2 + p(1-p)k_0]\\
&=& (1-p)^2\moy{k^2} + p(1-p)\moy{k}
\end{array}
$$
using the same tricks as for the first series of equations.
}\end{proof}

\noindent
Finally, this yields the following proof for Theorem~\ref{th_ps_general}:\\
\begin{proof}{
The threshold $p_c$ is reached when the network does not have a giant
component anymore. From Theorem~\ref{thseuil}, this happens when
$\moy{k^2(p_c)} - 2\moy{k(p_c)} = 0$.
From Proposition~\ref{prop_ps_moments}, this is equivalent to
$(1-p_c) [ (1-p_c)\moy{k^2} - (2-p_c)\moy{k}] =0$, which gives the result.
}\end{proof}

\bigskip

Let us now describe the method developed in
\cite{Newman2000Robustness,Newman2003Handbook},
to obtain Theorem~\ref{th_ps_general}. It relies on the use of generating
functions (see Section~\ref{sec_mfa}), each node being marked as {\em absent} with probability $p$ and as {\em present} with probability $1-p$.

Recall that $F_1(x)$ is the generating function for the probability
of finding an unmarked (\ie{} present) node with $k$ (marked or unmarked)
other neighbors at the end of a randomly chosen link. In our case,
$F_1(x)$ therefore is
$$
F_1(x) = \somme{k=0}{\infty} (1-p)q_k x^k = (1-p)G_1(x),
\label{F1_ps}
$$
where $G_1(x) = \somme{k=0}{\infty} q_k x^k$ is the
generating function for the probability of finding a node with $k$
others neighbors at the end of a randomly chosen link, defined in
Section~\ref{sec_mfa}.
\noindent
We can then prove Theorem~\ref{th_ps_general} as a direct consequence
of Theorem~\ref{th-seuil-newman}:\\
\begin{proof}{
\label{preuve2_ps}
From Theorem~\ref{th-seuil-newman}, the threshold $p_c$ is reached
when $F'_1(1)=1$, which is equivalent here to $(1-p_c)G'_1(1)=1$.
Therefore $p_c$ satisfies
$$
p_c = 1-\frac{1}{G'_1(1)}.
$$
We know that $G_1(x)=\somme{k=0}{\infty} q_k x^k
= \somme{k=1}{\infty}kp_kx^{k-1}/\moy{k}$.
Therefore $G'_1(x)=\somme{k=2}{\infty} k(k-1)p_kx^{k-2} / \moy{k}
=\somme{k=0}{\infty} k(k-1)p_kx^{k-2} / \moy{k}
$,
and $G'_1(1)= \frac{\moy{k^2}-\moy{k}}{\moy{k}}$. This ends the proof.
}\end{proof}

\bigskip

The two proofs have different advantages and drawbacks. The
first one is self contained and relies only on classical probabilistic
notions, but it is quite long and technical. The second one is very
concise and simple, but it relies on the generating function formalism,
which has to be first introduced and understood. These differences do
not only have an impact on the aspect of the proofs: they also imply
that one has to think carefully about each approximation in the first
approach, while they are hidden in the generating function formalism
in the second one. As a counterpart, the first approach makes it easier
to tune and locate approximations precisely.

\subsubsection{The cases of Poisson and power-law networks}

Theorem~\ref{th_ps_general} is valid for any random network, whatever its degree
distribution. To study the behavior of Poisson and power-law networks in
case of random node failures, we therefore only have to apply it to these
cases. More precisely, we will consider Poisson, continuous power-law
and discrete power-law networks, and, for each of these classes, both
finite networks with $N$ nodes and finite networks with size tending
towards infinity. Comparison with simulations will be provided at the
end of the subsection.


Note that we will derive all the results for finite size networks as
corollaries of results presented in previous sections. The results for
networks with size tending towards infinity can then be derived either
from results of the previous sections, or as
limits of the corresponding finite cases.

\begin{corollaire}
\label{cor_ps_ER_finis}
\forPoissonFinis{},
the threshold $p_c$ for random node failures is given by
$$
p_c = 1 - \frac{\somme{k=0}{K} z^k/(k-1)!}
{\somme{k=0}{K}z^k/(k-2)!},
$$
where $K$ is the maximal degree or the network, related to $p_k$ and $N$
by Lemma~\ref{lem_K_er}.
\end{corollaire}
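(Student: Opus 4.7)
The plan is to derive this as a direct consequence of the general threshold formula from Theorem~\ref{th_ps_general} applied to the expected distribution of a Poisson sample of size $N$, for which Lemma~\ref{lem_moments_finis_er} already supplies the required first and second moments. So no new probabilistic argument is needed; the work is purely algebraic.

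First I would invoke Theorem~\ref{th_ps_general}, which asserts that for any random network with degree distribution $p_k(N)$ the threshold is
$$p_c = 1 - \frac{\moy{k}}{\moy{k^2} - \moy{k}}.$$
Then I would plug in the two expressions given by Lemma~\ref{lem_moments_finis_er}, namely
$$\moy{k} = \frac{N}{N-1}\somme{k=0}{K}\frac{e^{-z}z^k}{(k-1)!} \quad\text{and}\quad \moy{k^2} = \frac{N}{N-1}\somme{k=0}{K}\frac{k\,e^{-z}z^k}{(k-1)!},$$
where $K$ is the expected maximal degree given by Lemma~\ref{lem_K_er}.

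The next step is pure simplification. In both the numerator $\moy{k}$ and the denominator $\moy{k^2}-\moy{k}$ the common prefactor $\frac{N}{N-1}e^{-z}$ factors out and cancels, leaving
$$p_c = 1 - \frac{\somme{k=0}{K} z^k/(k-1)!}{\somme{k=0}{K} k z^k/(k-1)! - \somme{k=0}{K} z^k/(k-1)!}.$$
The denominator collapses by combining the two sums term by term and using the identity $k/(k-1)! - 1/(k-1)! = (k-1)/(k-1)! = 1/(k-2)!$, which yields exactly $\somme{k=0}{K} z^k/(k-2)!$ and the claimed formula.

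I do not expect any genuine obstacle here; the only thing that deserves a word of care is the interpretation of the low-index terms ($k=0$ and $k=1$ in the factorial denominators), which I would handle by adopting the usual convention $1/(k-1)! = 0$ for $k=0$ and $1/(k-2)! = 0$ for $k\in\{0,1\}$, consistent with the starting Poisson series. One might also briefly note that the formula is meaningful only when the denominator is positive, i.e.\ when the finite-Poisson version of the criterion of Theorem~\ref{thseuil} is satisfied, since otherwise the network already has no giant component (cf.\ the remark following Theorem~\ref{th_ps_general}).
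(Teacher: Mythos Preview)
Your approach is correct and is exactly what the paper does: a direct application of Theorem~\ref{th_ps_general} together with the finite-sample moments of Lemma~\ref{lem_moments_finis_er}. You simply spell out the algebraic cancellation (the common factor $\frac{N}{N-1}e^{-z}$ and the identity $(k-1)/(k-1)! = 1/(k-2)!$) that the paper leaves implicit.
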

\begin{proof}{
Direct application  of Theorem~\ref{th_ps_general}
using Lemma~\ref{lem_moments_finis_er}.
}\end{proof}

\begin{corollaire}
\label{cor_ps_ER}
\cite{Cohen2000RandomBreakdown}
\forPoisson{},
the threshold $p_c$ for random node failures is
$$p_c = 1 - \frac{1}{z}.$$
\end{corollaire}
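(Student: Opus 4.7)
The plan is to obtain this as a direct specialization of the general threshold formula in Theorem~\ref{th_ps_general} to the Poisson case, using the explicit moments given by Lemma~\ref{lem_moments_er}.

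First, I would recall that Theorem~\ref{th_ps_general} gives, for any random network whose degree distribution admits a finite second moment,
\[
p_c = 1 - \frac{\moy{k}}{\moy{k^2} - \moy{k}}.
\]
Since the corollary considers Poisson networks with size tending towards infinity, no truncation is needed and the infinite-limit moments apply. By Lemma~\ref{lem_moments_er}, a Poisson distribution of average $z$ satisfies $\moy{k} = z$ and $\moy{k^2} = z^2 + z$, hence $\moy{k^2} - \moy{k} = z^2$.

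Substituting these into the general formula yields
\[
p_c = 1 - \frac{z}{z^2} = 1 - \frac{1}{z},
\]
which is the claimed expression.

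There is essentially no obstacle: the whole content is the cancellation $\moy{k^2} - \moy{k} = z^2$ that makes the Poisson case collapse to a one-parameter formula. The only thing worth flagging is consistency with Lemma~\ref{lem_poisson_conn}: the result is meaningful only when the network has a giant component in the first place, i.e.\ when $z>1$, which is precisely the range where $p_c > 0$; for $z \le 1$ the value becomes nonpositive and, as noted after Theorem~\ref{th_ps_general}, the notion of threshold is irrelevant.
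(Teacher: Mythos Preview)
Your proof is correct and follows exactly the same approach as the paper: a direct application of Theorem~\ref{th_ps_general} together with the Poisson moments from Lemma~\ref{lem_moments_er}. Your additional remark about consistency with Lemma~\ref{lem_poisson_conn} is a nice touch but not required.
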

\begin{proof}{
Direct application  of Theorem~\ref{th_ps_general}
using Lemma~\ref{lem_moments_er}.
}\end{proof}

\begin{corollaire}\cite{Cohen2000RandomBreakdown}
\label{cor_ps_sf_finis}
\forSfFinis{},
the threshold $p_c$ for random node failures is
$$
p_c =
\left\{
\begin{array}{ll}
1 - \left[
\frac{2-\alpha}{3-\alpha}m -1 \right]^{-1}&\mbox{ if }\alpha > 3\\
1 - \left[\frac{2-\alpha}{\alpha-3} m N^{\frac{3-\alpha}{\alpha-1}}
     - 1\right]^{-1}
      & \mbox{ if }2<\alpha<3\\
1 - \left[ \frac{2-\alpha}{3-\alpha} mN^{\frac{1}{\alpha-1}} -1 \right]^{-1} & \mbox{ if }1<\alpha<2.
\end{array}
\right.
$$
\end{corollaire}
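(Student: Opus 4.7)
The plan is to obtain this corollary as a direct application of Theorem~\ref{th_ps_general}, which gives
$$
p_c = 1 - \frac{\moy{k}}{\moy{k^2} - \moy{k}} = 1 - \left[\frac{\moy{k^2}}{\moy{k}} - 1\right]^{-1},
$$
combined with the expressions for $\moy{k}$ and $\moy{k^2}$ of a finite sample from a continuous power-law distribution given in Lemma~\ref{lem_moments_finis_sf}, and with the expression $K = m N^{1/(\alpha-1)}$ for the expected maximal degree from Lemma~\ref{lem_K_sf}. The three ranges of $\alpha$ appearing in the statement correspond exactly to the three cases of Lemma~\ref{lem_moments_finis_sf}, so I would simply treat them one at a time.

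First, for the case $\alpha > 3$, both moments are finite and independent of $N$: $\moy{k} = m(\alpha-1)/(\alpha-2)$ and $\moy{k^2} = m^2 (\alpha-1)/(\alpha-3)$. Dividing, the factor $(\alpha-1)/(\alpha-2)$ cancels and the ratio $\moy{k^2}/\moy{k}$ becomes $m(\alpha-2)/(\alpha-3)$; rewriting $(\alpha-2)/(\alpha-3)$ as $(2-\alpha)/(3-\alpha)$ gives the announced formula. For $2 < \alpha < 3$, $\moy{k}$ is still $m(\alpha-1)/(\alpha-2)$ but $\moy{k^2} = m^{\alpha-1} K^{-\alpha+3} (\alpha-1)/(3-\alpha)$ depends on $K$. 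Substituting $K = m N^{1/(\alpha-1)}$ gives $K^{-\alpha+3} = m^{3-\alpha} N^{(3-\alpha)/(\alpha-1)}$, hence $\moy{k^2} = m^2 N^{(3-\alpha)/(\alpha-1)} (\alpha-1)/(3-\alpha)$; the ratio $\moy{k^2}/\moy{k}$ then reduces to $m (\alpha-2)/(3-\alpha) \cdot N^{(3-\alpha)/(\alpha-1)}$, which is the expression $\frac{2-\alpha}{\alpha-3} m N^{(3-\alpha)/(\alpha-1)}$ appearing in the claim (both quotients coincide after pulling a minus sign through numerator and denominator). Finally, for $1 < \alpha < 2$, both moments depend on $K$: $\moy{k} = m^{\alpha-1} K^{-\alpha+2}(\alpha-1)/(2-\alpha)$ and $\moy{k^2} = m^{\alpha-1} K^{-\alpha+3}(\alpha-1)/(3-\alpha)$, so the $m^{\alpha-1}$ and $(\alpha-1)$ factors cancel in the ratio and only one extra factor of $K$ survives, giving $\moy{k^2}/\moy{k} = K (2-\alpha)/(3-\alpha) = m N^{1/(\alpha-1)} (2-\alpha)/(3-\alpha)$, which again yields the announced formula after substitution into Theorem~\ref{th_ps_general}.

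The entire argument is therefore routine substitution and algebraic simplification; there is no real obstacle beyond bookkeeping. The only thing that requires a bit of care is sign tracking: depending on whether $\alpha$ is above or below $2$ or $3$, numerators and denominators of expressions like $(\alpha-2)/(3-\alpha)$ change sign, so one has to verify that the rewritten form $(2-\alpha)/(\alpha-3)$ or $(2-\alpha)/(3-\alpha)$ used in the statement is indeed the correct (positive) one in each range. Since everything ultimately reduces to plugging Lemmas~\ref{lem_K_sf} and~\ref{lem_moments_finis_sf} into Theorem~\ref{th_ps_general}, the proof fits in a few lines.
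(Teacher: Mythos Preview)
Your proposal is correct and follows exactly the same approach as the paper: rewrite Theorem~\ref{th_ps_general} as $p_c = 1 - [\moy{k^2}/\moy{k} - 1]^{-1}$, plug in the three cases of Lemma~\ref{lem_moments_finis_sf}, and then substitute $K = m N^{1/(\alpha-1)}$ from Lemma~\ref{lem_K_sf}. Your write-up is in fact more detailed than the paper's, which simply states the intermediate formula in terms of $K$ and then invokes Lemma~\ref{lem_K_sf} without spelling out the algebra.
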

\begin{proof}{
We can rewrite Theorem~\ref{th_ps_general} into
$p_c=1 - 1/(\moy{k^2}/\moy{k}-1)$.
From the approximations of $\moy{k}$ and \moy{k^2} in Lemma~\ref{lem_moments_finis_sf},
we then obtain:
$$
p_c =
\left\{
\begin{array}{lr}
1 - \left[
\frac{2-\alpha}{3-\alpha}m -1 \right]^{-1}&\alpha > 3\\
1 - \left[\frac{2-\alpha}{\alpha-3} m^{\alpha-2}K^{3-\alpha} - 1\right]^{-1}
      & 2<\alpha<3\\
1 - \left[ \frac{2-\alpha}{3-\alpha} K -1 \right]^{-1} & 1<\alpha<2.
\end{array}
\right.
$$
Using the evaluation of $K$ in Lemma~\ref{lem_K_sf}, we obtain the result.
}\end{proof}

\begin{corollaire}\cite{Cohen2000RandomBreakdown}
\label{cor_ps_sf}
\forSf{},
 the threshold $p_c$ for random node failures is
$$
p_c =
\left\{
\begin{array}{ll}
1 - \left[
\frac{2-\alpha}{3-\alpha}m -1 \right]^{-1}&\mbox{ if }\alpha > 3\\
1 & \mbox{ if }1<\alpha<3.
\end{array}
\right.
$$
\end{corollaire}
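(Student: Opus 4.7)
The plan is to derive this corollary in exactly the same way as Corollary~\ref{cor_ps_sf_finis}, namely by plugging the moments of the (now unbounded) continuous power-law distribution into the general threshold formula of Theorem~\ref{th_ps_general}. Concretely, I would rewrite Theorem~\ref{th_ps_general} in the convenient form
\[
p_c \;=\; 1 - \frac{1}{\moy{k^2}/\moy{k} - 1},
\]
and then substitute the infinite-limit moments supplied by Lemma~\ref{lem_moments_sf}. Equivalently, one may obtain the statement by taking $N\to\infty$ in Corollary~\ref{cor_ps_sf_finis}, which is probably the cleanest route since all the algebra has already been carried out there.

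For $\alpha>3$, both moments are finite: Lemma~\ref{lem_moments_sf} gives $\moy{k}=m(\alpha-1)/(\alpha-2)$ and $\moy{k^2}=m^2(\alpha-1)/(\alpha-3)$, so $\moy{k^2}/\moy{k}=m(\alpha-2)/(\alpha-3)=\frac{2-\alpha}{3-\alpha}m$, yielding the stated expression. Note that the finite-$N$ formula of Corollary~\ref{cor_ps_sf_finis} is in this range already independent of $N$, so the $N\to\infty$ limit is trivial.

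For $1<\alpha<3$, the second moment diverges (Lemma~\ref{lem_moments_sf}), and Theorem~\ref{thseuil} guarantees that the network does possess a giant component in this regime, so the threshold is well-defined and the formula of Theorem~\ref{th_ps_general} can legitimately be applied. Letting $\moy{k^2}\to\infty$ (with $\moy{k}$ either finite, for $2<\alpha<3$, or also divergent but of smaller order, for $1<\alpha<2$) drives the ratio $\moy{k}/(\moy{k^2}-\moy{k})$ to $0$, giving $p_c=1$. Equivalently, the bracketed quantities in Corollary~\ref{cor_ps_sf_finis} involve $N^{(3-\alpha)/(\alpha-1)}$ for $2<\alpha<3$ and $N^{1/(\alpha-1)}$ for $1<\alpha<2$, both of whose exponents are positive on their respective ranges, so these factors diverge with $N$ and $p_c\to 1$.

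There is no real obstacle here; the only delicate point is interpretative rather than computational: when $\moy{k^2}$ diverges, the identity $p_c=1-\moy{k}/(\moy{k^2}-\moy{k})$ must be handled as a limit rather than as an equality, and one must check (as above) that a giant component exists in the $1<\alpha<3$ range so that the threshold statement is meaningful. Once this is observed, both cases follow from a one-line substitution.
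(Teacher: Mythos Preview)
Your proposal is correct and matches the paper's approach, which is simply ``Direct application of Corollary~\ref{cor_ps_sf_finis} when the size tends towards infinity.'' You give considerably more detail than the paper (including the alternative route via Theorem~\ref{th_ps_general} and Lemma~\ref{lem_moments_sf}), but the underlying idea is the same.
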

\begin{proof}{
Direct application of Corollary~\ref{cor_ps_sf_finis} when the size
tends towards infinity.
}\end{proof}

\begin{corollaire}
\label{cor_ps_vsf_finis}
\forVsfFinis{},
the threshold $p_c$ for random node failures is given by
$$p_c = 1-\frac{\harmo{K}{\alpha-1}}
{\harmo{K}{\alpha-2} - \harmo{K}{\alpha-1}},$$
\defharmo{},
where $K$ is the maximal degree or the network, related to $p_k$ and $N$
by Lemma~\ref{lem_K_vsf}.
\end{corollaire}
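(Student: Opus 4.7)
The plan is to proceed by direct substitution, mirroring the structure of Corollaries~\ref{cor_ps_ER_finis}, \ref{cor_ps_sf_finis}, etc., which handle the analogous Poisson and continuous power-law cases. The general threshold formula $p_c = 1 - \moy{k}/(\moy{k^2} - \moy{k})$ from Theorem~\ref{th_ps_general} is model-agnostic, so the only task is to specialize the first two moments to the discrete power-law distribution restricted to a finite sample of $N$ nodes.

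First, I would invoke Lemma~\ref{lem_moments_finis_vsf}, which already gives the required moments in terms of truncated harmonic sums, namely $\moy{k} = \harmo{K}{\alpha-1}/\harmo{K-1}{\alpha}$ and $\moy{k^2} = \harmo{K}{\alpha-2}/\harmo{K-1}{\alpha}$, where $K$ is the expected maximal degree from Lemma~\ref{lem_K_vsf}. Since both moments carry the same normalization factor $1/\harmo{K-1}{\alpha}$ in the denominator, the ratio $\moy{k}/(\moy{k^2}-\moy{k})$ simplifies cleanly: that common factor cancels and I am left with $\harmo{K}{\alpha-1}/(\harmo{K}{\alpha-2} - \harmo{K}{\alpha-1})$.

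Substituting into Theorem~\ref{th_ps_general} then yields the stated formula for $p_c$ immediately. No further approximation beyond those already built into Lemmas~\ref{lem_K_vsf} and~\ref{lem_moments_finis_vsf} is needed, and there is no genuine obstacle: the only thing to verify is that the cancellation of $\harmo{K-1}{\alpha}$ is valid, which is purely algebraic. If desired, one could note that the definition of $K$ via Lemma~\ref{lem_K_vsf} is implicitly what distinguishes this finite-$N$ statement from the infinite-size limit (which would use $\zeta(\alpha-1)$ and $\zeta(\alpha-2)$ in place of the truncated harmonic numbers), but this point requires no additional argument beyond citing the appropriate lemma.
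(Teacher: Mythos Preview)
Your proposal is correct and matches the paper's approach exactly: the paper's proof is simply ``Direct application of Theorem~\ref{th_ps_general} using Lemma~\ref{lem_moments_finis_vsf},'' and you have merely spelled out the one-line algebra (the cancellation of the common factor $\harmo{K-1}{\alpha}$) that this direct application entails.
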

\begin{proof}{
Direct application of Theorem~\ref{th_ps_general}
using Lemma~\ref{lem_moments_finis_vsf}.
}\end{proof}

\begin{corollaire} \cite{Newman2000Robustness}
\label{cor_ps_vsf}
\forVsf{}, the threshold $p_c$ for random node failures is
$$p_c = 1-\frac{\zeta(\alpha-1)}{\zeta(\alpha-2) - \zeta(\alpha-1)}.$$
\end{corollaire}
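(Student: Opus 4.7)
The plan is to derive this as a direct consequence of Theorem~\ref{th_ps_general}, exactly as was done for Poisson networks in Corollary~\ref{cor_ps_ER} and for continuous power-law networks in Corollary~\ref{cor_ps_sf}. Since Theorem~\ref{th_ps_general} already expresses $p_c$ purely in terms of $\moy{k}$ and $\moy{k^2}$, and since Lemma~\ref{lem_moments_vsf} supplies closed-form values of these two moments for the discrete power-law distribution in the infinite limit, the work reduces to a substitution followed by an algebraic simplification.

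Concretely, I would start from
\[
p_c = 1 - \frac{\moy{k}}{\moy{k^2} - \moy{k}},
\]
then plug in $\moy{k} = \zeta(\alpha-1)/\zeta(\alpha)$ and $\moy{k^2} = \zeta(\alpha-2)/\zeta(\alpha)$ from Lemma~\ref{lem_moments_vsf}. The common factor $1/\zeta(\alpha)$ cancels between numerator and denominator inside the fraction, leaving exactly
\[
p_c = 1 - \frac{\zeta(\alpha-1)}{\zeta(\alpha-2) - \zeta(\alpha-1)},
\]
which is the claimed formula. One should note in passing that this quantity is well-defined (i.e.\ positive and a meaningful threshold) only when $\zeta(\alpha-2) > 2\zeta(\alpha-1)$, which by Lemma~\ref{lem_dpl_conn} is precisely the condition under which the underlying network has a giant component in the first place; outside this range the statement is vacuous, just as the remark following Theorem~\ref{th_ps_general} indicates.

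An alternative derivation, which I would briefly mention, is to pass to the limit $N \to \infty$ in Corollary~\ref{cor_ps_vsf_finis}: by Lemma~\ref{lem_K_vsf} one has $K \to \infty$ as $N \to \infty$, and then $H_K^{(\alpha-1)} \to \zeta(\alpha-1)$ and $H_K^{(\alpha-2)} \to \zeta(\alpha-2)$ by definition of the Riemann zeta function. Both routes yield the same expression, so there is really no obstacle to the proof; the only point requiring a small amount of care is confirming that the approximations baked into Theorem~\ref{th_ps_general} and Lemma~\ref{lem_moments_vsf} remain legitimate in the mean-field framework used throughout the paper, which has already been discussed in Section~\ref{sec_mfa}.
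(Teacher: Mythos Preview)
Your proposal is correct. The paper's own one-line proof actually takes your \emph{alternative} route---it simply states ``Direct application of Corollary~\ref{cor_ps_vsf_finis} when the size tends towards infinity''---whereas your primary route substitutes the infinite-limit moments from Lemma~\ref{lem_moments_vsf} directly into Theorem~\ref{th_ps_general}; both derivations are trivially equivalent and you have identified each of them.
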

\begin{proof}{
Direct application of Corollary~\ref{cor_ps_vsf_finis} when the size
tends towards infinity.
}\end{proof}

\bigskip

We plot numerical evaluations of these results in
Figure~\ref{fig_ps_seuil}, together with experimental results.
We also give in Table~\ref{tab_ps} the thresholds for specific
values of the exponent and the average degree.

\begin{figure}[!h]
\begin{center}
\includetroisres{ps_seuil_0.05.eps}
\end{center}
\caption{
\label{fig_ps_seuil}
Thresholds for random node failures.
\lefttoright
\refplotsseuil
}
\end{figure}

\degc{ps}{random node failures}{1 & 0.74 & 0.62 & 0.59 & 1 & 0.67 & 0.47 & 0.45}{1 & 0.55 & 0.38 & 0.34 & 1 & 0.32 & 0.29 & 0.26}

The central point here is to notice that power-law and Poisson networks
display a qualitatively different behavior in case of node failures.
In theory, power-law networks have a threshold $p_c=1$ as long as the
exponent is lower than $3$ (most real-world cases), which means that all
nodes have to be removed to achieve a breakdown. On the contrary, for
Poisson networks only a finite (\ie\ strictly lower than $1$) fraction
of the nodes has to be removed.
This leads to the conclusion that power-law networks are significantly
more resilient to node  failures than Poisson networks, which confirms
the experimental observations discussed in introduction.

However, this result is moderated by the two following observations.
First, Poisson networks may have a quite large threshold when their
average degree grows (which appears from both analytic previsions and
experiments).
Second, and more importantly, power-law networks of finite size $N$
are much more sensitive to failures than what is predicted for the
infinite limit.
This is already true from the analytic previsions, and even more
pronounced for experiments.

This is particularly clear when one compares the behavior of networks
of various kinds but with the same average degree, see
Table~\ref{tab_ps}.
The experimental thresholds of Poisson networks are at most
$38\%$ smaller than those for power-law networks of the same average
degrees.


We may therefore conclude that power-law networks
are indeed more resilient to
random node failures than Poisson ones, but that the difference in
practice is not as striking as predicted by the infinite limit
approximations.

\subsection{Link point of view of random node failures.}
\label{sec_psa}

As discussed in the preliminaries, one may wonder what happens in networks
during random {\em node} failures in terms of {\em  the number of links
removed}.
The plots of the size of the largest component as a fraction of the number
of links removed during random node failures are given in Figure~\ref{fig_psa_cc}
(notice that
these plots are nothing but (nonlinear) rescalings of the plots in
Figure~\ref{fig_ps_cc}).
The question we address here therefore is: how many links have
been removed when we reach the threshold for random node failures?
This is not equivalent to random  removals
of links, which are studied in the next subsection.

\begin{figure}[!h]
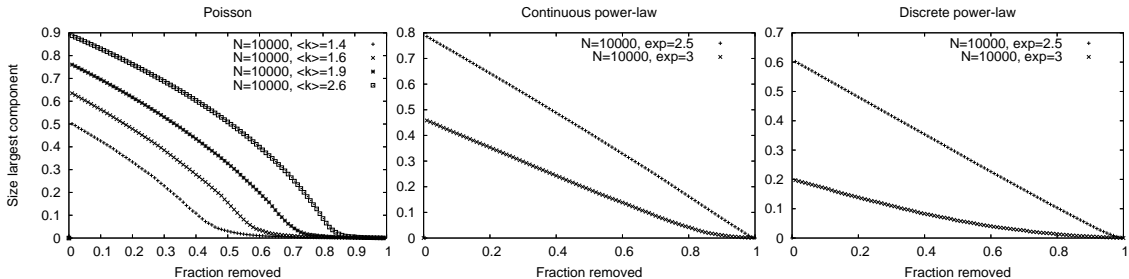

\begin{center}
\includetroisres{10000_psa_cc.eps}
\end{center}
\caption{
\label{fig_psa_cc}
Size of the largest connected component as a function of the
fraction of removed {\em links}, during random {\em node} failures.
\lefttoright
\refplotscc
}
\end{figure}

Like in the previous subsection, we will begin with general results
and then apply them to the cases  under concern.

\subsubsection{General results}

One can evaluate the number of links removed during random node
failures as follows.

\begin{proposition}
\label{prop_psa_mp}
In large random networks, after the removal of a fraction $p$ of the
nodes during random nodes failures, the fraction of removed links is
$m(p) = 2p-p^2$.
\end{proposition}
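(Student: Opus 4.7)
The plan is to argue that a link disappears from the network if and only if at least one of its two endpoints is removed, and then to exploit the independence of node removals. Specifically, I would fix an arbitrary link $e$ with endpoints $u$ and $v$, and observe that since random node failures consist of marking each node independently with probability $p$ (equivalently, choosing a uniformly random subset of $pN$ nodes, which in the large-$N$ mean-field framework is indistinguishable from independent Bernoulli marking), the event that $u$ survives and the event that $v$ survives are independent, each of probability $1-p$.

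From this, the probability that $e$ survives both endpoints is $(1-p)^2$, and hence the probability that $e$ is removed is $1-(1-p)^2 = 2p - p^2$. Summing (or rather, taking expectations) over all $L = \tfrac{1}{2}N\langle k\rangle$ links of the network, the expected number of removed links is $L(2p-p^2)$, so the expected \emph{fraction} of removed links is $m(p) = 2p - p^2$. In the large-$N$ limit, concentration around the mean justifies replacing the expected fraction by the fraction itself, which is the kind of mean-field approximation already used throughout Section~\ref{sec_prelim}.

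The main (minor) obstacle is the justification that the two endpoints of a given link can be treated as independently marked. A priori, conditioning on the existence of the link $e = (u,v)$ could bias the removal probabilities of $u$ and $v$. However, random node failures act uniformly and independently on the node set without regard to the link structure, so conditioning on $e$ has no effect on whether $u$ or $v$ is marked. This is the only place where an approximation in the spirit of Section~\ref{sec_mfa} is implicitly invoked: we neglect any self-correlation stemming from loops or multi-edges, which are negligible in the configuration model under our standing assumptions. Once this is granted, the computation $1-(1-p)^2 = 2p-p^2$ is immediate and independent of the underlying degree distribution $p_k$.
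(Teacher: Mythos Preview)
Your proof is correct and follows essentially the same approach as the paper: a link survives exactly when both of its endpoints survive, each with probability $1-p$, so the surviving fraction is $(1-p)^2$ and the removed fraction is $2p-p^2$. The paper phrases this in the language of stubs (each stub is kept with probability $1-p$, hence a pair of stubs with probability $(1-p)^2$) rather than endpoints, but the argument is the same.
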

\begin{proof}{
Let us consider a network in which we randomly remove a fraction
$p$ of the nodes. Since the nodes are chosen randomly, we can assume
that the same fraction $p$ of the stubs in the network were attached
to the removed nodes. Each stub is kept with probability $(1-p)$, the
fraction of pairs of stubs linking non removed nodes is therefore
$(1-p)^2$. This last quantity is the fraction of non removed links
and $1-(1-p)^2 = 2p - p^2$ is finally the fraction of removed links.
%
%
%
}\end{proof}

We can now use this result to study the threshold for random node
failures in terms of the fraction of removed links.

\begin{corollaire}
\label{cor_psa_pc}
The fraction of links removed at the threshold $p_c$ for random node
failures in large random networks with degree distribution $p_k$ is
$$
m(p_c) = 2p_c - p_c^2 =
1-\left(\frac{\moy{k}}{\moy{k^2}-\moy{k}}\right)^2.
$$
\end{corollaire}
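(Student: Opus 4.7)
The plan is simply to chain the two results already proven in this subsection. Proposition~\ref{prop_psa_mp} gives the fraction of removed links as a function of the fraction of removed nodes, namely $m(p) = 2p - p^2$. Evaluating this identity at $p = p_c$ yields the first equality in the statement for free, so the only real content is to rewrite $2p_c - p_c^2$ in the form involving the first two moments of the degree distribution.

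For this I would use the algebraic identity $2p_c - p_c^2 = 1 - (1-p_c)^2$, which is just completion of the square. Then I would substitute the value of $1 - p_c$ read directly off Theorem~\ref{th_ps_general}, namely $1 - p_c = \frac{\moy{k}}{\moy{k^2} - \moy{k}}$, and square it. This produces the claimed right-hand side with no further manipulation.

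There is no genuine obstacle here: both ingredients are in hand and the argument amounts to a one-line substitution. The only caveat worth flagging is the same one mentioned just after Theorem~\ref{th_ps_general}: when $\moy{k^2} - 2\moy{k} \le 0$ the underlying network has no giant component to begin with (by Theorem~\ref{thseuil}), in which case $p_c$ as given by Theorem~\ref{th_ps_general} is non-positive and loses its meaning; accordingly the corollary should be read as holding in the regime where a giant component exists, i.e.\ where $\moy{k^2} > 2\moy{k}$, which in particular ensures that the ratio $\moy{k}/(\moy{k^2}-\moy{k})$ lies in $[0,1]$ and the expression $1 - (\moy{k}/(\moy{k^2}-\moy{k}))^2$ really is a valid fraction of removed links.
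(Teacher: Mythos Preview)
Your proposal is correct and follows exactly the same approach as the paper, which simply states that the result is immediate from Theorem~\ref{th_ps_general} and Proposition~\ref{prop_psa_mp}. Your additional remark on the regime of validity is a helpful observation but not required for the argument itself.
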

\begin{proof}{
Immediate from Theorem~\ref{th_ps_general} and Proposition~\ref{prop_psa_mp}.
}\end{proof}

\subsubsection{The cases of Poisson and power-law networks}

We now apply the general result above to the cases of interest, which
gives a corollary in each case.

\begin{corollaire}
\label{cor_psa_ER_finis}
\forPoissonFinis{},
the fraction of links removed at the threshold $p_c$ for random node
failures is
$$
m(p_c) = 1 - \left(\frac{\somme{k=0}{K} z^k/(k-1)!}
{\somme{k=0}{K}z^k/(k-2)!}\right)^2,
$$
where $K$ is the maximal degree or the network, related to $N$
by Lemma~\ref{lem_K_er}.
\end{corollaire}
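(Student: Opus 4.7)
The plan is to derive this as a direct substitution into the general formula already proved in Corollary~\ref{cor_psa_pc}, which states that at the threshold for random node failures in any large random network,
$$
m(p_c) = 1-\left(\frac{\moy{k}}{\moy{k^2}-\moy{k}}\right)^2.
$$
So the only work left is to evaluate the ratio $\moy{k}/(\moy{k^2}-\moy{k})$ for a finite Poisson network of size $N$ and average degree $z$.

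First I would invoke Lemma~\ref{lem_moments_finis_er}, which gives the bounded Poisson moments
$$
\moy{k} = \frac{N}{N-1}\somme{k=0}{K}\frac{e^{-z}z^k}{(k-1)!},\quad \moy{k^2} = \frac{N}{N-1}\somme{k=0}{K}\frac{k e^{-z}z^k}{(k-1)!},
$$
with $K$ determined by Lemma~\ref{lem_K_er}. Then I would form the difference $\moy{k^2}-\moy{k}$. Since $k/(k-1)! - 1/(k-1)! = (k-1)/(k-1)! = 1/(k-2)!$, this difference collapses to
$$
\moy{k^2}-\moy{k} = \frac{N}{N-1}\somme{k=0}{K}\frac{e^{-z}z^k}{(k-2)!}.
$$

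Finally I would take the ratio $\moy{k}/(\moy{k^2}-\moy{k})$: the common prefactor $(N/(N-1))e^{-z}$ cancels between numerator and denominator, leaving the clean quotient $\somme{k=0}{K} z^k/(k-1)! \big/ \somme{k=0}{K} z^k/(k-2)!$. Squaring this and subtracting from $1$, per Corollary~\ref{cor_psa_pc}, yields exactly the claimed expression for $m(p_c)$.

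There is no real obstacle here since both ingredients (the general threshold formula and the bounded Poisson moments) are already established. The only minor subtlety to flag is the implicit convention that terms with $k<1$ or $k<2$ contribute zero in the respective sums (owing to the $1/(k-1)!$ and $1/(k-2)!$ factors being treated as $0$ when the factorial argument is negative), so that the stated upper limit $K$ is the only binding one; this matches exactly the convention used in the statement of Lemma~\ref{lem_moments_finis_er}, so no extra approximation is introduced beyond those already present in that lemma.
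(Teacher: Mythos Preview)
Your proof is correct and follows essentially the same route as the paper, which simply cites Corollaries~\ref{cor_ps_ER_finis} and~\ref{cor_psa_pc}. The only cosmetic difference is that you inline the computation of the ratio $\moy{k}/(\moy{k^2}-\moy{k})$ from Lemma~\ref{lem_moments_finis_er}, whereas the paper delegates that step to Corollary~\ref{cor_ps_ER_finis} where the same ratio already appears.
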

\begin{proof}{
Direct application of Corollaries~\ref{cor_ps_ER_finis} and~\ref{cor_psa_pc}.
}\end{proof}

\begin{corollaire}
\label{cor_psa_ER}
\forPoisson{},
the fraction of links removed at the threshold $p_c$ for
random node failures is
$$
m(p_c) = 1 - \frac{1}{z^2}.
$$
\end{corollaire}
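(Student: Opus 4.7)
The plan is to derive this as an immediate specialization of the general result in Corollary~\ref{cor_psa_pc} to the Poisson case, using the moments already computed in Lemma~\ref{lem_moments_er}.

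First, I would recall that for a Poisson distribution with average $z$, Lemma~\ref{lem_moments_er} gives $\moy{k}=z$ and $\moy{k^2}=z^2+z$. Substituting these into Corollary~\ref{cor_psa_pc}, the ratio appearing there simplifies as $\moy{k}/(\moy{k^2}-\moy{k}) = z/(z^2+z-z) = 1/z$, so that $m(p_c) = 1 - (1/z)^2 = 1 - 1/z^2$.

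Equivalently, I could chain two previously established facts: Corollary~\ref{cor_ps_ER} gives $p_c = 1 - 1/z$ for Poisson networks, and Proposition~\ref{prop_psa_mp} gives the general rewriting $m(p) = 2p - p^2 = 1 - (1-p)^2$. Plugging $1 - p_c = 1/z$ into the latter yields the same result $1 - 1/z^2$. Either route is a one-line substitution; there is no real obstacle, and the corollary is purely a packaging of earlier results.
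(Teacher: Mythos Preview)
Your proposal is correct and matches the paper's own proof, which simply cites Corollaries~\ref{cor_ps_ER} and~\ref{cor_psa_pc} as a direct application. Both routes you sketch are exactly the intended one-line substitutions, and there is nothing more to add.
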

\begin{proof}{
Direct application of Corollaries~\ref{cor_ps_ER} and~\ref{cor_psa_pc}.
}\end{proof}

\begin{corollaire}
\label{cor_psa_sf_finis}
\forSfFinis{},
the fraction of links removed at the threshold $p_c$ for random node
failures is
$$
m(p_c) =
\left\{
\begin{array}{lr}
 1 - \left[
\frac{2-\alpha}{3-\alpha}m -1 \right]^{-2} &\alpha > 3\\
 1 -
\left[\frac{2-\alpha}{\alpha-3} m N^{\frac{3-\alpha}{\alpha-1}}
     - 1\right]^{-2}
      & 2<\alpha<3\\
 1 -
\left[ \frac{2-\alpha}{3-\alpha} mN^{\frac{1}{\alpha-1}} -1 \right]^{-2}
    & 1<\alpha<2.
\end{array}
\right.
$$
\end{corollaire}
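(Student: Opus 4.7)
The plan is purely algebraic: chain together the two previously proved results. From Corollary~\ref{cor_psa_pc} we know that, at the node-failure threshold $p_c$, the fraction of removed links equals
\[
m(p_c) \;=\; 2p_c - p_c^2 \;=\; 1 - (1-p_c)^2,
\]
so it suffices to compute $1-p_c$ in each regime and square it. Corollary~\ref{cor_ps_sf_finis} already supplies the exact expression for $p_c$ in the continuous power-law case with $N$ nodes, and that expression is conveniently written in the form $p_c = 1 - [\,\cdot\,]^{-1}$, meaning that $(1-p_c)$ is already isolated as a reciprocal.

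First I would rewrite Corollary~\ref{cor_ps_sf_finis} as three branches giving $1-p_c$ directly: for $\alpha>3$, $1-p_c = \bigl[\tfrac{2-\alpha}{3-\alpha}m - 1\bigr]^{-1}$; for $2<\alpha<3$, $1-p_c = \bigl[\tfrac{2-\alpha}{\alpha-3} m N^{(3-\alpha)/(\alpha-1)} - 1\bigr]^{-1}$; and for $1<\alpha<2$, $1-p_c = \bigl[\tfrac{2-\alpha}{3-\alpha} m N^{1/(\alpha-1)} - 1\bigr]^{-1}$. Next, I would substitute each of these three expressions into $m(p_c) = 1-(1-p_c)^2$, which simply squares the bracketed quantities and yields exactly the three branches claimed in the statement.

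There is no genuine obstacle: the only thing to check is that the three expressions for $p_c$ used from Corollary~\ref{cor_ps_sf_finis} are precisely the ones needed for the three regimes here, so no separate approximations are introduced beyond those already made in establishing Corollaries~\ref{cor_ps_sf_finis} and~\ref{cor_psa_pc}. In particular, the finite-size character of the bound (through the dependence on $N$ via $K = mN^{1/(\alpha-1)}$ from Lemma~\ref{lem_K_sf}) is inherited verbatim from Corollary~\ref{cor_ps_sf_finis}, so nothing further needs to be tracked. The entire proof therefore reduces to the one-line invocation: apply Corollaries~\ref{cor_ps_sf_finis} and~\ref{cor_psa_pc}.
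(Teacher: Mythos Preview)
Your proposal is correct and matches the paper's own proof exactly: the paper simply states ``Direct application of Corollaries~\ref{cor_ps_sf_finis} and~\ref{cor_psa_pc},'' which is precisely the one-line invocation you arrive at after unpacking the algebra.
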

\begin{proof}{
Direct application of Corollaries~\ref{cor_ps_sf_finis} and~\ref{cor_psa_pc}.
}\end{proof}

\begin{corollaire}
\label{cor_psa_sf}
\forSf{},
the fraction of links removed at the threshold $p_c$ for random node
failures is
$$
m(p_c) =
\left\{
\begin{array}{lr}
 1 - \left[
\frac{2-\alpha}{3-\alpha}m -1 \right]^{-2} &\alpha > 3\\
 1       & 1<\alpha<3.
\end{array}
\right.
$$
\end{corollaire}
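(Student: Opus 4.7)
The plan is to derive this statement as an immediate consequence of two previously established results, without needing any new computation beyond substitution. Specifically, Proposition~\ref{prop_psa_mp} tells us that after removing a fraction $p$ of nodes at random, the fraction of removed links is $m(p) = 2p - p^2 = 1 - (1-p)^2$. Meanwhile, Corollary~\ref{cor_ps_sf} already gives the value of the node-removal threshold $p_c$ for continuous power-law networks in the infinite-size limit: namely $p_c = 1 - [\frac{2-\alpha}{3-\alpha}m - 1]^{-1}$ for $\alpha > 3$, and $p_c = 1$ for $1 < \alpha < 3$. So the only work is to plug one into the other.

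First I would handle the regime $\alpha > 3$. Here $1 - p_c = [\frac{2-\alpha}{3-\alpha}m - 1]^{-1}$, so squaring yields $(1-p_c)^2 = [\frac{2-\alpha}{3-\alpha}m - 1]^{-2}$, and then $m(p_c) = 1 - (1-p_c)^2 = 1 - [\frac{2-\alpha}{3-\alpha}m - 1]^{-2}$, which matches the top line of the claimed formula.

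Next I would treat the regime $1 < \alpha < 3$. Here $p_c = 1$, so $m(p_c) = 2 \cdot 1 - 1^2 = 1$, matching the bottom line of the claimed formula. One could equivalently derive the statement as the $N \to \infty$ limit of Corollary~\ref{cor_psa_sf_finis}: in the case $\alpha > 3$, no $N$-dependence remains in the bracketed expression, so the limit is immediate, while in the two regimes $2 < \alpha < 3$ and $1 < \alpha < 2$ the factors $N^{(3-\alpha)/(\alpha-1)}$ and $N^{1/(\alpha-1)}$ diverge, so the bracketed terms blow up and their reciprocals vanish, giving $m(p_c) \to 1$.

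There is no real obstacle here: the statement is essentially a rescaling of Corollary~\ref{cor_ps_sf} through the fixed functional relation $m(p) = 1 - (1-p)^2$ coming from Proposition~\ref{prop_psa_mp}. The only care needed is to check that the two subcases $2 < \alpha < 3$ and $1 < \alpha < 2$ collapse to the same answer $m(p_c) = 1$ in the limit, which they do because the corresponding $p_c$ both tend to $1$.
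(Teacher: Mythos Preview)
Your proof is correct and follows essentially the same route as the paper, which simply writes ``Direct application of Corollaries~\ref{cor_ps_sf} and~\ref{cor_psa_pc}.'' Your use of Proposition~\ref{prop_psa_mp} in place of Corollary~\ref{cor_psa_pc} amounts to the same thing, since the latter is just the former specialized at $p=p_c$; your additional remark that one can alternatively pass to the limit in Corollary~\ref{cor_psa_sf_finis} is also valid and consistent with how the paper handles the analogous node-threshold corollaries.
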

\begin{proof}{
Direct application of Corollaries~\ref{cor_ps_sf}
and~\ref{cor_psa_pc}.
}\end{proof}

\begin{corollaire}
\label{cor_psa_vsf_finis}
\forVsfFinis{},
the fraction of links removed at the threshold $p_c$ for random node
failures is given by
$$m(p_c) = 1-
\left(\frac{\harmo{K}{\alpha-1}}
{\harmo{K}{\alpha-2} - \harmo{K}{\alpha-1}}
\right)^2,
$$
\defharmo{},
and $K$ is the maximal degree or the network, related to $N$
by Lemma~\ref{lem_K_vsf}.
\end{corollaire}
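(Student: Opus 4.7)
The plan is to observe that this statement is precisely the discrete power-law instantiation of the general formula for $m(p_c)$ obtained in Corollary~\ref{cor_psa_pc}, combined with the discrete power-law threshold already computed in Corollary~\ref{cor_ps_vsf_finis}. So the proof will be one line, in the same spirit as Corollaries~\ref{cor_psa_ER_finis} through~\ref{cor_psa_sf_finis}.

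More concretely, Corollary~\ref{cor_psa_pc} tells us that for any random network, $m(p_c) = 1 - \left(\moy{k}/(\moy{k^2}-\moy{k})\right)^2$, or equivalently $m(p_c) = 1 - (1-p_c)^2$. First I would invoke Corollary~\ref{cor_ps_vsf_finis}, which gives
$$1 - p_c = \frac{\harmo{K}{\alpha-1}}{\harmo{K}{\alpha-2} - \harmo{K}{\alpha-1}},$$
with $K$ the expected maximal degree determined by Lemma~\ref{lem_K_vsf}. Then I would simply square this quantity and subtract from $1$ to recover the claimed expression for $m(p_c)$.

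There is no real obstacle here: all approximations (mean-field, truncated distribution, evaluation of $K$) have already been absorbed into the earlier corollaries, and Proposition~\ref{prop_psa_mp} only uses the randomness of the removed nodes, which still holds in the finite discrete power-law case. The proof will therefore read as a single sentence stating that it is a direct application of Corollaries~\ref{cor_ps_vsf_finis} and~\ref{cor_psa_pc}, matching the style of the preceding corollaries in Section~\ref{sec_psa}.
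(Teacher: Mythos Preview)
Your proposal is correct and matches the paper's proof exactly: the paper's proof is the single line ``Direct application of Corollaries~\ref{cor_ps_vsf_finis} and~\ref{cor_psa_pc}.'' Your explanation of why this works (square $1-p_c$ from Corollary~\ref{cor_ps_vsf_finis} and plug into $m(p_c)=1-(1-p_c)^2$ from Corollary~\ref{cor_psa_pc}) is precisely the intended reasoning.
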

\begin{proof}{
Direct application of Corollaries~\ref{cor_ps_vsf_finis} and~\ref{cor_psa_pc}.
}\end{proof}

\begin{corollaire}
\label{cor_psa_vsf}
\forVsf{},
the fraction of links removed at
the threshold $p_c$ for random node failures is
$$
m(p_c) = 1-\left(\frac{\zeta(\alpha-1)}{\zeta(\alpha-2) - \zeta(\alpha-1)}\right)^2.
$$
\end{corollaire}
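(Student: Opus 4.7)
The plan is to derive this as a direct combination of two results proved earlier in the paper, mirroring the pattern used for the analogous corollaries in the previous subsection (for Poisson, continuous power-law, and finite discrete power-law networks).

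First, I would invoke Corollary~\ref{cor_ps_vsf}, which already gives the threshold $p_c$ for random node failures on discrete power-law networks with size tending towards infinity as
$$p_c = 1 - \frac{\zeta(\alpha-1)}{\zeta(\alpha-2)-\zeta(\alpha-1)}.$$
This is the only distribution-specific input needed.

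Next, I would apply Proposition~\ref{prop_psa_mp}, which states that in a large random network, after removing a fraction $p$ of the nodes, the fraction of removed links is $m(p) = 2p - p^2 = 1-(1-p)^2$. Substituting $p = p_c$ gives $m(p_c) = 1 - (1-p_c)^2$, and plugging in the expression for $1-p_c$ from Corollary~\ref{cor_ps_vsf} yields exactly the claimed formula. Equivalently, one can invoke Corollary~\ref{cor_psa_pc} directly, using $\moy{k} = \zeta(\alpha-1)/\zeta(\alpha)$ and $\moy{k^2} = \zeta(\alpha-2)/\zeta(\alpha)$ from Lemma~\ref{lem_moments_vsf}, so that
$$\frac{\moy{k}}{\moy{k^2}-\moy{k}} = \frac{\zeta(\alpha-1)}{\zeta(\alpha-2)-\zeta(\alpha-1)},$$
and the $\zeta(\alpha)$ normalizations cancel.

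There is no real obstacle: both ingredients are already established, and the corollary is just a packaging step. The only thing worth flagging is that the formula is meaningful only in the regime where Lemma~\ref{lem_dpl_conn} guarantees the existence of a giant component (i.e., $\zeta(\alpha-2)/\zeta(\alpha-1) > 2$, numerically $\alpha < 3.48$); outside this range, $p_c \le 0$ and the statement has no content, just as was noted for Theorem~\ref{th_ps_general}.
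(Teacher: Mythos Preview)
Your proposal is correct and matches the paper's own proof, which simply says ``Direct application of Corollaries~\ref{cor_ps_vsf} and~\ref{cor_psa_pc}.'' Your two described routes (via Proposition~\ref{prop_psa_mp} or via Corollary~\ref{cor_psa_pc}) are equivalent, since the latter is an immediate consequence of the former, and the added remark about the range of $\alpha$ for which the formula is meaningful is a reasonable caveat that the paper omits here.
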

\begin{proof}{
Direct application of Corollaries~\ref{cor_ps_vsf} and~\ref{cor_psa_pc}.
}\end{proof}

We plot numerical evaluations of these results in
Figure~\ref{fig_psa_seuil}, together with experimental results.
We also give in Table~\ref{tab_psa} the thresholds for specific
values of the exponent and the average degree.

\begin{figure}[!h]
\begin{center}
\includetroisres{psa_seuil_0.05.eps}
\end{center}
\caption{
\label{fig_psa_seuil}
Thresholds for the link point of view of random node failures.
\lefttoright
\refplotsseuil
}
\end{figure}

\degc{psa}{the link point of view of random node failures}{1 & 0.93 & 0.85 & 0.83 & 1 & 0.89 & 0.72 & 0.69}{1 & 0.80 & 0.61 & 0.57 & 1 & 0.54 & 0.49 & 0.44}

As expected, these results are not qualitatively different from what
is observed from the node point of view. Again, power-law networks
are more resilient than Poisson ones, but the difference  in practice
is not as important as in the predictions.

Notice also that
the fraction of removed links is significantly larger at the threshold
than the fraction of removed nodes. This is a simple consequence of the
fact that removing a node leads to the removal of both its stubs and
some of its neighbors, as explained in the proof of
Proposition~\ref{prop_psa_mp}.

\subsection{Random link failures.}
\label{sec_pa}

Until now we observed the behavior of random network when {\em nodes}
are randomly removed, both from the nodes  and from the link points
of view. One may then wonder what happens when we remove {\em links}
rather than nodes, still at random. This may model link failures, just
like random removal of nodes models node failures.

Typical behaviors for each type of random networks under concern, when
one randomly removes links, are plotted in
Figure~\ref{fig_pa_cc}. Just like in the case of random node failures
(see Figure~\ref{fig_ps_cc}), there is a qualitative difference
between Poisson and power-law networks. Going further, the plots are
very similar to the ones for node failures. We will see that the
formal results for both cases are indeed identical.

\begin{figure}[!h]
\begin{center}
\includetroisres{10000_pa_cc.eps}
\end{center}
\caption{
\label{fig_pa_cc}
Size of the largest connected component as a function of the
fraction of randomly removed links.
\lefttoright
\refplotscc
}
\end{figure}

Again, in this section we will first prove a  general result which
we apply to the three cases under concern. We then compare formal
results to experiments, and discuss them.

\subsubsection{General results}

The goal of this section is to prove that the the threshold $m_c$ for
random link failures actually is the same as the one for
random {\em node} failures (see Theorem~\ref{th_ps_general}):

\begin{theoreme} \cite{Newman2000Robustness,Cohen2001Attack}
\label{th_pa_general}
The threshold $m_c$ for random link failures in large random networks
with degree distribution $p_k$
is
$$
m_c = 1 - \frac{\moy{k}}{\moy{k^2} - \moy{k}}.
$$
\end{theoreme}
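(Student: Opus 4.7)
The plan is to adapt either of the two proofs of Theorem~\ref{th_ps_general} to the link-failure setting; both yield the same answer, confirming that the threshold formula is unchanged. I will sketch both routes, since each sheds light on why the coincidence holds.

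Following the first approach, I would compute the degree distribution $p_k(m)$ induced by removing each link independently with probability $m$. Since every link incident to a node of initial degree $k_0$ is retained independently with probability $1-m$, the new degree of that node is binomial with parameters $k_0$ and $1-m$, and so
$$
p_k(m) = \sum_{k_0=k}^{\infty} p_{k_0} \binom{k_0}{k} (1-m)^k m^{k_0-k},
$$
which is precisely the expression of Lemma~\ref{lem_ps_pk} with $p$ replaced by $m$. The computation of moments in Proposition~\ref{prop_ps_moments} then carries over verbatim, yielding
$$
\moy{k(m)} = (1-m)\moy{k}, \qquad \moy{k^2(m)} = (1-m)^2\moy{k^2} + m(1-m)\moy{k}.
$$
Plugging these into the Molloy--Reed criterion of Theorem~\ref{thseuil} and solving $\moy{k^2(m_c)} - 2\moy{k(m_c)} = 0$ yields the claimed value of $m_c$.

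The second approach is even more direct in this setting. Under the marking scheme of Section~\ref{sec_mfa} applied to links, the randomly chosen stub lies on an unmarked link with probability $1-m$, and conditionally the node reached at the other end has $k$ other stubs with probability $q_k$. Since these two events are independent, the generating function of interest is
$$
F_1(x) = (1-m)\, G_1(x).
$$
Applying Theorem~\ref{th-seuil-newman}, $m_c$ is the unique value satisfying $F_1'(1) = (1-m_c)\, G_1'(1) = 1$, and using $G_1'(1) = (\moy{k^2} - \moy{k})/\moy{k}$ (as derived in the proof on page~\pageref{preuve2_ps}) gives the announced expression.

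The only genuinely delicate point is conceptual: why do link failures and node failures yield exactly the same threshold formula? The reason is that, in both cases, a surviving node of initial degree $k_0$ ends up with a binomial number of links leading to surviving nodes, with the same parameters once $p$ and $m$ are identified. For link failures this independence across the node's stubs is literal, while for node failures it rests on the mean-field assumptions neutralizing correlations between the removal events of distinct neighbors. Thus I do not expect any new obstacle beyond what was already handled for node failures; if anything the link-failure derivation is the cleaner of the two, since the binomial thinning is exact and the generating-function argument reduces to a one-line computation.
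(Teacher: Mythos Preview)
Your proposal is correct and follows essentially the same two routes as the paper: first deriving the post-removal degree distribution (your formula is exactly Lemma~\ref{lem_pa_pkm}) and observing it coincides with Lemma~\ref{lem_ps_pk}, then separately writing $F_1(x)=(1-m)G_1(x)$ and applying Theorem~\ref{th-seuil-newman}. The added conceptual remark explaining why node and link failures give the same threshold is a nice bonus not spelled out in the paper.
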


Just as we did for Theorem~\ref{th_ps_general}, we will present the
main ways to derive this result, as described in~\cite{Newman2000Robustness,Newman2003Handbook} and in~\cite{Cohen2001Attack,Cohen2003Handbook}.
They are very similar to the ones for
Theorem~\ref{th_ps_general} therefore we will present them in less  detail.

\bigskip

The first proof is based on the fact that, as explained in the preliminaries,
a random network in which one randomly removes links may still be viewed as
a random network, with another degree distribution.
Therefore, one can use the criterion given in Theorem~\ref{thseuil} with
this new degree distribution to decide if the network still has a giant
component. We therefore begin with the computation of the new degree
distribution.

\begin{lemme} \cite{Cohen2001Attack}
\label{lem_pa_pkm}
In a large random network with degree distribution $p_k$, after the
removal of a fraction $m$ of the links during random link failures the
degree distribution \pk{m} is given by
$$
\pk[k]{m} = \sum_{k_0=k}^{\infty} p_{k_0} {k_0 \choose k}(1-m)^{k} m^{k_0-k}.
$$
\end{lemme}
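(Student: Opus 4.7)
The plan is to argue in close analogy with the proof of Lemma~\ref{lem_ps_pk}, replacing the removal of neighbors by the removal of incident links. The starting point is that, in the large network limit, selecting a uniformly random fraction $m$ of the links is equivalent---up to mean-field approximations of the kind already used throughout Section~\ref{sec_mfa}---to deleting each link independently with probability $m$; the discrepancy between these two procedures vanishes as $N$ tends to infinity.

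With this independence in hand, I would condition on the original degree $k_0$ of a given (surviving) node. Given $k_0$, the number of its incident links that survive is the sum of $k_0$ independent Bernoulli trials each with survival probability $1-m$, and is therefore binomially distributed. The probability that this node has final degree exactly $k$ is then ${k_0 \choose k}(1-m)^k m^{k_0-k}$, valid for $0 \le k \le k_0$. Applying the law of total probability, summing over original degrees $k_0$ weighted by the prior distribution $p_{k_0}$, and noting that the final degree cannot exceed the original one so that the sum begins at $k_0 = k$, yields
$$p_k(m) = \sum_{k_0=k}^{\infty} p_{k_0} {k_0 \choose k} (1-m)^k m^{k_0-k}.$$

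The main conceptual obstacle is justifying the passage from ``remove a uniformly chosen set of $m\,|E|$ links'' to ``remove each link independently with probability $m$''. The two procedures are not strictly equivalent on a finite network, but the induced marginal distribution on the degree of any fixed node agrees in the large-$N$ limit, which is precisely the mean-field framework in which this paper operates. It is worth remarking that the resulting formula is formally identical to that of Lemma~\ref{lem_ps_pk}, although the underlying mechanism differs: there a neighbor disappears with probability $p$, whereas here a link itself disappears with probability $m$. This formal coincidence is precisely what will make Theorem~\ref{th_pa_general} yield the same threshold as Theorem~\ref{th_ps_general}.
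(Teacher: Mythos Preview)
Your proposal is correct and follows essentially the same approach as the paper. The paper frames the argument via stubs---``removing randomly a fraction $m$ of the links corresponds to removing randomly a fraction $m$ of the stubs''---while you work directly with links being deleted independently with probability $m$; but from the viewpoint of a fixed node these are the same event, and both arguments then condition on the original degree $k_0$ and read off the binomial survival probability ${k_0 \choose k}(1-m)^k m^{k_0-k}$ before summing over $k_0$.
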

\begin{proof}{
Removing randomly a fraction $m$ of the links corresponds to removing
randomly a fraction $m$ of the stubs. If a given node has degree $k_0$
before the removals, then the probability that its degree becomes
$k'\le k_0$ is ${k_0\choose  k'} (1-m)^{k'} m^{k_0-k'}$.
Indeed, $k_0 - k'$ of its stubs have been removed, with probability
$m^{k_0-k'}$, and $k'$ of its stubs are still present, with probability
$(1-m)^{k'}$. The result follows.
}\end{proof}

\noindent
This leads to the first proof of Theorem~\ref{th_pa_general}:\\
\begin{proof}{
Notice that Lemma~\ref{lem_pa_pkm} actually is nothing but a direct
rewriting of Lemma~\ref{lem_ps_pk} on random node failures.
Therefore Theorem~\ref{th_pa_general} is derived from Lemma~\ref{lem_pa_pkm}
and Theorem~\ref{thseuil} in exactly the same way as Theorem~\ref{th_ps_general}
is derived from Lemma~\ref{lem_ps_pk} and Theorem~\ref{thseuil}.
}\end{proof}

\bigskip

The other method used to obtain this result~\cite{Newman2000Robustness}
 relies on generating functions.
As explained in the preliminaries, each link is marked as {\em present} with
probability $1-m$, and {\em absent} with probability $m$.

Recall that $F_1(x)$ is the generating function for the probability that,
when following a random link, this link is unmarked (\ie{} present) and leads to
a node with $k$ other (marked or unmarked) links emanating from it.
In our case, $F_1(x)$ therefore is
$$F_1(x) = \somme{k=0}{\infty} (1-m)q_kx^k = (1-m)G_1(x).$$

\noindent
This leads us to the second proof of Theorem~\ref{th_pa_general}:\\
\begin{proof}{
Again, the generating function $F_1(x)$ obtained here is exactly the
same as the one obtained in Section~\ref{F1_ps}, page~\pageref{F1_ps},
for random node failures.
The proof therefore is the same as the proof for Theorem~\ref{th_ps_general},
page~\pageref{preuve2_ps}.
}\end{proof}

\subsubsection{The cases of Poisson and power-law networks}

Since theoretical results for random link failures are the same as
those for random node failures, we focus in this section on
experimental results. See Figure~\ref{fig_pa_seuil} and
Table~\ref{tab_pa}.

\begin{figure}[!h]
\begin{center}
\includetroisres{pa_seuil_0.05.eps}
\caption{
\label{fig_pa_seuil}
Thresholds for random link failures.
\lefttoright
\refplotsseuil
}
\end{center}
\end{figure}

\degc{pa}{random link failures}{1 & 0.90 & 0.62 & 0.60 & 1 & 0.84 & 0.47 & 0.46}{1 & 0.67 & 0.38 & 0.35 & 1 & 0.41 & 0.29 & 0.27}

In principle, these plots and values should be exactly the same as
the ones in Figure~\ref{fig_ps_seuil} and in Table~\ref{tab_ps}.
This is true for the analytic previsions, but experiments differ
significantly, which deserves more discussion.

When we consider the
size of the largest connected component as a function of the fraction
of removed nodes/links, see Figures~\ref{fig_ps_cc} and~\ref{fig_pa_cc},
then it appears clearly that, though the plot seems to reach $0$ at the
same fraction, they do not have the same shape.
Since we chose to define the threshold as the value for which the
largest connected component reaches $5\,\%$ of the total number of nodes,
the different shapes give
different experimental thresholds.

As explained in Section~\ref{sec_plots}, the $5\,\%$ value is somewhat
arbitrary, but we insist on the fact that, in the case of power-law
networks considered in these experiments,  the other main method for
computing the threshold cannot be applied: in  several cases,
the slope of the plots for
these networks is always decreasing (see Section~\ref{sec_plots}).

Finally, the same conclusions as the ones for random node failures
hold:
power-law networks are more resilient to random link failures than
Poisson ones, but the difference in practice is not as striking
as predicted by the results for the infinite limit.

%

\subsection{Conclusion on random failures.}

Two main formal conclusions have been reached  in this section
concerning the case where the size of the network tends towards
infinity.
First, as expected from the empirical results discussed in the
introduction, Poisson and power-law networks behave qualitatively
differently in case of (node or link) random failures: whereas
Poisson networks display a clear threshold, in power-law ones all the
nodes or links have to be removed to achieve a breakdown.
Second, what happens in case of random link failures is very
similar, if not identical, to what happens in case of random node
failures. On the other hand, link point of view does not change
the observations qualitatively but the fraction of removed links
at the threshold is significantly larger than the fraction of
removed nodes.
This also means that the thresholds for the link point of view of
random node failures are larger than the thresholds for random link
failures.

The qualitative difference between Poisson and power-law networks
leads to the conclusion that power-law networks are much more resilient
to random failures. This may be used in the design of large scale
networks, and it may also be seen as an explanation of the fact that
real-world networks like the internet or biological networks seem
very resilient to random errors. This has been widely argued in the
literature, see for instance~\cite{dorogovtsev2000NetworksBook,barabasi02linked}.

These results however concern only the limit case where the size of
the network tends towards infinity. When one considers networks of a
given size $N$, even for very large values of $N$, then the difference
between Poisson and power-law
networks often is much less striking than predicted. This is even clearer when one
considers the link point of view.


\section{Resilience to attacks.}
\label{sec_a}
The aim of this section is to study the resilience of random networks
to targeted attacks.
In our context, an attack on a network consists of a series of node or
link removals, like failures. The difference lies in the
fact that the removals are {\em not} random anymore; instead, the
nodes or links to remove are chosen according to a {\em strategy}.

Obviously, one may define many different strategies, and failures
themselves could be considered as attacks where the strategy consists
of choosing randomly.
More subtle strategies can however be
much more efficient to destroy a network.
We already presented such a  strategy,  defined in the initial
paper on the topic \cite{albert00error}, which received since then much attention. It consists
of the removal of nodes in decreasing
order of their degree;
we will call this strategy a {\em classical attack}.

We will first consider these classical attacks
(Section~\ref{sec_as}) on general random networks, and then
apply the obtained results to Poisson and power-law networks.
In order to deepen our understanding of classical attacks, we will
consider these attacks from the {\em link}
point of view (Section~\ref{sec_asa}): what fractions of the {\em links} are removed during
classical attacks?
We will also introduce new attack strategies (both on nodes and on
links) to give deeper understanding on classical attacks (Section~\ref{sec_na}).
Finally we will conclude this section with a detailed discussion on
the efficiency of classical attacks, as well as
other attack strategies
(Section~\ref{sec_a_conclu}).

\subsection{Classical attacks.}
\label{sec_as}

In this section, we first present a general result on classical attacks,
independent of the type of underlying network, as long as it is a {\em
random} network. We detail the two main proofs proposed for this result
in \cite{Newman2000Robustness,Newman2003Handbook} and~\cite{Cohen2001Attack,Cohen2003Handbook}.
We then apply this general result to the special cases under concern:
Poisson and power-law (both discrete and continuous versions) networks.
Figure~\ref{fig_as_cc} displays the behaviors observed for these three
types of networks.

\begin{figure}[!h]
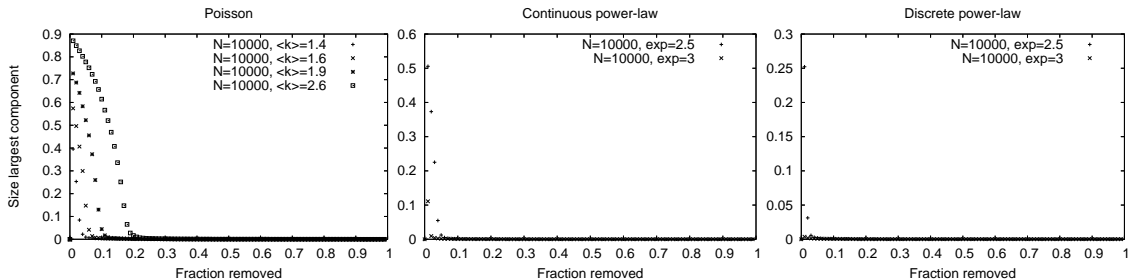

\begin{center}
\includetroisres{10000_as_cc.eps}
\end{center}
\caption{
Size of the largest connected component as a function of the
fraction of nodes removed during classical attacks.
\lefttoright
\refplotscc
}
\label{fig_as_cc}
\end{figure}

As explained in the preliminaries, there is no fundamental difference
in the behaviors of Poisson and power-law networks in case
of classical attacks:
in both cases the largest connected component is quickly destroyed.
It is important however
to notice that Poisson networks are significantly more resilient than
power-law ones. The aim of this section is to formally confirm these
observations, and give both formal and intuitive explanations.

\subsubsection{General results}

Our aim here is to prove the following general result,
which gives the value of the threshold for classical attacks.

\begin{theoreme} \cite{Newman2000Robustness,Cohen2001Attack}
\label{th_as_general}
The threshold $p_c$ for classical attacks in random networks,
with size tending towards infinity and degree distribution $p_k$,
is given by
$$
\frac{\sum_{k=0}^{K(p_c)} k(k-1)p_k}{\moy{k}} = 1,
$$
where $K(p_c)$ is the maximal degree in the network after the attack,
related to $p_c$ by Lemma~\ref{lem_as_Kp}.
\end{theoreme}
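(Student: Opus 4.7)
The plan is to mirror the strategy used for Theorem~\ref{th_ps_general} by reducing a classical attack to a two-stage operation on a random network, and then applying the giant-component criterion of Theorem~\ref{thseuil}. In stage one, removing the fraction $p$ of nodes with the highest degrees is exactly truncating the degree distribution at a cutoff $K(p)$ satisfying $\sum_{k>K(p)} p_k = p$, which is the content of the (assumed) Lemma~\ref{lem_as_Kp}. Stage two accounts for the fact that surviving nodes also lose every stub that pointed to a removed high-degree neighbor; in the mean-field regime, where stubs pair up without degree correlations, from the viewpoint of a surviving node this stub loss is indistinguishable from an independent random link deletion with parameter
\[
\tilde p \;=\; \frac{\sum_{k>K(p)} k\,p_k}{\moy{k}}.
\]

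With the attack recast this way, I would compute the first two moments of the effective post-attack degree distribution restricted to surviving nodes: the truncated moments $\sum_{k\le K(p)} k\,p_k$ and $\sum_{k\le K(p)} k^2\,p_k$, rescaled by the link-deletion transformation of Proposition~\ref{prop_ps_moments} applied with parameter $\tilde p$. Imposing the breakdown condition $\moy{k_{\text{new}}^2} = 2\moy{k_{\text{new}}}$ from Theorem~\ref{thseuil}, substituting $1-\tilde p = \sum_{k\le K(p)} k\,p_k / \moy{k}$, and cancelling a common factor of $\sum_{k\le K(p)} k\,p_k$ should collapse the condition to $\sum_{k=0}^{K(p_c)} k(k-1)\,p_k = \moy{k}$, which is the claim. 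As an alternative, one could follow the Newman--Strogatz route by assembling the generating function $F_1(x)$ for reaching an unmarked node through a random stub under this same two-stage picture, and then invoking Theorem~\ref{th-seuil-newman} in the form $F_1'(1)=1$; the differentiation should reproduce the same identity.

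The main obstacle I anticipate is not the algebra but the mean-field justification of stage two: treating the post-truncation stub loss as an \emph{independent} random link removal requires that, after the targeted node removal, the stubs still attached to surviving nodes be statistically exchangeable with a uniform random subset of all stubs. In the configuration model this holds because the stub pairing is uniform and independent of the degree labels, but one must be explicit that the deterministic rule ``remove top-degree nodes'' does not smuggle in correlations through the back door --- precisely the absence of degree-degree correlation that underlies Theorem~\ref{thseuil} and is discussed in Section~\ref{sec_mfa}. Once this point is accepted in the mean-field framework, both derivations should proceed without further subtlety.
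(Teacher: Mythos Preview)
Your proposal is correct and follows essentially the same route as the paper's first proof (the Cohen approach): your $\tilde p$ is exactly the quantity $s(p)$ of Lemma~\ref{lem_as_sp}, and your plan of applying the moment transformation of Proposition~\ref{prop_ps_moments} followed by the criterion of Theorem~\ref{thseuil} to the truncated distribution is precisely what the paper does, except that it packages this step as a single invocation of Theorem~\ref{th_pa_general}. Your alternative generating-function sketch also matches the paper's second proof, where $F_1(x)=\frac{1}{\moy{k}}\sum_{k=1}^{K(p)} k p_k x^{k-1}$ and the condition $F_1'(1)=1$ yields the claim directly.
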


As in the case of failures, there are two main ways to derive
this result, proposed in \cite{Newman2000Robustness,Newman2003Handbook} and~\cite{Cohen2001Attack,Cohen2003Handbook}.
They both rely on the following result which concerns the maximal
degree of random networks after removal of a fraction $p$ of the
nodes during a classical attack, which we denote by $K(p)$.

\begin{lemme} \cite{Cohen2001Attack}
\label{lem_as_Kp}
In a random network with size tending towards infinity
and degree distribution $p_k$,
after removal of a fraction $p$ of the nodes during
a classical attack the maximal degree $K(p)$ is given by
$$p = 1 - \somme{k=0}{K(p)} p_k.$$
\end{lemme}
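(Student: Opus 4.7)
The plan is to directly unpack the definition of a classical attack and count. Since nodes are deleted in strictly decreasing order of degree, after a fraction $p$ of the nodes has been removed there is a well-defined largest remaining degree, which I will call $K(p)$. By the very definition of the attack strategy, no node of degree strictly greater than $K(p)$ can still be present, and at least one node of degree exactly $K(p)$ is still present (otherwise $K(p)$ would not be the new maximum).

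Next I would translate this into a statement about fractions of nodes. In a random network with degree distribution $p_k$, the fraction of nodes with degree strictly greater than $K(p)$ is $\sum_{k=K(p)+1}^{\infty}p_k = 1 - \sum_{k=0}^{K(p)}p_k$; in the $N\to\infty$ limit this is exact, since the empirical degree distribution converges to $p_k$. All of these nodes, and only these nodes together with possibly some nodes of degree exactly $K(p)$, have been removed. Hence
\[
1 - \sum_{k=0}^{K(p)} p_k \;\le\; p \;\le\; 1 - \sum_{k=0}^{K(p)-1} p_k.
\]
In the mean-field / large-$N$ framework used throughout the paper, we identify $p$ with its lower bound, obtaining $p = 1 - \sum_{k=0}^{K(p)}p_k$, which is the claim.

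The only delicate point is the treatment of the nodes of degree exactly $K(p)$: strictly speaking we may have removed some of them in order to reach the fraction $p$ exactly, so the relation above is really a defining identity up to a discrete ``jump'' of size $p_{K(p)}$. In the spirit of the approximations explicitly pointed out in Section~\ref{sec_mfa} (neglecting corrections that vanish as the size tends to infinity, and treating the continuous variable $p$ as always synchronised with an integer $K(p)$), this jump is absorbed into the approximation, and the equation of the lemma is the natural defining relation for $K(p)$. This is essentially the only obstacle, and it is dealt with in the standard mean-field way rather than by a careful combinatorial argument.
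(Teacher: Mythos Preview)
Your argument is correct and follows essentially the same idea as the paper: both simply observe that a classical attack removes exactly the nodes of degree above the new cutoff $K(p)$, so the removed fraction equals $\sum_{k>K(p)} p_k = 1-\sum_{k=0}^{K(p)} p_k$, with the discrete jump at $K(p)$ absorbed by the mean-field approximation. The only presentational difference is that the paper routes the computation through a finite network of size $N$ with maximal degree $K$, writes $p=\sum_{k=K(p)+1}^{K} p_k$, invokes Lemma~\ref{lem_K_general} to replace the upper limit $K$ by $\infty$ at the cost of a $1/N$ term, and then lets $N\to\infty$; you work directly at the infinite limit, which is slightly more direct but otherwise equivalent.
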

\begin{proof}{
Before the removals, the network has a maximal degree $K$.
The new maximal degree $K(p)$ can then be evaluated using
$\somme{k=K(p)+1}{K} p_k = p$.
From Lemma~\ref{lem_K_general}, this is equivalent to
$\somme{k=K(p)}{\infty} p_k = p + \frac{1}{N}$
(neglecting the difference between $\somme{k=K}{\infty} p_k$
and $\somme{k=K+1}{\infty} p_k$).
Since $N$ tends to infinity, we can neglect $1/N$, which gives
$p = \somme{k=K(p)+1}{\infty} p_k$, hence the result.
}\end{proof}

In order to compute the threshold
for random networks with a given degree distribution
and size tending towards infinity,
one therefore has to proceed in two steps: first
compute the value of $K(p_c)$ using Theorem~\ref{th_as_general},
then obtain the value of $p_c$ using Lemma~\ref{lem_as_Kp}. Note
that we will mainly use Lemma~\ref{lem_as_Kp} to compute a fraction of removed nodes given a maximal degree and not the converse.

Before entering in the core of the proofs, notice that the above results
hold for random networks {\em with size tending towards infinity}.
It would be possible to write equivalent results for large
networks of finite size $N$,
by taking into account the (original) maximal degree
given by Lemma~\ref{lem_K_general}.
However, we do not consider this case  in this section
for two main reasons.
First, and most important, the results for large networks of finite
size $N$ would be very similar to the ones for the infinite limit.
Indeed, the
equivalent of Lemma~\ref{lem_as_Kp} for finite networks is:
$$
p = 1 - \somme{k=0}{K(p)} p_k - \frac{1}{N}
$$
(where we neglected the difference between
$\somme{K+1}{\infty}p_k$ and $\somme{K}{\infty} p_k$).
For large networks, $1/N$ is very small compared to $p$ when $p$ is
the threshold for classical attacks (which we will prove later
in this section). Therefore the maximal degree of a large network of
finite size $N$ after a classical attack is very close to the one
at the infinite limit.
Second, considering large networks of finite size $N$ would make the
following computations much more intricate.
One must however keep in
mind that the case of finite networks is tractable,
and results can be derived in a very similar way.

\bigskip
We now give the two main proofs available for
Theorem~\ref{th_as_general}. The first one, from \cite{Cohen2001Attack,Cohen2003Handbook},
uses the following preliminary result.

\begin{lemme} \cite{Cohen2001Attack}
\label{lem_as_sp}
In a random network with size tending towards infinity and
degree distribution $p_k$,
when a fraction $p$ of the nodes is removed  during a classical attack,
the fraction $s(p)$ of stubs attached
to removed  nodes  is given by
$$
s(p) = 1 - \frac{1}{\moy{k}} \somme{k=0}{K(p)} kp_k,
$$
where $K(p)$ is the maximal degree of the network after the attack,
related to $p$ by Lemma~\ref{lem_as_Kp}.
\end{lemme}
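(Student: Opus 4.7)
The plan is to prove this by a direct stub-counting argument, leveraging the very definition of a classical attack together with Lemma~\ref{lem_as_Kp}. Since the attack removes nodes in decreasing order of degree, the set of nodes remaining after a fraction $p$ has been removed is exactly the set of nodes whose degree is at most $K(p)$, where $K(p)$ is the new maximal degree determined by Lemma~\ref{lem_as_Kp}. So the problem reduces to counting how many stubs are attached to this set and comparing it to the total number of stubs.

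First I would write down the total number of stubs in a network of $N$ nodes with degree distribution $p_k$, namely $N\langle k\rangle = \sum_{k=0}^{\infty} k\, N p_k$. Then I would count the stubs attached to the nodes that are \emph{not} removed: for each degree value $k\le K(p)$, there are $N p_k$ nodes contributing $k$ stubs each, so the total number of stubs attached to surviving nodes is $\sum_{k=0}^{K(p)} k\, N p_k$. Dividing by $N\langle k\rangle$ gives the fraction of stubs attached to surviving nodes as $\frac{1}{\langle k\rangle}\sum_{k=0}^{K(p)} k p_k$, and subtracting from $1$ yields the claimed formula for $s(p)$.

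The main (and essentially only) subtlety is the treatment of the boundary degree $K(p)$. In reality, when the threshold of removal falls in the middle of the degree class $K(p)$, some nodes of that degree are removed and some are kept; moreover, the expected degree distribution $p_k(N)$ differs slightly from $p_k$ for finite $N$. I would handle this exactly in the spirit of Lemma~\ref{lem_as_Kp}: in the infinite-size limit, the fraction of nodes whose degree lies strictly above $K(p)$ is already $p$ up to terms of order $1/N$ that vanish, so the contribution of a fractional degree class at $K(p)$ is negligible and all nodes of degree $\leq K(p)$ may be treated as surviving. This is the step where I would be most careful to state explicitly the approximation being made (consistent with the mean-field framework adopted throughout the paper), but it is not a serious technical obstacle. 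The rest of the proof is routine bookkeeping.
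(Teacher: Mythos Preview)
Your argument is correct and is essentially the same stub-counting computation as the paper's proof: the paper sums the stub fractions $kp_k/\moy{k}$ over the removed degree classes $k>K(p)$ (passing to the infinite upper limit via the same approximation you flag), whereas you sum over the surviving classes $k\le K(p)$ and subtract from $1$. Your explicit discussion of the boundary class $K(p)$ is a nice addition, but otherwise there is no substantive difference.
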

\begin{proof}{
Each node of degree $k$ has $k$ stubs attached to it. Therefore the
fraction of stubs attached to all nodes of degree $k$ is equal to $kp_k/\moy{k}$.
Therefore, the total number of stubs attached to removed nodes is
$s(p)= \frac{1}{\moy{k}}\sum_{K(p)+1}^{K} kp_k$.
At the infinite limit, this is equivalent to
$s(p) = \frac{1}{\moy{k}} \somme{K(p)+1}{\infty} kp_k$, hence the result.
}\end{proof}

\medskip
\noindent
We can now give the first proof of Theorem~\ref{th_as_general}:\\
\begin{proof}{
The central point here is to understand that the network obtained
after the removal of a fraction $p$ of the nodes during a
classical attack is equivalent to a random network on which random
link failures occurred.


Indeed, a classical attack has two kinds of effects: it reduces the
maximal degree in the network by removing the nodes with highest
degree, and it removes the links attached to these nodes. If we
consider  links as pairs of randomly chosen stubs, as explained in
the preliminaries, then a classical attack removes all the stubs
attached to the removed nodes and some {\em other} stubs, which were linked
to removed stubs. Since pairs of stubs are linked
randomly, this is equivalent to randomly removing the correct number of
stubs
from the  subnetwork composed of the nodes which are not removed.
This is again equivalent to randomly removing  half as many links in  this subnetwork.

If the classical attack removes a fraction $p$ of
the nodes, the fraction of stubs attached to removed nodes
is $s(p)$, given by Lemma~\ref{lem_as_sp}.
The probability for any given stub of a remaining node to be linked
to a stub of a removed node is therefore $s(p)$.
Finally, each stub
attached to a remaining node is removed with
probability $s(p)$, which is equivalent to the removal of
the same fraction of links.

Since links in this subnetwork are constructed by choosing
random pairs of stubs, it is also a random network. Moreover,
its degree distribution is nothing but the original one
with a cutoff (which is the maximal degree after the attack):
$\left(p'_k = \frac{p_k}{\sum_{k=0}^{K(p)} p_k} = \frac{p_k}{1-p_c}\right)_{0\le k\le K(p)}$,
where $K(p)$
is the maximal degree after the attack.

We finally obtain that a classical attack is equivalent to
random link failures on a random network with known degree
distribution. The value of the threshold can therefore be derived
from Theorem~\ref{th_pa_general} on random link failures.
We then have to relate this threshold, which is the number of
stubs removed among remaining nodes, to the
number of nodes removed during classical attacks.

We can now apply Theorem~\ref{th_pa_general}
to compute the fraction $s(p_c)$ of links to remove randomly
to destroy the random network described above.
This gives
$$1-s(p_c) =\frac{\moy{k(p_c)}}{\moy{k^2(p_c)}-\moy{k(p_c)}},$$
where $\moy{k(p_c)}$ and \moy{k^2(p_c)} are the first and second moment
of the degree distribution
$\left(p'_k =  \frac{p_k}{1-p_c}\right)_{0\le k \le K(p_c)}$,
which is the degree distribution described above.
 The first two moments of this distribution are
$\moy{k(p_c)} = \somme{k=0}{K(p_c)}kp_k/(1-p_c)$ and
$\moy{k^2(p_c)} = \somme{k=0}{K(p_c)}k^2p_k/(1-p_c)$.

\noindent
We can finally transform the above relation into the claim:
\begin{eqnarray*}
1-s(p_c)
&= &
\frac{\moy{k(p_c)}}{\moy{k^2(p_c)}-\moy{k(p_c)}}\\
\frac{1}{\moy{k}}\somme{k=0}{K(p_c)} kp_k
& = &
\frac{\somme{k=0}{K(p_c)}kp_k}{\somme{k=0}{K(p_c)}k^2p_k
 - \somme{k=0}{K(p_c)}k p_k}\\
\frac{1}{\moy{k}}\somme{k=0}{K(p_c)} kp_k
& = &
\frac{\somme{k=0}{K(p_c)}kp_k}{\somme{k=0}{K(p_c)} k(k-1) p_k}\\
\moy{k}
& = &
\somme{k=0}{K(p_c)} k(k-1) p_k.\\
\end{eqnarray*}
}\end{proof}

\bigskip

The other method used to obtain this result~\cite{Newman2000Robustness,Newman2003Handbook}
relies on generating functions. As explained in the
preliminaries, the fraction $p$ of nodes of highest degrees are
marked as {\em absent}, and the others are marked as {\em present}.

Recall that $F_1(x)$ is the generating function for the probability of finding
an unmarked (\ie{} present) node
with $k$ other (marked or unmarked) neighbors at the end of a randomly
chosen link.
In our case, $F_1(x)$ therefore is:
$$F_1(x)=\frac{1}{\moy{k}}\sum_{k=1}^{K(p)} kp_k x^{k-1}.$$

\noindent
We can then prove Theorem~\ref{th_as_general} as a direct consequence
of Theorem~\ref{th-seuil-newman}:\\
\begin{proof}{
From Theorem~\ref{th-seuil-newman}, the threshold $p_c$ is reached
when $F'_1(1) = 1$.
Differentiating $F_1$ gives the result:
$F'_1(x)= \frac{1}{\moy{k}}\somme{k=2}{K(p)} k(k-1) p_k x^{k-2}$ and
$F'_1(1) = \frac{1}{\moy{k}} \somme{k=2}{K(p)} k(k-1) p_k$.
}\end{proof}

Again, the two proofs of Theorem~\ref{th_as_general} have different
advantages and drawbacks. See the comments at the end of
Section~\ref{sec_ps_gl}.

\subsubsection{The cases of Poisson and power-law networks}
\label{sec_as_app}
Theorem~\ref{th_as_general} is valid for any random network, whatever
its degree distribution. To study the behavior of Poisson and
power-law networks in case of classical attacks, we therefore only
have to apply it to these cases. More precisely, we will consider
Poisson, discrete power-law and continuous power-law networks, with
size tending towards infinity.
Comparison with simulations will be provided at the end of the
section, see Figure~\ref{fig_as_seuil}.

\begin{corollaire}
\label{cor_as_ER}
\forPoisson{},
the threshold $p_c$ for classical attacks is given by
$$
z = e^{-z} \sum_{k=0}^{K(p_c)} \frac{z^k}{(k-2)!},
$$
where $K(p_c)$ is the maximal degree of the network after the attack,
related to $p_c$ by Lemma~\ref{lem_as_Kp}.
\end{corollaire}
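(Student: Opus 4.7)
The plan is to derive this corollary as a direct specialization of Theorem~\ref{th_as_general} to the Poisson case, using the standard expressions for the Poisson probability mass function and its first moment from Lemma~\ref{lem_moments_er}. Since Theorem~\ref{th_as_general} already gives a general implicit equation characterizing $p_c$ through $K(p_c)$, the work here is purely algebraic: substitute the Poisson $p_k$ into the threshold equation and simplify.

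First, I would recall that for a Poisson distribution with average value $z$ we have $p_k = e^{-z} z^k / k!$ and, by Lemma~\ref{lem_moments_er}, $\moy{k} = z$. Plugging these into the equation $\sum_{k=0}^{K(p_c)} k(k-1) p_k / \moy{k} = 1$ provided by Theorem~\ref{th_as_general} gives
\[
\frac{1}{z}\sum_{k=0}^{K(p_c)} k(k-1)\,e^{-z}\frac{z^k}{k!} = 1.
\]

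Next, I would simplify the summand using the identity $k(k-1)/k! = 1/(k-2)!$ valid for $k \ge 2$, noting that the $k=0$ and $k=1$ terms vanish because of the factor $k(k-1)$. This yields
\[
\frac{e^{-z}}{z}\sum_{k=0}^{K(p_c)} \frac{z^k}{(k-2)!} = 1,
\]
which, after multiplying through by $z$, is precisely the claimed identity $z = e^{-z}\sum_{k=0}^{K(p_c)} z^k/(k-2)!$.

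There is essentially no obstacle in this proof: both the degree distribution and its first moment are known in closed form, and the combinatorial simplification of $k(k-1)/k!$ is routine. The only point worth flagging is that $K(p_c)$ is not given explicitly but is implicitly defined through $p_c$ via Lemma~\ref{lem_as_Kp}; the corollary, like Theorem~\ref{th_as_general}, should therefore be understood as a coupled pair of equations (one determining $K(p_c)$ from $p_c$, the other determining $p_c$ through $K(p_c)$) to be solved jointly when producing numerical values, as will be done in the plots.
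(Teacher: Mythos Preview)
Your proof is correct and takes essentially the same approach as the paper, which simply records this as a direct application of Theorem~\ref{th_as_general} with $p_k = e^{-z}z^k/k!$. You have merely made explicit the one-line algebraic simplification $k(k-1)/k! = 1/(k-2)!$ that the paper leaves implicit.
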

\begin{proof}{
Direct application of Theorem~\ref{th_as_general},
with $p_k = e^{-z}z^k/k!$.
}\end{proof}

\begin{corollaire} \cite{Newman2000Robustness}
\label{cor_as_vsf}
\forVsf{},
the threshold $p_c$ for classical attacks is given by
$$
\harmo{K(p_c)}{\alpha-2} - \harmo{K(p_c)}{\alpha -1}
= \zeta(\alpha -1),
$$
\defharmo{},
and $K(p_c)$ is the maximal degree of the network after the attack,
related to $p_c$ by Lemma~\ref{lem_as_Kp}.
\end{corollaire}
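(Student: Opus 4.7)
The plan is to obtain this statement as a direct application of Theorem~\ref{th_as_general}, just as the analogous corollaries for Poisson networks (Corollary~\ref{cor_as_ER}) and, in previous sections, for power-law networks were obtained. The theorem provides a universal criterion for the threshold in any random network: $\sum_{k=0}^{K(p_c)} k(k-1) p_k = \moy{k}$. All that remains for the discrete power-law case is to substitute the explicit form of $p_k$ and $\moy{k}$ and rearrange.

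First I would substitute the discrete power-law distribution $p_k = k^{-\alpha}/\zeta(\alpha)$ (valid for $k \ge 1$, with minimal degree $m=1$ as fixed in Section~\ref{sec_net}) into the left-hand side of the criterion, and the expression $\moy{k} = \zeta(\alpha-1)/\zeta(\alpha)$ from Lemma~\ref{lem_moments_vsf} into the right-hand side. Multiplying both sides by $\zeta(\alpha)$ clears the normalization constant and reduces the equation to
$$\somme{k=1}{K(p_c)} k(k-1)\, k^{-\alpha} = \zeta(\alpha-1).$$

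Next I would split $k(k-1) = k^2 - k$ inside the sum, which gives
$$\somme{k=1}{K(p_c)} \left( k^{-\alpha+2} - k^{-\alpha+1} \right) = \zeta(\alpha-1).$$
By the definition $\harmo{K}{\beta} = \somme{k=1}{K} k^{-\beta}$ recalled in the macro \verb|\defharmo|, the two partial sums are exactly $\harmo{K(p_c)}{\alpha-2}$ and $\harmo{K(p_c)}{\alpha-1}$, yielding the claimed identity. The value of $K(p_c)$ is then the maximal degree after the attack, tied back to $p_c$ via Lemma~\ref{lem_as_Kp}, as stated.

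There is no real obstacle here: the corollary is an algebraic specialization, and I do not expect any approximation beyond those already built into Theorem~\ref{th_as_general} and Lemma~\ref{lem_moments_vsf}. The only mild care point is the lower limit of the sum, which is $k=1$ rather than $k=0$ since $p_0 = 0$ for the discrete power-law with $m=1$; this is precisely what makes the sum match the harmonic-number notation $H_{K(p_c)}^{(\cdot)}$ without an offset. Everything else is bookkeeping.
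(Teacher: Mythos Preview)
Your proposal is correct and follows exactly the paper's approach: the paper's own proof is the one-line remark ``Direct application of Theorem~\ref{th_as_general}, with $p_k = k^{-\alpha}/\zeta(\alpha)$,'' and you have simply spelled out the straightforward algebra behind that application.
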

\begin{proof}{
Direct application of Theorem~\ref{th_as_general},
with $p_k = k^{-\alpha}/\zeta(\alpha)$.
}\end{proof}

\begin{corollaire} \cite{Cohen2001Attack}
\label{cor_as_sf}
\forSf{},
the threshold $p_c$ for classical attacks is given by
$$
\left(\frac{K(p_c)}{m}\right)^{2-\alpha} -2
\ \  =\ \  \frac{2-\alpha}{3-\alpha}\ m
\left(\left(\frac{K(p_c)}{m}\right)^{3-\alpha}-1\right),
$$
where $K(p_c)$ is the maximal degree of the network after the attack,
related to $p_c$ by Lemma~\ref{lem_as_Kp}.
\end{corollaire}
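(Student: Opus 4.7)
The plan is to apply Theorem~\ref{th_as_general} with the continuous power-law distribution and then reduce the resulting expression to the claimed form by pure algebra. The threshold condition reads
\[
\frac{1}{\moy{k}}\sum_{k=m}^{K(p_c)} k(k-1) p_k \; = \; 1,
\]
so the core of the work is to evaluate the finite moment-type sum on the left, and then divide by the known value of $\moy{k}$.

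First I would replace the discrete sum by the corresponding Pareto integral, which is the natural object in the continuous power-law setting: using $p(x) = (\alpha-1)m^{\alpha-1} x^{-\alpha}$ and approximating $k(k-1)$ by $x(x-1) = x^2 - x$, I would write
\[
\sum_{k=m}^{K(p_c)} k(k-1)p_k \;\approx\; (\alpha-1)m^{\alpha-1}\int_m^{K(p_c)}\!\bigl(x^{2-\alpha}-x^{1-\alpha}\bigr)\,\mathrm{d}x,
\]
and then carry out the two power integrals to get
\[
(\alpha-1)m^{\alpha-1}\!\left[\frac{K(p_c)^{3-\alpha}-m^{3-\alpha}}{3-\alpha} - \frac{K(p_c)^{2-\alpha}-m^{2-\alpha}}{2-\alpha}\right].
\]

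Next I would substitute $\moy{k}=m(\alpha-1)/(\alpha-2)$ from Lemma~\ref{lem_moments_sf}. The factor $(\alpha-1)m^{\alpha-1}/\moy{k}$ collapses to $(\alpha-2)m^{\alpha-2}$, so the threshold equation becomes
\[
(\alpha-2)m^{\alpha-2}\!\left[\frac{K(p_c)^{3-\alpha}-m^{3-\alpha}}{3-\alpha} - \frac{K(p_c)^{2-\alpha}-m^{2-\alpha}}{2-\alpha}\right] = 1.
\]
Introducing $u = K(p_c)/m$ to absorb the powers of $m$ and using $m^{\alpha-2}\cdot m^{3-\alpha} = m$ and $m^{\alpha-2}\cdot m^{2-\alpha}=1$, I would isolate the $u^{2-\alpha}$ term, noticing that the $(\alpha-2)/(2-\alpha)=-1$ cancellation produces the clean constant $2$ appearing on the left-hand side of the claim. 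A few lines of rearrangement then give
\[
u^{2-\alpha}-2 \;=\; \frac{2-\alpha}{3-\alpha}\,m\bigl(u^{3-\alpha}-1\bigr),
\]
which is exactly the stated identity.

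The only genuine subtlety is the use of the infinite-limit value $\moy{k}=m(\alpha-1)/(\alpha-2)$: this requires $\alpha>2$, and strictly speaking is a mean-field approximation in which the finite cutoff $K(p_c)$ is neglected in the first moment but retained in the truncated second-moment-like sum. I would make this approximation explicit and note that it is consistent with the general framework of Section~\ref{sec_mfa}, and with the other continuous power-law corollaries in the paper (notably Corollary~\ref{cor_ps_sf}). The continuous-approximation step for the sum, and the $\moy{k}$ substitution, together are the only places where any work happens; everything else is mechanical algebra in the variable $u = K(p_c)/m$.
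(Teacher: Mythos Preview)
Your proposal is correct and follows essentially the same route as the paper: apply Theorem~\ref{th_as_general}, replace the truncated sum by the Pareto integral $(\alpha-1)m^{\alpha-1}\int_m^{K(p_c)}(x^{2-\alpha}-x^{1-\alpha})\,\mathrm{d}x$, use $\moy{k}=m(\alpha-1)/(\alpha-2)$ from Lemma~\ref{lem_moments_sf}, and then rearrange algebraically in terms of $K(p_c)/m$. The only cosmetic differences are that the paper computes $\sum k^2 p_k$ and $\sum k p_k$ as two separate integrals before subtracting, and carries out the final algebra without naming the auxiliary variable $u=K(p_c)/m$; your explicit remark on the $\alpha>2$ assumption behind the use of Lemma~\ref{lem_moments_sf} is a welcome clarification.
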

\begin{proof}{
From Theorem~\ref{th_as_general}, we have:
$\left(\sum_{k=m}^{K(p_c)} k(k-1)p_k\right)/\moy{k} = 1$, thus
$\moy{k}=\sum_{k=m}^{K(p_c)}k^2p_k - \sum_{k=m}^{K(p_c)}Kr_k$.
We have $p_k = m^{\alpha-1}(k^{-\alpha+1}-(k+1)^{-\alpha+1})$,
and $\moy{k} = m(\alpha-1)/(\alpha-2)$, from Lemma~\ref{lem_moments_sf}.
From this, switching back to the continuous form, we obtain
$\somme{k=m}{K(p_c)} k p_k = (\alpha-1) m^{\alpha-1}\int_m^{K(p_c)} k k^{\alpha-1} \mathrm{d}k =
\frac{\alpha-1}{-\alpha + 2} m^{\alpha-1} \left( K(p_c)^{-\alpha+2} - m^{-\alpha + 2} \right)$
and similarly for the second moment,
$
\somme{k=m}{K(p_c)} k^2 p_k =
\frac{\alpha-1}{-\alpha + 3} m^{\alpha-1} \left(K(p_c)^{-\alpha+3} - m^{-\alpha + 3}\right)
$.

\noindent
The above equality then becomes:
$$
(\alpha-1)m^{\alpha-1}
\left[ \frac{K(p_c)^{-\alpha+3} - m^{-\alpha+3}}{-\alpha + 3}
 - \frac{K(p_c)^{-\alpha+2} - m^{-\alpha+2}}{-\alpha + 2} \right]
 = m\frac{\alpha-1}{\alpha-2}.
$$
We can finally transform this equation to obtain the result:
\begin{eqnarray*}
 \frac{m^{-\alpha+3}}{-\alpha+3}
  \left[ \left( \frac{K(p_c)}{m} \right)^{-\alpha+3} - 1 \right]
 - \frac{m^{-\alpha+2}}{-\alpha+2}
  \left[ \left( \frac{K(p_c)}{m} \right)^{-\alpha+2} - 1 \right]
& = &
\frac{m^{-\alpha+2}}{\alpha-2}\\
  \frac{m}{-\alpha+3}
  \left[ \left( \frac{K(p_c)}{m} \right)^{-\alpha+3} - 1 \right]
 - \frac{1}{-\alpha+2}
  \left[ \left( \frac{K(p_c)}{m} \right)^{-\alpha+2} - 1 \right]
& = &
\frac{1}{\alpha-2}\\
  \frac{m}{-\alpha+3}
  \left[ \left( \frac{K(p_c)}{m} \right)^{-\alpha+3} - 1 \right]
 - \frac{1}{-\alpha+2}
  \left[ \left( \frac{K(p_c)}{m} \right)^{-\alpha+2} - 2 \right]
& = &
0
\end{eqnarray*}
}\end{proof}

\bigskip

Numerical evaluations of these results can be done by computing the
maximal degree $K(p_c)$ with the appropriate corollary,
 and then injecting it into Lemma~\ref{lem_as_Kp} to evaluate $p_c$.
For Poisson and discrete power-law networks (Corollaries~\ref{cor_as_ER} and~\ref{cor_as_vsf}),
the equations can be solved in a
similar way as  computing the maximal degree of a random
network, see Section~\ref{secK}.
For continuous power-law networks (Corollary~\ref{cor_as_sf}),
the equation
 can be solved using a computer algebra
system \cite{mapleurl}. Notice that this equation is  not defined for
$\alpha=3$; one may obtain the threshold for this value as the limit
of its values when $\alpha$ tends to it.

Moreover, still in the case of continuous power-law networks,
one may use the following result,
 which is simpler and more
precise (since it allows non-integer values for the degree),
instead of Lemma~\ref{lem_as_Kp}.

\begin{lemme} \cite{Cohen2001Attack}
\label{lem_as_sf}
\forSf{},
the maximal degree $K(p)$ after the removal of a fraction $p$ of the nodes
during classical attacks is related to $p$ by
$$p = m^{\alpha-1}K(p)^{-\alpha+1}.$$
\end{lemme}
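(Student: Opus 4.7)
The plan is to treat this as a direct continuous analogue of Lemma~\ref{lem_as_Kp}, specialized to the continuous power-law density. In a classical attack, we remove nodes in decreasing order of degree, so after removing a fraction $p$ the surviving nodes are exactly those with degree at most $K(p)$. Therefore $p$ equals the probability that a randomly chosen node has degree strictly greater than $K(p)$. The key step is to evaluate this tail probability using the continuous form of the distribution rather than the discrete sum.

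Concretely, I would start from the continuous density $p(x) = (\alpha-1) m^{\alpha-1} x^{-\alpha}$ for $x \geq m$, which integrates to $1$ (as was already established in the preliminaries when defining the continuous power-law). Then I would write
\[
p = \int_{K(p)}^{\infty} (\alpha-1) m^{\alpha-1} x^{-\alpha}\, \mathrm{d}x,
\]
and compute this integral directly. Since $\alpha > 1$ in all cases we consider, the antiderivative $-m^{\alpha-1} x^{-\alpha+1}$ vanishes at infinity, leaving $p = m^{\alpha-1} K(p)^{-\alpha+1}$, which is the claim.

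The reason this works in the continuous case without an extra $1/N$ correction (as appeared in Lemma~\ref{lem_as_Kp}) is that the size tends to infinity, so the $1/N$ term is negligible exactly as in the proof of Lemma~\ref{lem_as_Kp}. A subtle point worth noting is that in the discrete version of Lemma~\ref{lem_as_Kp} one works with a sum and $K(p)$ must be an integer, whereas here the continuous integration allows $K(p)$ to take non-integer values, which is precisely the reason the authors advertise this formula as ``simpler and more precise'' for computing thresholds in Corollary~\ref{cor_as_sf}.

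No step in this argument is a real obstacle: the only thing to watch is the bookkeeping between ``fraction of nodes removed'' and ``tail mass of the degree distribution,'' which is justified exactly as in the proof of Lemma~\ref{lem_as_Kp}, and the convergence of the tail integral, which is automatic for $\alpha>1$.
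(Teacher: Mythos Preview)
Your proposal is correct and follows essentially the same route as the paper: start from Lemma~\ref{lem_as_Kp} to identify $p$ with the tail mass of the degree distribution above $K(p)$, switch to the continuous density $(\alpha-1)m^{\alpha-1}x^{-\alpha}$, and integrate to obtain $m^{\alpha-1}K(p)^{-\alpha+1}$. The paper's proof is essentially identical, only making explicit the negligible shift from $\sum_{K(p)+1}^{\infty}$ to $\sum_{K(p)}^{\infty}$ before passing to the integral.
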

\begin{proof}{
From Lemma~\ref{lem_as_Kp}, we have
$
p = \sum_{K(p)+1}^\infty p_adj$,
which we can approximate to be equal to $\sum_{K(p)}^\infty p_k$.
Switching back to the continuous definition, we have that
$p
= (\alpha-1)m^{\alpha-1}\int_{K(p)}^{\infty} x^{-\alpha} \mathrm{d}x
= (\alpha-1)m^{\alpha-1}\left[ \frac{k^{-\alpha+1}}{-\alpha+1} \right]_K^\infty
= (\alpha-1)m^{\alpha-1}K^{-\alpha+1}/(\alpha-1) = m^{\alpha-1}K^{-\alpha+1}$,
hence the result.
}\end{proof}

\bigskip

We plot numerical evaluations of these results in
Figure~\ref{fig_as_seuil}, together with experimental results.
We also give in Table~\ref{tab_as} the thresholds for specific
values of the exponent and the average degree.

\begin{figure}[!h]
\begin{center}
\includetroisres{as_seuil_0.05.eps}
\end{center}
\caption{
Thresholds for classical attacks.
\lefttoright
\refplotsseuil
}
\label{fig_as_seuil}
\end{figure}

\degc{as}{classical attacks}{0.056 & 0.038 & 0.18 & 0.19 & 0.018 & 0.017 & 0.08 & 0.09}{0.025 & 0.012 & 0.05 & 0.05 & 0.002 & 0.0015 & 0.03 & 0.035}

It appears clearly that
both types of networks are
very sensitive to classical attacks: only a few percents of the nodes
have to be removed to destroy them.
Moreover, the thresholds for power-law networks are
much lower than the ones for Poisson  networks: they are almost
one order of magnitude smaller than for
comparable Poisson networks.
These are certainly the main results on the topic and we will deepen them
in the rest of the section.

\subsection{Link point of view of classical attacks.}
\label{sec_asa}

The classical attack strategy removes highest degree nodes first. Since
in a power-law network there are some very high degree nodes,
this leads in this case to the removal of a huge number of links. One may
then wonder if its efficiency on power-law networks is due to
the fact that the number of removed links is much larger than in the case
of random failures. Likewise, one may
wonder if the fact that a classical attack removes  much
more links in a power-law network than in a Poisson one is the cause of its difference of efficiency on these two types of networks.
These explanations actually have been
proposed by some authors as an intuitive explanation of the
results presented above.

Figure~\ref{fig_asa_cc} displays the behaviors observed for these three
types of networks. One can  see there that the thresholds for Poisson
and power-law networks are much closer than from the node point of
view, see Figure~\ref{fig_as_cc}. One may also observe that, though
there are  significant differences, when one removes from power-law
networks as many links as what is needed to destroy a Poisson network
with the same average degree, then the size of the largest connected
component becomes very small.


The questions we address here therefore are: how many links have been
removed when we reach the threshold for classical attacks? Is this
number similar for power-law and Poisson networks? And is it similar
to the threshold for link failures? We will investigate these questions
more precisely in this section by first introducing a general result and
then apply it to the cases under concern.



\begin{figure}[!h]
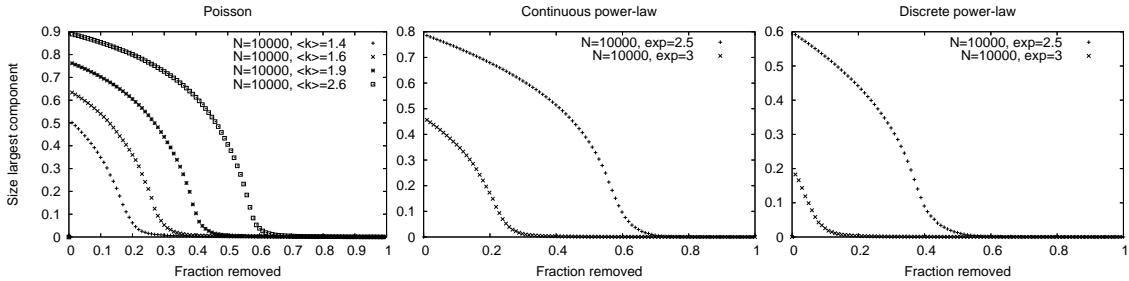

\begin{center}
\includetroisres{10000_asa_cc.eps}
\end{center}
\caption{
Size of the largest connected component as a function of the
fraction of {\em links} removed during classical attacks.
\lefttoright
\refplotscc
}
\label{fig_asa_cc}
\end{figure}


\subsubsection{General results}

Our aim here is to prove the following general result, which gives
the fraction $m(p)$ of links removed when a fraction $p$ of
the nodes are removed during classical attacks.

\begin{theoreme}
\label{th_asa_general}
In a large random network, the fraction $m(p)$ of links removed
when a fraction $p$ of the nodes have been removed during a classical
attack is
$$m(p) = 2s(p) - s(p)^2,$$
where $s(p)$ is the fraction of stubs attached to removed nodes,
and is related to $p$ by Lemma~\ref{lem_as_sp}.
\end{theoreme}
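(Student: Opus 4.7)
The plan is to reduce this to the same counting argument used in the proof of Proposition~\ref{prop_psa_mp}, with the key substitution that the role played there by the fraction $p$ of removed nodes is here played by the fraction $s(p)$ of stubs attached to removed nodes.

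First I would observe that although in a classical attack the removed nodes are chosen deterministically (the highest-degree ones), we still work in the configuration model / mean-field setting in which links are formed by pairing stubs uniformly at random. This is exactly the invariance property highlighted in Section~\ref{sec_mfa}: the random nature of the stub pairing is what allowed the analogous statement for random node failures, and it is preserved regardless of which nodes we decide to remove, since the stub-pairing randomness is independent of the degree sequence.

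With that in hand, the computation is immediate. Let $s = s(p)$ denote the fraction of stubs attached to removed nodes, as given by Lemma~\ref{lem_as_sp}. A link survives the attack if and only if neither of its two stubs is attached to a removed node. Since the two stubs forming any given link are sampled (essentially) independently and uniformly from the pool of all stubs, the probability that a random stub belongs to a non-removed node is $1 - s$, so the fraction of surviving links is $(1-s)^2$. Consequently the fraction of removed links is
$$
m(p) = 1 - (1 - s(p))^2 = 2s(p) - s(p)^2,
$$
which is the claim.

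The only delicate point, and what I would flag explicitly in the write-up, is the justification that the two endpoints of a link behave independently with respect to being ``attached to a removed node,'' since the removed set is not random. This is exactly the mean-field approximation already invoked several times in the paper: in the limit of large $N$, we neglect the small correlation coming from the fact that picking one stub removes one unit from the pool for the second. Once this approximation is accepted (as in the proof of Proposition~\ref{prop_psa_mp}), no further work is required. Note that as a sanity check, applying this result to random failures recovers $s(p) = p$ and $m(p) = 2p - p^2$, consistent with Proposition~\ref{prop_psa_mp}.
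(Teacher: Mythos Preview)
Your proof is correct and follows essentially the same argument as the paper's: both compute the surviving link fraction as $(1-s(p))^2$ via the random stub-pairing of the configuration model, then take the complement. Your write-up is in fact more explicit than the paper's about why the deterministic choice of removed nodes does not interfere with the independence of the two stubs of a link, which is a welcome clarification.
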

\begin{proof}{
Let us consider a network in which we remove a fraction
$p$ of the nodes during a classical attack.
Let $s(p)$ be the fraction  of the stubs in the network attached to the
removed nodes
($s(p)$ is linked to $p$ by Lemma~\ref{lem_as_sp}).
Each stub attached to a remaining node is therefore kept with probability $(1-s(p))$, the
fraction of pairs of stubs linking non removed nodes is therefore
$(1-s(p))^2$. This last quantity is the fraction of non removed
links,
and $m(p) = 1-(1-s(p))^2 = 2s(p) - s(p)^2$ is finally the fraction of
removed links, hence the result.
%
%
}\end{proof}

\subsubsection{The cases of Poisson and power-law networks}
Theorem~\ref{th_asa_general} is valid for any random network,
whatever its degree distribution.
It makes it possible to
compute the fraction of links removed at the threshold for classical attacks.
To study the behavior of Poisson and power-law networks,
we therefore only have to apply it to these cases.
More precisely, we will consider Poisson, continuous power-law and
discrete-power-law networks.

In each case, one first has to compute the (node) threshold $p_c$ for classical
attacks using the appropriate corollary in
Section~\ref{sec_as_app},
then use Lemma~\ref{lem_as_sp} to
obtain $s(p_c)$, before applying Theorem~\ref{th_asa_general}.

In the case of continuous power-law networks,
it is possible to obtain $s(p_c)$ from $p_c$ more easily as follows.

\begin{lemme} \cite{Cohen2001Attack}
\label{lem_asa_sf}
\forSf{},
when a fraction $p$ of the nodes is removed  during a classical attack,
the fraction $s(p)$ of stubs attached
to removed  nodes  is given by
$$
s(p) = p^{(2-\alpha)/(1-\alpha)}.
$$
\end{lemme}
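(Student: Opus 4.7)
The plan is to combine Lemma~\ref{lem_as_sp}, which gives $s(p)$ as a sum over the tail of the degree distribution, with Lemma~\ref{lem_as_sf}, which already relates $p$ to the cutoff $K(p)$, and then eliminate $K(p)$. The key computation is to show that $s(p)$ is, like $p$, a simple power of $K(p)/m$; once this is done, the result follows by inverting Lemma~\ref{lem_as_sf}.

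First I would rewrite Lemma~\ref{lem_as_sp} in the equivalent tail form
\[
s(p) \;=\; \frac{1}{\moy{k}}\sum_{k=K(p)+1}^{\infty} k\, p_k,
\]
and then pass to the continuous formulation (as was done in the proof of Lemma~\ref{lem_as_sf}), so that $p_k$ becomes the Pareto density $(\alpha-1)m^{\alpha-1}x^{-\alpha}$. Computing the resulting integral,
\[
\sum_{k>K(p)} k\, p_k \;\approx\; (\alpha-1)m^{\alpha-1}\int_{K(p)}^{\infty} x^{1-\alpha}\,dx \;=\; \frac{\alpha-1}{\alpha-2}\,m^{\alpha-1} K(p)^{2-\alpha},
\]
and substituting the value $\moy{k}=m(\alpha-1)/(\alpha-2)$ from Lemma~\ref{lem_moments_sf} (valid for $\alpha>2$), the prefactors cancel cleanly and I obtain the clean intermediate identity
\[
s(p) \;=\; \left(\frac{K(p)}{m}\right)^{2-\alpha}.
\]

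Next I would invoke Lemma~\ref{lem_as_sf}, which states $p = m^{\alpha-1}K(p)^{1-\alpha} = (K(p)/m)^{1-\alpha}$. Since $\alpha>1$ this is invertible, giving $K(p)/m = p^{1/(1-\alpha)}$. Inserting this into the expression for $s(p)$ yields
\[
s(p) \;=\; \bigl(p^{1/(1-\alpha)}\bigr)^{2-\alpha} \;=\; p^{(2-\alpha)/(1-\alpha)},
\]
which is the claim.

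The step that requires the most care is the continuous approximation in the tail sum and the cancellation with $\moy{k}$: one has to note that the same continuous-limit approximations already used in Lemma~\ref{lem_moments_sf} and Lemma~\ref{lem_as_sf} are being applied consistently, and that the $\alpha>2$ assumption is needed both for $\moy{k}$ to be finite and for the tail integral to converge. Beyond that, the proof is essentially algebraic and should be short: the technical content is entirely contained in the two preceding lemmas.
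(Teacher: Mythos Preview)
Your proof is correct and follows essentially the same route as the paper: rewrite $s(p)$ from Lemma~\ref{lem_as_sp} as a tail sum, pass to the continuous integral to obtain $s(p)=m^{\alpha-2}K(p)^{2-\alpha}=(K(p)/m)^{2-\alpha}$ using $\moy{k}$ from Lemma~\ref{lem_moments_sf}, and then eliminate $K(p)$ via Lemma~\ref{lem_as_sf}. The intermediate expressions and the approximations invoked match the paper's proof exactly.
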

\begin{proof}{
From Lemma~\ref{lem_as_sp}, we know that
$s(p)= (\somme{k=K(p)+1}{\infty} k p_k)/\moy{k}$, which we can
approximate to be equal to $(\somme{k=K(p)}{\infty} k p_k)/\moy{k}$.
Moreover, we have $\moy{k} = \frac{\alpha-1}{\alpha-2}m$ from
Lemma~\ref{lem_moments_sf}.

Switching back to the continuous case, we therefore have
$$s(p) = \frac{1}{\moy{k}} (\alpha-1)m^{\alpha-1}\int_{K(p)}^{\infty}k k^{-\alpha} \mathrm{d}k
      = \frac{1}{\moy{k}} (\alpha-1)m^{\alpha-1}
        \left[\frac{k^{-\alpha+2}}{-\alpha+2} \right]_{K(p)}^\infty
      = m^{\alpha-2}K(p)^{-\alpha+2}.$$
From Lemma~\ref{lem_as_sf},
we finally obtain $K(p)= m p^{1/(-\alpha+1)}$,  hence the result.
}\end{proof}

\bigskip

We plot numerical evaluations of these results in
Figure~\ref{fig_asa_seuil}, together with experimental results.
We also give in Table~\ref{tab_asa} the thresholds for specific
values of the exponent and the average degree.

\begin{figure}[!h]
\begin{center}
\includetroisres{asa_seuil_0.05.eps}
\end{center}
\caption{
Thresholds for the link point of view of classical attacks.
\lefttoright
\refplotsseuil
}
\label{fig_asa_seuil}
\end{figure}

\degc{asa}{the link point of view of classical attacks}{0.62 & 0.6 & 0.5 & 0.6 & 0.45 & 0.42 & 0.28 & 0.4}{0.3 & 0.24 & 0.15 & 0.28 & 0.08 & 0.07 & 0.1 & 0.2}

The results are striking: the thresholds are much larger from the link
point of view than from the node point of view (see Table~\ref{tab_as}
for comparison).
More importantly, while the number of nodes to be removed is much lower
for power-law networks than for Poisson ones, the corresponding number
of links if similar for both kinds of networks:
 the links thresholds are similar.

The conclusion from these observations is that the fact that
power-law networks are rapidly destroyed during classical attacks may be
viewed as a consequence of the fact that many links are removed.
It is however important to notice that the obtained behavior for
power-law networks is not the same as the one obtained if we remove
the same amount of links at random
(see Figure~\ref{fig_pa_seuil} and Table~\ref{tab_pa} for comparison).
Therefore, although
the amount of removed links is huge and this plays a role in
the behavior of power-law networks, this is not sufficient to
explain the observed behavior.
This means that the links attached to highest degree nodes play a more
important role regarding the network connectivity than random links.

\subsection{New attack strategies.}
\label{sec_na}

In this section we introduce two very simple new attack strategies,
one targeting nodes (Section~\ref{sec_nas}) and the other
targeting links (Section~\ref{sec_naa}).
These strategies are close to random failures.
Our aim is not to provide efficient attack strategies,
but rather to deepen our understanding of previous results.

These two new attack strategies rely on the following observation.
We have seen (Theorem~\ref{thseuil}) that a random network
with size tending towards infinity has a giant component if
$ \moy{k^2} - 2\moy{k} > 0$. This is equivalent to the condition
$p_1 < \somme{k=3}{\infty} k(k-2)p_k$
(where $p_k$ is the fraction of nodes of degree $k$).
The fraction of nodes of degree $1$ in the network therefore plays
a key role. The two attack strategies are based on the idea that
increasing this fraction should quickly break the network.

Since our aim here is not to compute the exact value of the
threshold, but rather to understand a general behavior,
we will only consider in the sequel the case of networks
with size tending towards infinity.

\subsubsection{Almost-random node attacks}
\label{sec_nas}
The first attack strategy simply consists of randomly removing
nodes of degree at least 2. We call it the almost-random node attack
strategy.

We first present a general result for this strategy, then apply
it to the special cases under concern:
Poisson and power-law (both discrete and continuous) networks.
Figure~\ref{fig_nas_cc} displays the behaviors observed for these three
types of networks.

\begin{figure}[!h]
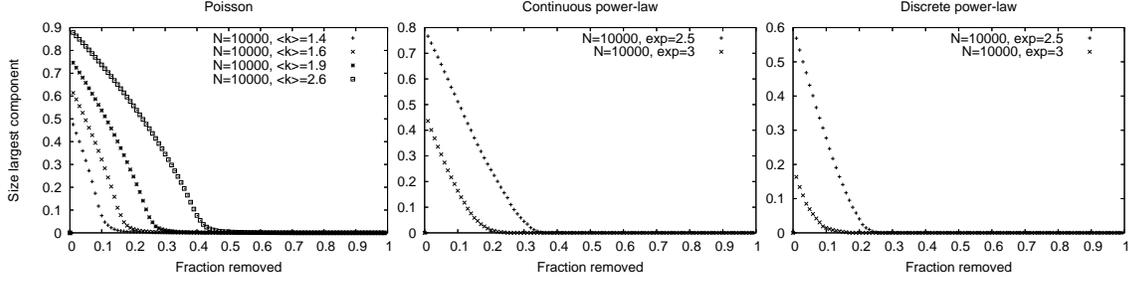

\begin{center}
\includetroisres{10000_nas_cc.eps}
\end{center}
\caption{
Size of the largest connected component as a function of the
fraction of nodes removed during almost-random node attacks.
\lefttoright
\refplotscc
}
\label{fig_nas_cc}
\end{figure}

Although this strategy is barely different from random node failures,
it is actually much more efficient than failures
(see Figure~\ref{fig_ps_cc} for comparison).
In particular, it has a finite threshold for all
the types of networks we consider.


\begin{theoreme}
\label{th_nas_general}
The threshold $p_c$ for almost-random node attacks
for large random networks with degree distribution $p_k$,
is bounded by
$$ p_c < 1-p_1-p_0.$$
\end{theoreme}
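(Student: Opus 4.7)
The plan is to adapt the generating-function approach used in the second proof of Theorem~\ref{th_ps_general} (see page~\pageref{preuve2_ps}), rather than passing through a modified degree distribution. The key observation is that almost-random node attacks differ from random node failures only in that the removal probability is \emph{conditional} on having degree at least~$2$: if we let $\tilde p = p/(1-p_0-p_1)$ denote the conditional probability that a degree-$\geq 2$ node is marked when a fraction $p$ of all nodes has been removed, then nodes of degree $0$ or $1$ are marked with probability $0$ while nodes of degree $\geq 2$ are marked with probability $\tilde p$.

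First I would write down $F_1(x)$, the generating function for reaching an unmarked node with $k$ other neighbors by following a random link. The probability of reaching a degree-$d$ node via a random stub is $dp_d/\moy{k}$, and such a node is unmarked with probability $1$ if $d=1$ and with probability $1-\tilde p$ if $d\ge 2$. Writing this out and observing that the $d=0$ and $d=1$ terms contribute $0$ to the derivative at $x=1$ (because of the factor $d(d-1)$), one obtains
\[
F_1'(1) \;=\; (1-\tilde p)\,\frac{\moy{k^2}-\moy{k}}{\moy{k}}.
\]
Applying Theorem~\ref{th-seuil-newman} and setting $F_1'(1)=1$ gives
\[
\tilde p_c \;=\; 1-\frac{\moy{k}}{\moy{k^2}-\moy{k}},
\]
so that
\[
p_c \;=\; (1-p_0-p_1)\left(1-\frac{\moy{k}}{\moy{k^2}-\moy{k}}\right).
\]

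Finally I would read off the strict bound from this closed form. Since the theorem is only relevant when the original network has a giant component, Theorem~\ref{thseuil} gives $\moy{k^2}-2\moy{k}>0$, hence $\moy{k^2}-\moy{k}>\moy{k}>0$, and therefore the factor $1-\moy{k}/(\moy{k^2}-\moy{k})$ lies strictly between $0$ and $1$. Multiplying by $1-p_0-p_1>0$ yields $0<p_c<1-p_0-p_1$, giving the claim.

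The main obstacle I anticipate is a bookkeeping one rather than a conceptual one: care is needed when extending the sum defining $F_1'(1)$ from $d\ge 2$ down to $d\ge 0$, since the attack explicitly spares degree-$0$ and degree-$1$ nodes, and one must check that including these terms is legitimate (which it is, since $d(d-1)$ kills them). The same subtlety also explains, at a glance, why this ``almost-random'' strategy yields a finite threshold whereas true random failures do not on heavy-tailed distributions: the $p_1$ mass, which was a major source of resilience, is shifted out of the denominator and into the prefactor $1-p_0-p_1$.
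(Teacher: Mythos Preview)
Your argument is correct and actually establishes more than the theorem asks: you obtain a closed-form expression for the threshold,
\[
p_c \;=\; (1-p_0-p_1)\left(1-\frac{\moy{k}}{\moy{k^2}-\moy{k}}\right),
\]
from which the bound is read off. The paper takes a far more elementary route: it simply observes that once every node of initial degree at least~$2$ has been removed (a fraction $1-p_0-p_1$ of the nodes), the remaining nodes all have degree $0$ or $1$, so no giant component can survive; hence $p_c \le 1-p_0-p_1$. That is the whole proof. Your generating-function computation thus buys an exact value and a transparent link to the random-failures threshold (your $\tilde p_c$ coincides with the $p_c$ of Theorem~\ref{th_ps_general}), at the cost of more machinery than the statement requires.

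One small caveat on strictness: your final step asserts that $1-\moy{k}/(\moy{k^2}-\moy{k})$ lies strictly below $1$, but when $\moy{k^2}$ diverges (e.g.\ power-law with exponent $\alpha<3$ at the infinite limit) this ratio is exactly $1$, and your formula gives $p_c = 1-p_0-p_1$ on the nose rather than a strict inequality. The paper's argument has the same issue (it really only yields $\le$), so this is a shared sloppiness rather than a flaw specific to your approach; for finite networks or $\alpha>3$ the strict inequality does hold as you argue.
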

\begin{proof}{
When all nodes that had initially a degree higher that
one have been removed, then the network surely has no
giant component anymore since all remaining nodes have degree at
most $1$.
All nodes of degree higher than one represent a fraction
$1-p_1-p_0$ of all nodes;
this is therefore an upper bound for the threshold
for this attack strategy.
}\end{proof}

Theorem~\ref{th_nas_general} is valid for any random network, whatever
its degree distribution. To study the behavior of Poisson and
power-law networks in case of classical attacks, we therefore only
have to apply it to these cases. More precisely, we will consider
Poisson and power-law networks (both discrete and continuous), with
size tending towards infinity.
Comparison with simulations will be provided at the end of the
section, see Figure~\ref{fig_nas_seuil}.

\begin{corollaire}
\label{cor_nas_er}
\forPoisson{},
the threshold $p_c$ for almost-random node attacks is bounded by
$$p_c < 1- e^{-z}(z+1).$$
\end{corollaire}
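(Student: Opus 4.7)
The plan is to obtain this corollary as an immediate application of Theorem~\ref{th_nas_general}, which gives the general upper bound $p_c < 1 - p_1 - p_0$ for any random network with degree distribution $p_k$. The only thing left to do is to substitute the Poisson values of $p_0$ and $p_1$ into this bound.

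First I would recall that for a Poisson distribution with average degree $z$, the probability of degree $k$ is $p_k = e^{-z} z^k / k!$, as stated in Section~\ref{sec_prelim}. Evaluating this at $k=0$ and $k=1$ gives $p_0 = e^{-z}$ and $p_1 = e^{-z} z$. Then summing yields $p_0 + p_1 = e^{-z}(1+z) = e^{-z}(z+1)$.

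Plugging this into the inequality of Theorem~\ref{th_nas_general} gives $p_c < 1 - e^{-z}(z+1)$, which is exactly the claim. There is no real obstacle here: the result is a one-line specialization of the general bound, and the only ingredient beyond Theorem~\ref{th_nas_general} is the explicit form of the Poisson mass function, which is already part of the preliminaries. Consequently, the proof will simply read as a direct application of Theorem~\ref{th_nas_general} together with the Poisson formula for $p_k$.
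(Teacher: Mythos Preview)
Your proposal is correct and follows exactly the same approach as the paper's own proof, which simply states it is a direct application of Theorem~\ref{th_nas_general} with $p_k = e^{-z} z^k/k!$. Your explicit computation of $p_0 + p_1 = e^{-z}(z+1)$ is precisely the specialization the paper leaves implicit.
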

\begin{proof}{
Direct application of Theorem~\ref{th_nas_general} with $p_k = e^{-z} z^k/k!$.
}\end{proof}

\begin{corollaire}
\label{cor_nas_sf}
\forSf{},
the threshold $p_c$ for almost-random node attacks is bounded by
$$p_c < 1 - m^{\alpha-1} (1 - 2^{-\alpha+1}).$$
\end{corollaire}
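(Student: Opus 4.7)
The plan is to apply Theorem~\ref{th_nas_general} directly, using the explicit form of the continuous power-law distribution given in the preliminaries. First I would invoke the bound $p_c < 1 - p_0 - p_1$ from Theorem~\ref{th_nas_general}. For a continuous power-law distribution with exponent $\alpha$ and minimal degree $m$, the probabilities are $p_k = m^{\alpha-1}(k^{-\alpha+1} - (k+1)^{-\alpha+1})$, as established in the discussion preceding Lemma~\ref{lem_distdeg_finis_sf}.

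Next I would evaluate the relevant terms at $k=0$ and $k=1$. Since the distribution is supported on degrees at least $m \geq 1$, we have $p_0 = 0$. Substituting $k=1$ into the formula gives $p_1 = m^{\alpha-1}(1^{-\alpha+1} - 2^{-\alpha+1}) = m^{\alpha-1}(1 - 2^{-\alpha+1})$. Plugging these into the bound from Theorem~\ref{th_nas_general} yields the stated inequality $p_c < 1 - m^{\alpha-1}(1 - 2^{-\alpha+1})$.

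There is essentially no hard step here since the result is an immediate specialization of the general upper bound. The only point that deserves a brief justification is the use of the continuous form to read off $p_1$: this is consistent with the convention adopted in Section~\ref{sec_net} that we take $m=1$ for the cases under study, in which case the expression $m^{\alpha-1}(1 - 2^{-\alpha+1})$ coincides with the fraction of nodes of degree exactly $1$ obtained by integrating the Pareto density over $[1,2)$. Thus the proof reduces to a one-line substitution, and no additional approximation beyond those already built into Theorem~\ref{th_nas_general} is required.
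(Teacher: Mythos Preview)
Your proposal is correct and matches the paper's proof exactly: the paper simply states ``Direct application of Theorem~\ref{th_nas_general}, with $p_k = m^{\alpha-1}(k^{-\alpha+1} - (k+1)^{-\alpha+1})$,'' which is precisely the substitution you describe. Your additional remark that $p_0=0$ and your discussion of the $m=1$ convention are more explicit than the paper, but the argument is the same one-line specialization.
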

\begin{proof}{
Direct application of Theorem~\ref{th_nas_general},
with $p_k = m^{\alpha-1}(k^{-\alpha+1} - (k+1)^{-\alpha+1})$.
}\end{proof}

\begin{corollaire}
\label{cor_nas_vsf}
\forVsf{},
the threshold $p_c$ for almost-random node attacks is bounded by
$$p_c < 1- 1/\zeta(\alpha).$$
\end{corollaire}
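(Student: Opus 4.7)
The plan is to obtain this corollary as a direct application of Theorem~\ref{th_nas_general}, which bounds the threshold for almost-random node attacks by $1 - p_1 - p_0$, where $p_k$ is the degree distribution of the network under consideration. So the only work left is to compute $p_0$ and $p_1$ for the discrete power-law distribution with exponent $\alpha$ and report the resulting bound.

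First I would recall the conventions fixed earlier in the paper: for discrete power-law networks we always take the minimal degree $m=1$, and the distribution is $p_k = k^{-\alpha}/\zeta(\alpha)$ for $k \geq 1$ (with $p_k = 0$ otherwise). In particular $p_0 = 0$, which already simplifies the bound of Theorem~\ref{th_nas_general} to $p_c < 1 - p_1$. Then I would evaluate $p_1 = 1^{-\alpha}/\zeta(\alpha) = 1/\zeta(\alpha)$, plug this into the bound, and obtain $p_c < 1 - 1/\zeta(\alpha)$, which is exactly the claim.

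There is really no obstacle here: the statement is a one-line specialization of Theorem~\ref{th_nas_general}, and the only thing to be careful about is to use the correct form of the discrete power-law distribution (the one with $m=1$ fixed in Section~\ref{sec_net}) so that $p_0$ is indeed zero and $p_1$ is indeed $1/\zeta(\alpha)$. This also mirrors exactly the structure of Corollaries~\ref{cor_nas_er} and~\ref{cor_nas_sf}, which are likewise one-line applications of Theorem~\ref{th_nas_general} to the Poisson and continuous power-law cases respectively, so writing the proof amounts to a single sentence citing Theorem~\ref{th_nas_general} with $p_k = k^{-\alpha}/\zeta(\alpha)$.
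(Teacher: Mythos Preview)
Your proposal is correct and matches the paper's proof exactly: the paper simply writes ``Direct application of Theorem~\ref{th_nas_general} with $p_k = k^{-\alpha}/\zeta(\alpha)$,'' and you have spelled out the (trivial) details of that application, namely $p_0=0$ and $p_1=1/\zeta(\alpha)$.
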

\begin{proof}{
Direct application of Theorem~\ref{th_nas_general}
with $p_k = k^{-\alpha}/\zeta(\alpha)$.
}\end{proof}

We plot experimental results for the value of the threshold in
Figure~\ref{fig_nas_seuil}, as well as the upper bounds given above.
We also give in Table~\ref{tab_nas} the thresholds for specific
values of the exponent and the average degree.

\begin{figure}[!h]
\begin{center}
\includetroisres{nas_seuil_0.05.eps}
\end{center}
\caption{
Thresholds and upper bounds for almost-random node attacks.
\lefttoright
\refplotsseuil
}
\label{fig_nas_seuil}
\end{figure}

\degcn{nas}{almost-random node attacks}{0.35 & 0.29 & 0.73 & 0.43 & 0.25 & 0.2 & 0.57 & 0.25}{0.25 & 0.16 & 0.48 & 0.16 & 0.17 & 0.06 & 0.41 & 0.10}

We recall that our aim here is not to obtain an efficient attack
strategy, but to study the ability of a strategy
very similar to random failures to have the same qualitative
behavior as classical attacks, namely to display a finite threshold
for power-law networks.

To this regard, the values of the thresholds displayed in
Table~\ref{tab_nas} are quite large (one has to remove a large fraction
of the nodes do destroy the networks), but remain significantly lower
than $1$ and much lower than the thresholds for node failures
(see Table~\ref{tab_ps}). This shows that the efficiency of classical
attacks relies in part on simple properties like
removing nodes of degree larger than $1$.

\subsubsection{Almost-random link attacks}
\label{sec_naa}

The other attack strategy consists of randomly removing
links between nodes of degree at least 2, \ie\ a node of degree $1$ will always stay connected during the attack. We call it the almost-random
link attack strategy.

We first present a general result for this strategy, then apply
it to the special cases under concern:
Poisson and power-law (both discrete and continuous) networks.
Figure~\ref{fig_naa_cc} displays the behaviors observed for these three
types of networks.

\begin{figure}[!h]
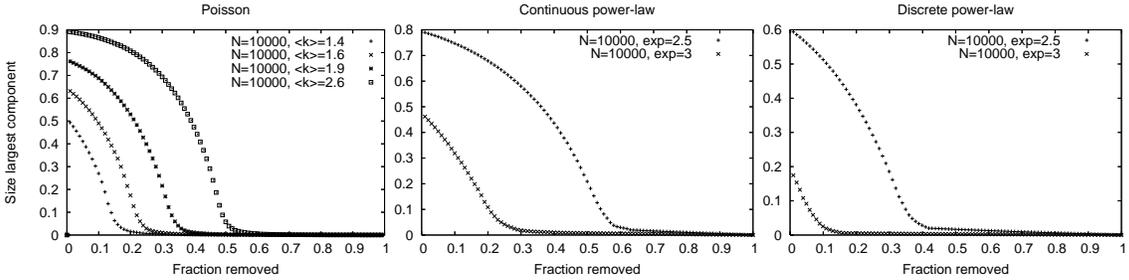

\begin{center}
\includetroisres{10000_naa_cc.eps}
\end{center}
\caption{
Size of the largest connected component as a function of the
fraction of links removed during almost-random link attacks.
\lefttoright
\refplotscc
}
\label{fig_naa_cc}
\end{figure}

Although this strategy is barely different from random link failures,
it is actually much more efficient than failures (see Figure~\ref{fig_pa_cc} for comparison).
In particular, it has a finite threshold for all
the types of networks we consider, which makes it much more efficient
on power-law networks.



\begin{theoreme}
\label{th_naa_general}
The threshold $m_c$ for the almost-random link attack strategy
for large random networks with maximal degree sublinear in the
number of nodes and degree distribution $p_k$
is bounded by
$$ m_c < 1 - \frac{2p_1}{\moy{k}} + \frac{p_1^2}{\moy{k}^2}.$$
\end{theoreme}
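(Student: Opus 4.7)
The plan is to follow exactly the scheme of Theorem~\ref{th_nas_general}: bound the threshold by the total fraction of links that the strategy is ever in a position to remove, and then argue that once those are exhausted there cannot remain a giant component.

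First I would determine what fraction of the original links are candidates for removal, \ie{} have both endpoints of degree at least $2$. In the mean-field framework (Section~\ref{sec_mfa}), a random stub belongs to a degree-$k$ node with probability $kp_k/\moy{k}$, so the probability that a stub is attached to a degree-$1$ node is $p_1/\moy{k}$. Since a link in the configuration model consists of two stubs paired uniformly at random, and since in the mean-field approximation the ``types'' of the two endpoints of a random link may be treated as independent, the probability that both endpoints of a random link have degree at least $2$ is $(1-p_1/\moy{k})^2 = 1-2p_1/\moy{k}+p_1^2/\moy{k}^2$, which is precisely the bound announced in the statement.

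Next I would show that the network obtained after all removable links have been deleted (\ie{} the state in which the attack can no longer proceed) contains no giant component. The key structural observation is that in this final network every surviving link must have at least one endpoint of current degree $1$, which forces each connected component to be either an isolated vertex, an isolated edge, or a star: any vertex $v$ of degree $\geq 2$ in a component must have all its neighbors of degree $1$ (since every edge incident to $v$ requires a leaf endpoint), and since those neighbors are leaves no other vertex of degree $\geq 2$ can be reached from $v$. Under the hypothesis that the maximal degree of the original network is sublinear in $N$, and recalling that the attack only removes links, hence never increases any degree, each such star has size at most $K+1$, which is sublinear. Hence no giant component survives, and the threshold $m_c$ is at most the fraction of links the attack may ever touch, which is the claimed bound.

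The main obstacle lies in this second step, since one must combine a purely structural argument on the final network with the usual approximations of the paper (configuration model, sublinearity of the maximal degree, mean-field independence of stubs). In particular, one should verify that every cycle of the original network is destroyed by the attack: each vertex of a cycle has degree $\geq 2$, so each of its edges is removable at the moment of its deletion, and no cycle can survive to the final state; this is what guarantees the ``stars-only'' description of components. The strict inequality, as opposed to an equality, is finally justified by the observation that the dynamical process of the attack usually stops strictly before all initially removable links have been deleted, because degrees decrease as links are removed and some candidate links become non-removable in the meantime, and moreover the giant component may be destroyed strictly before the attack terminates.
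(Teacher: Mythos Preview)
Your proposal is correct and follows essentially the same approach as the paper: bound $m_c$ by the fraction of links whose two endpoints have initial degree at least~$2$, and observe that once these are gone the remaining graph is a disjoint union of stars of size at most $K+1$. The only noteworthy difference is in the computation of that fraction: the paper counts such links by an inclusion--exclusion on $|E|$, the number $Np_1$ of degree-$1$ nodes, and the expected number $Np_1^2/(2\moy{k})$ of links joining two degree-$1$ nodes, whereas you obtain the same quantity more directly as $(1-p_1/\moy{k})^2$ via the mean-field independence of the two stubs forming a link; your extra care about current versus initial degrees in the termination argument is also slightly more precise than the paper's phrasing, but the substance is identical.
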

\begin{proof}{
The threshold is bounded by the fraction of links between two nodes of degree
at least $2$. Indeed, when all such links have been removed, the only links
left are between a node of degree $1$ and another node. Therefore
the network is nothing but a set of disjoint stars (each central node
being connected to nodes of degree~$1$). The size of the largest
component therefore is
less than $K+1$, where $K$ is the maximal degree in the original network.
Hence, if $K$ is sublinear with respect to the number
of nodes, so is the size of the largest component after the attack.

Let us now evaluate the number of links between nodes of degree at
least $2$. This quantity is $|E|$ minus the number of links incident
to at least one node of degree $1$. The number of such links is given
by the number of nodes of degree $1$, minus the number of links between
two nodes of degree $1$.

The number of links between two nodes of degree $1$ can be evaluated
as follows. There are $N p_1$ nodes of degree $1$, each of them having
a probability ${N p_1}/{2|E|}$ of being connected to another node of
degree $1$\footnote{This is an approximation of the real value $(N-1)p_1/(2|E|-1)$}.
Therefore the number of {\em nodes} of degree $1$
adjacent to another node of degree $1$ is
${Np_1\cdot Np_1}/{2|E|} = {Np^2_1}/{\moy{k}}$. Finally, the number
of links between two such nodes is ${N p^2_1}/{2\moy{k}}$.

From this we have that the number of links adjacent to at least one
node of degree $1$ is: $Np_1 - {N p^2_1}/{2\moy{k}}$,
and the number of links {\em not}
adjacent to any node of degree $1$ is:
$|E| - Np_1 + {N p^2_1}/{2\moy{k}}.$ The fraction of such links
therefore is
$ 1 - \frac{2p_1}{\moy{k}} + \frac{p_1^2}{\moy{k}^2}$,
hence the result.
}\end{proof}


\begin{corollaire}
\label{cor_naa_ER}
\forPoisson{},
the threshold $m_c$ for almost-random link attacks is bounded by
$$m_c < 1-2e^{-z}+e^{-2z}.$$
\end{corollaire}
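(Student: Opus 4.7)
The plan is to observe that this corollary is a direct application of Theorem~\ref{th_naa_general} to the Poisson case, so the proof will be essentially a substitution of the appropriate values for $p_1$ and $\moy{k}$ into the general bound
$$m_c < 1 - \frac{2p_1}{\moy{k}} + \frac{p_1^2}{\moy{k}^2}.$$

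First, I would recall from Lemma~\ref{lem_moments_er} that a Poisson distribution with parameter $z$ has first moment $\moy{k}=z$. Then, from the defining formula $p_k = e^{-z} z^k/k!$, I would compute $p_1 = z e^{-z}$. Plugging these into the bound from Theorem~\ref{th_naa_general} gives
$$m_c < 1 - \frac{2 z e^{-z}}{z} + \frac{z^2 e^{-2z}}{z^2} = 1 - 2e^{-z} + e^{-2z},$$
which is the desired inequality.

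Before invoking Theorem~\ref{th_naa_general}, I would briefly check that its hypothesis (maximal degree sublinear in the number of nodes) is satisfied in the Poisson case; this is immediate from Lemma~\ref{lem_K_er} and the discussion at the end of Section~\ref{secK}, which observes that the expected maximal value $K$ grows sublinearly with $N$ for Poisson distributions. There is no real obstacle here: the only thing that could go wrong is a transcription error in $p_1$, since the general theorem has already done all the combinatorial work. I would therefore present the corollary as a one-line \emph{Direct application of Theorem~\ref{th_naa_general} with $p_k = e^{-z}z^k/k!$ and $\moy{k}=z$}, followed by the substitution yielding the stated bound.
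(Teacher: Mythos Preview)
Your proposal is correct and matches the paper's own proof, which is simply stated as a direct application of Theorem~\ref{th_naa_general} with $p_k = e^{-z}z^k/k!$, together with the observation (via Lemma~\ref{lem_K_er}) that the maximal degree is sublinear. Your explicit substitution $p_1 = ze^{-z}$, $\moy{k}=z$ is exactly what the paper leaves implicit.
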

\begin{proof}{
Direct application of Theorem~\ref{th_naa_general} with $p_k = e^{-z}z^k/k!$,
the maximal degree of the network being sublinear in the size of the
network (see Lemma~\ref{lem_K_er}).
}\end{proof}


\begin{corollaire}
\label{cor_naa_sf}
\forSf{},
the threshold $m_c$ for almost-random link attacks is bounded by
$$m_c <
1-
\frac{2(\alpha-2)m^{\alpha-2}(1-2^{-\alpha+1})}{(\alpha-1)}
+ \left(\frac{(\alpha-2)m^{\alpha-2}(1-2^{-\alpha+1})}{(\alpha-1)}\right)^2.
$$
\end{corollaire}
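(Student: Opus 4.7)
The plan is to apply Theorem~\ref{th_naa_general} directly, so the proof amounts to verifying the two hypotheses (sublinear maximal degree and existence of $\moy{k}$) and then computing $p_1$ and $p_1/\moy{k}$ for the continuous power-law distribution.

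First I would check the sublinearity condition: by Lemma~\ref{lem_K_sf} the maximal degree of a sample of $N$ values from a continuous power-law with exponent $\alpha>1$ is $K = mN^{1/(\alpha-1)}$, which is sublinear in $N$ whenever $\alpha>2$ (the range in which $\moy{k}$ is finite, by Lemma~\ref{lem_moments_sf}). So Theorem~\ref{th_naa_general} applies.

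Next I would read off $p_1$ directly from the defining formula $p_k = m^{\alpha-1}(k^{-\alpha+1}-(k+1)^{-\alpha+1})$, giving
$$p_1 = m^{\alpha-1}\bigl(1 - 2^{-\alpha+1}\bigr).$$
Combined with $\moy{k} = m(\alpha-1)/(\alpha-2)$ from Lemma~\ref{lem_moments_sf}, this yields
$$\frac{p_1}{\moy{k}} = \frac{(\alpha-2)\,m^{\alpha-2}\bigl(1-2^{-\alpha+1}\bigr)}{\alpha-1}.$$
Substituting this ratio into the bound $m_c < 1 - 2p_1/\moy{k} + (p_1/\moy{k})^2$ provided by Theorem~\ref{th_naa_general} gives exactly the claimed expression.

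There is no real obstacle here — the statement is essentially a substitution corollary. The only point that merits a line of care is noting that the hypotheses of Theorem~\ref{th_naa_general} (sublinear $K$ and a well-defined first moment) restrict us to $\alpha>2$, which is consistent with the conventions declared in Section~\ref{sec_net}.
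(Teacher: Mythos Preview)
Your proposal is correct and follows exactly the same approach as the paper's own proof, which is simply stated as a direct application of Theorem~\ref{th_naa_general} with $p_k = m^{\alpha-1}(k^{-\alpha+1}-(k+1)^{-\alpha+1})$, $\moy{k}=m(\alpha-1)/(\alpha-2)$ from Lemma~\ref{lem_moments_sf}, and sublinearity of the maximal degree from Lemma~\ref{lem_K_sf}. Your write-up is in fact more explicit than the paper's, spelling out the computation of $p_1/\moy{k}$ and the restriction $\alpha>2$ needed for finiteness of $\moy{k}$.
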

\begin{proof}{
Direct application of Theorem~\ref{th_naa_general}
with $p_k = m^{\alpha-1}(k^{-\alpha+1} - (k+1)^{-\alpha+1})$, $\moy{k} = m\ \frac{\alpha-1}{\alpha-2}$ (Lemma~\ref{lem_moments_sf}),
the maximal degree of the network being sublinear in the size of the
network (Lemma~\ref{lem_K_sf}).
}\end{proof}

\begin{corollaire}
\label{cor_naa_vsf}
\forVsf{},
the threshold $m_c$ for almost-random link attacks is bounded by
$$m_c < 1 - \frac{2\zeta(\alpha -1) -1}{\zeta^2(\alpha-1)}.$$
\end{corollaire}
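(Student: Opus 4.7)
The plan is to apply Theorem~\ref{th_naa_general} directly, since all the ingredients it requires are readily available for the discrete power-law case with $m=1$. The upper bound provided by that theorem has the form
$$ m_c < 1 - \frac{2 p_1}{\moy{k}} + \frac{p_1^2}{\moy{k}^2}, $$
so I only need to identify $p_1$ and $\moy{k}$ in closed form, check that the hypothesis on the maximal degree is satisfied, and then simplify.

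First I would plug in the degree distribution $p_k = k^{-\alpha}/\zeta(\alpha)$, which immediately gives $p_1 = 1/\zeta(\alpha)$. Next, by Lemma~\ref{lem_moments_vsf}, the first moment is $\moy{k} = \zeta(\alpha-1)/\zeta(\alpha)$. Dividing these two expressions yields the clean identity
$$ \frac{p_1}{\moy{k}} = \frac{1}{\zeta(\alpha-1)}, $$
which is the only nontrivial algebraic observation of the argument: the $\zeta(\alpha)$ factors cancel, leaving a bound depending solely on $\zeta(\alpha-1)$.

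Substituting into the bound from Theorem~\ref{th_naa_general} gives
$$ m_c < 1 - \frac{2}{\zeta(\alpha-1)} + \frac{1}{\zeta^2(\alpha-1)} = 1 - \frac{2\zeta(\alpha-1)-1}{\zeta^2(\alpha-1)}, $$
which is exactly the stated bound. To invoke Theorem~\ref{th_naa_general} legitimately, I also need to confirm that the maximal degree of a discrete power-law network is sublinear in $N$; this is guaranteed by Lemma~\ref{lem_K_vsf}, as discussed on page~\pageref{pag_K}.

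There is no real obstacle here: the corollary is essentially a one-line substitution once $p_1$ and $\moy{k}$ are identified. The only mild care required is the fraction arithmetic in combining $-2/\zeta(\alpha-1)$ with $1/\zeta^2(\alpha-1)$ over a common denominator, and verifying the sublinearity hypothesis so that Theorem~\ref{th_naa_general} applies.
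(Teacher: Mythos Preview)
Your proof is correct and follows exactly the same approach as the paper: a direct application of Theorem~\ref{th_naa_general} with $p_k = k^{-\alpha}/\zeta(\alpha)$, $\moy{k} = \zeta(\alpha-1)/\zeta(\alpha)$ from Lemma~\ref{lem_moments_vsf}, and the sublinearity of the maximal degree from Lemma~\ref{lem_K_vsf}. You have simply spelled out the substitution and the fraction arithmetic that the paper leaves implicit.
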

\begin{proof}{
Direct application of Theorem~\ref{th_naa_general} with $p_k = k^{-\alpha}/\zeta(\alpha)$, $\moy{k} = \frac{\zeta(\alpha-1)}{\zeta(\alpha)}$ (Lemma~\ref{lem_moments_vsf}),
the maximal degree of the network being sublinear in the size of the
network (Lemma~\ref{lem_K_vsf}).
}\end{proof}

We plot experimental results for the value of the threshold in
Figure~\ref{fig_naa_seuil}, as well as the upper bounds given above.
We also give in Table~\ref{tab_naa} the thresholds for specific
values of the exponent and the average degree.

\begin{figure}[!h]
\begin{center}
\includetroisres{naa_seuil_0.05.eps}
\end{center}
\caption{
Thresholds and upper bounds for almost-random link attacks.
\lefttoright
\refplotsseuil
}
\label{fig_naa_seuil}
\end{figure}

\degcn{naa}{almost-random link attacks}{0.62 & 0.55 & 0.86 & 0.51 & 0.37 & 0.35 & 0.72 & 0.32}{0.39 & 0.22 & 0.64 & 0.23 & 0.15 & 0.07 & 0.57 & 0.15}

As in the case of almost-random node attacks, the values of
the thresholds are quite large but remain significantly lower than $1$.
Since our aim  is still to study the ability of a strategy
very similar to random failures to display a finite threshold, this
result is satisfactory.
This shows that the efficiency of classical attacks relies in part
on simple properties like removing
links between nodes of degree at least $2$.

Going further, one may notice that almost-random link attacks perform
better than classical attacks in terms of the number of removed links (see Table~\ref{tab_asa}).
This shows that classical attacks, although they focus on high degree
nodes, actually remove many links connected to nodes of degree one,
which play little role in the connectivity of the network.
The simple almost-random strategy, on the opposite,
focuses on those links which really disconnect the network.

\subsection{Conclusion on attacks.}
\label{sec_a_conclu}

There are two main formal conclusions for this section. First, as expected
from the empirical results discussed in introduction, power-law networks
are very sensitive to classical attacks,  much more than Poisson networks.
Second, the link point of view shows that many links are actually removed
when the thresholds for classical attacks are reached. Moreover, very simple
attack strategies close to random node or link failures also lead to finite
(and reasonably small) thresholds.

Altogether, these results make it possible to discuss
precisely the efficiency of classical
attacks. First, although the number of links removed during
such  attacks is huge, this is not sufficient to explain the
collapse of the network. Indeed the removal of the same number
of links at random does not collapse the network. Second, the
number of removed links during classical attacks in a Poisson
network and in a power-law network are very similar. This
moderates the conclusion that power-law networks are particularly
sensitive to classical attacks, since in terms of links both are equally robust.


Finally, the attack strategies we introduced, which are very close
to random failures, show that the efficiency of classical attacks
relies strongly on simple properties like  removing nodes of
degree larger than $1$ and links between nodes of degree at least $2$.



\section{Resilience of real-world networks.}
\label{sec_real}

So far we have presented theoretical results as well as experiments on failures and attacks
for different types of random networks.
We have seen that, apart from the general shape
of their degree distributions,  precise properties of the
networks under concern, like for instance their fraction of nodes of
degree $1$, may play a crucial role in their behavior in case of
failures or attacks. Other properties not captured by the models, like
clustering for instance, may also play an important role. In order to
give some insight on the practical incidence of the
results above, we present in this section empirical results on real-world
complex networks.

We will  consider the following real-world cases, which are representative
of the ones considered
in studies of this field.
 The {\em actor} network is composed of movie actors which are
 linked together if they played in the same movie.
It is obtained from the {\em Internet Movie
 DataBase} \cite{imdburl2}.
 See~\cite{watts98collective,barabasi02linked} for results on this network.
 The {\em co-authoring} network is composed scientific authors, two
authors  being linked together if they signed a paper (present in the
archive)  together.
It is obtained from the {\em arXiv} site
 \cite{arxivurl2}
See~\cite{newman01scientific1,newman01scientific2} for results on this
kind of networks.
 The {\em cooccurrence} network is composed of words of the  {\em Bible}~\cite{bibleurl2},
 two words being linked  if they belong to the same sentence.
 See~\cite{ferrer01small} for results on this kind of networks.
 The {\em internet1} and the {\em internet-core} networks are
 two internet maps
 where the nodes are routers in the internet, two routers being linked
 if they are at one hop at the {\sc ip} level.
 These networks are described and studied in \cite{govindan00heuristics,Guillaume2005IPTopology}.
 The {\em protein} network is composed of proteins,
two proteins being linked if they interact together in a cell.
It is obtained from \cite{barabasilaburl}. See \cite{jeong00largescale}
for results on this kind of networks.
 The {\em www} network is composed of web pages, two pages being linked
 together if there is a hyperlink from one of them to the other. This
 sample is obtained from~\cite{barabasilaburl}.
 See~\cite{albert99diameter} for results on this kind of networks.
 Finally, the {\em p2p} network is a set of exchanges between peers, captured
 in a running peer-to-peer system, two peers being linked if they
 exchanged a file during the capture.
 See~\cite{Guillaume2004P2P,Guillaume2005P2P} for precise description and results on this network.
We give in Table~\ref{tab_avgdeg_real} the number of nodes and links of these
networks, together with their average degree.
More detailed description of these data can be found in the references.

\begin{table}
\begin{center}
\begin{tabular}{|l|l|l|l|}
\hline
network &  \# nodes & \# links & avg. deg.\\
\hline
actor &  392\,341 & 15\,038\,083& 76.658\\
co-authoring &  16\,402& 29\,552 & 3.603\\
cooccurrence & 9\,297 &  392\,066 &  84.342\\
internet1 & 228\,263 & 320\,149 & 2.805\\
internet-core & 75\,885 & 357\,317 & 9.417\\
protein & 2\,115 & 4\,480 & 4.236\\
www & 325\,729 & 1\,497\,135 & 9.192\\
p2p & 159\,541 & 17\,454\,369 & 218.807\\
\hline
\end{tabular}
\end{center}
\caption{Number of links, nodes and average degree of the real-world networks under study.}
\label{tab_avgdeg_real}
\end{table}

We will discuss these networks' resilience to failures
and attacks and will explain it using our knowledge of their structure
and the results presented in this paper.
Note that these networks are obtained through intricate measurement
procedures. It is known that the obtained views of the networks cannot
always be considered as representative of the original networks, see for instance~\cite{measurement,tarissan2009efficient}
in the general case and~\cite{Guillaume2005IPTopology,Vespignani2005TCS,chen02origin,chang01inferring,IPconn,Moore2005Traceroute,lakhina02sampling} in the case of the internet.
However, our goal is not to address such issues 
and we do not claim to give exact results on these networks.
Our goal is rather to illustrate our approach.

Figures~\ref{fig_actors} to~\ref{fig_p2p-tcp} show the obtained plots,
together with the actual degree distribution of the network.
We also present in Figures~\ref{fig_real_csf_2.5} and~\ref{fig_real_vsf_3}
the behavior of power-law random networks in case of failures and attacks,
in order to compare it to the behavior of our real-world networks.
We chose the two types of  power-law networks that yielded the most different
behaviors, namely continuous power-law networks with exponent $2.5$
and discrete power-law networks with exponent $3$,
in order to span all existing behaviors for these networks.


\newcommand{\scalecomp}{0.4}
\begin{figure}
\begin{center}
\includegraphics[scale=\scalecomp]{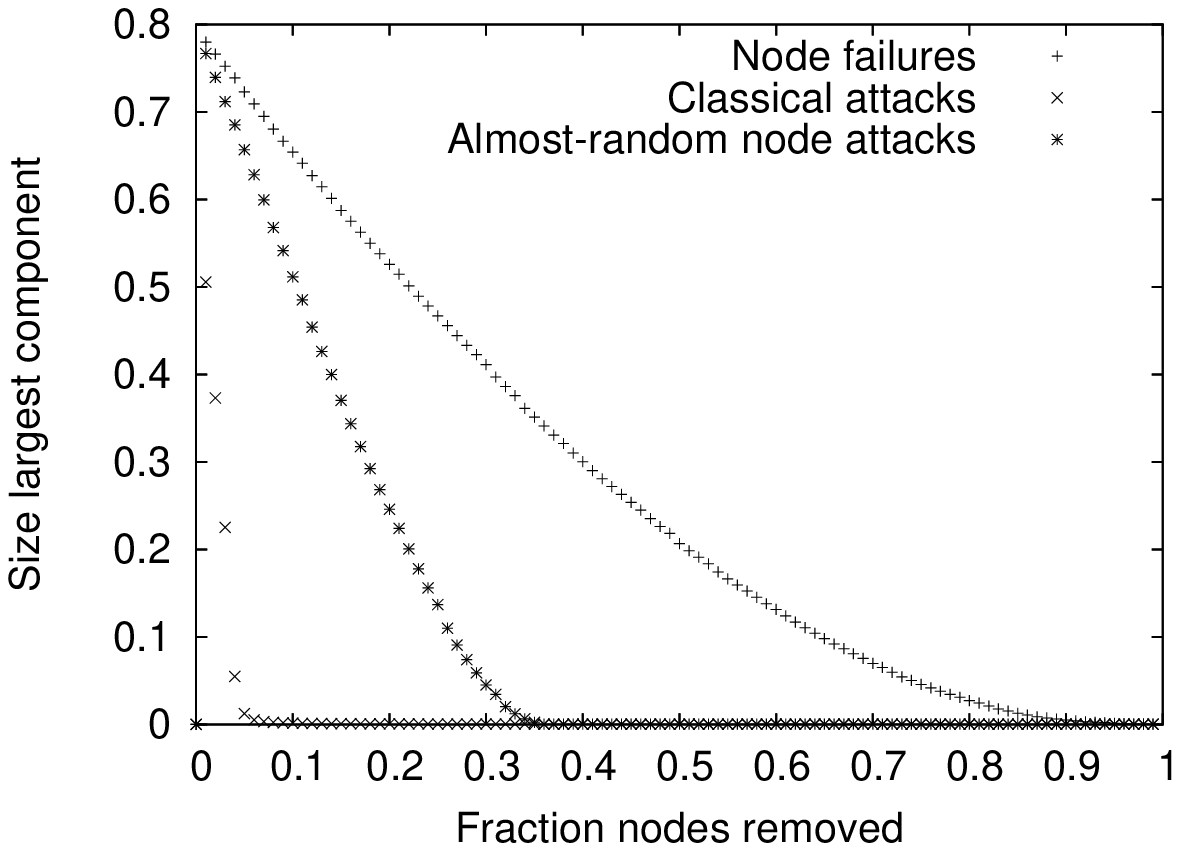}
\includegraphics[scale=\scalecomp]{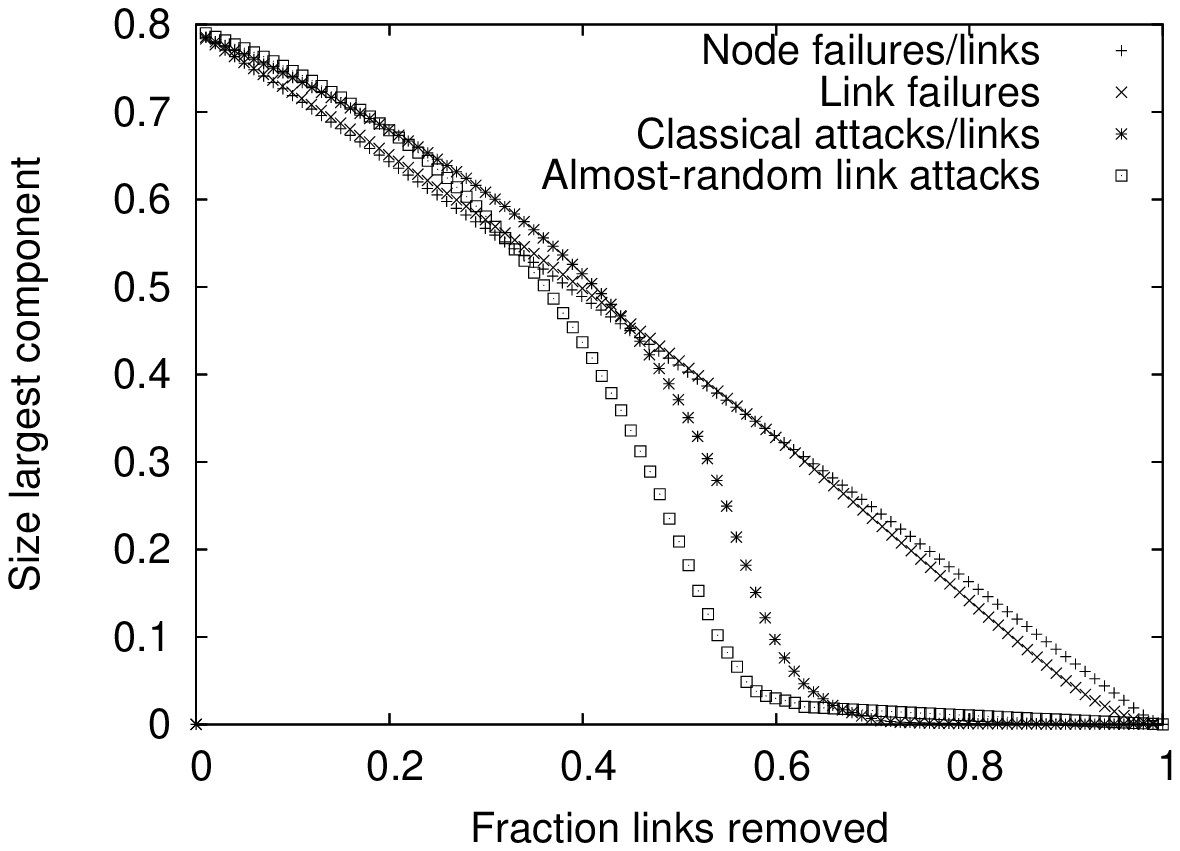}
\end{center}
\caption{Behavior of continuous power-law networks with exponent 2.5 in case of failures and attacks.}
\label{fig_real_csf_2.5}
\end{figure}

\begin{figure}
\begin{center}
\includegraphics[scale=\scalecomp]{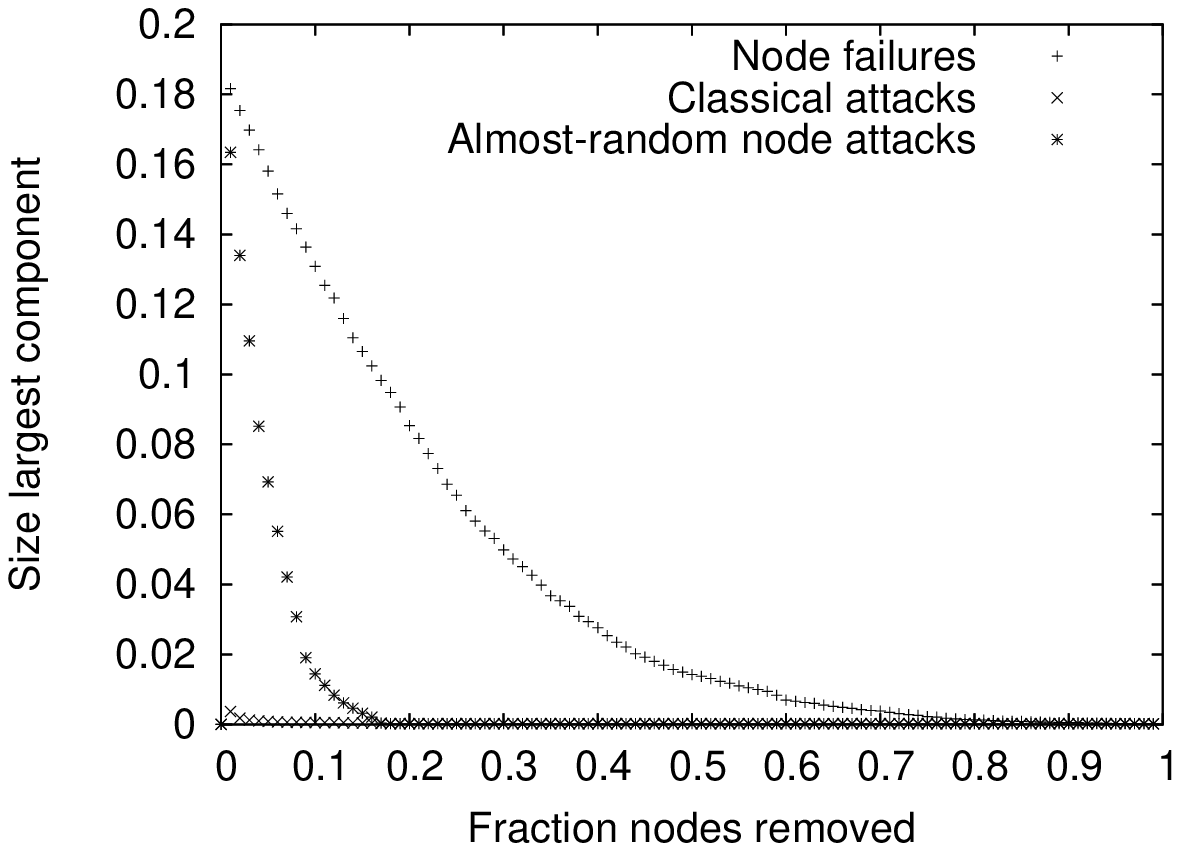}
\includegraphics[scale=\scalecomp]{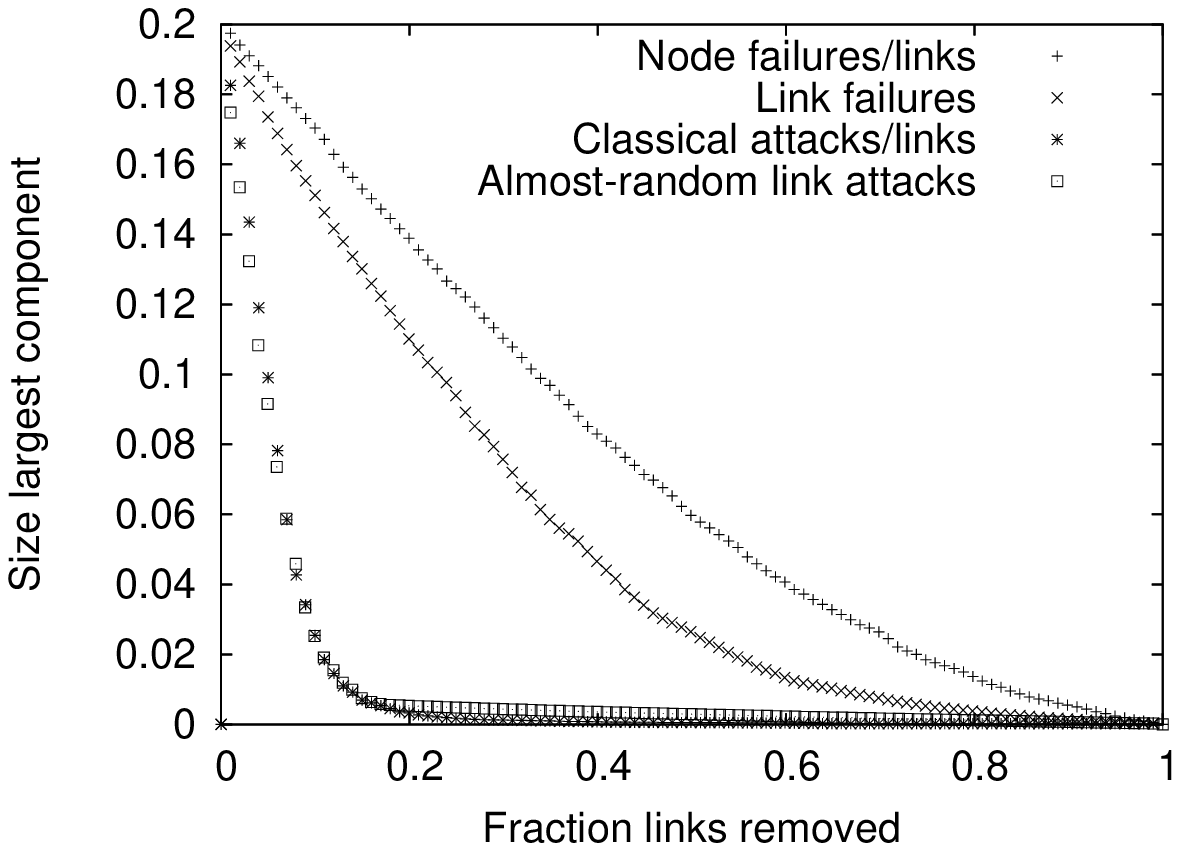}
\end{center}
\caption{Behavior of discrete power-law networks with exponent 3 in case of failures and attacks.}
\label{fig_real_vsf_3}
\end{figure}

\figreal{actors}{actor}

\figreal{arxivunique}{co-authoring}

\figreal{cooccurrence-clem}{coccurrence}

\figreal{int081099}{internet1}

\figreal{damien}{internet-core}

\figreal{protein}{protein}

\figreal{www}{www}

\figreal{p2p-tcp}{p2p}

\medskip
First notice that none of these networks has a Poisson structure, though
their degree distribution sometimes is quite far from a power-law. In
all the cases the degree distribution is highly heterogeneous,  which
makes the discussed attack strategies relevant. 
Notice however that some networks (in particular,
the {\em actor}, {\em cooccurrence}, {\em internet-core} and {\em p2p}
networks)
have a significantly lower fraction of nodes of degree one than what can be
expected for power-law networks.

It appears that in all the cases the behavior for random node failures
and classical attacks matches the theoretical expectations: there is a significant
difference between random node failures and  classical attacks, even if
this difference is quite small in the cases of {\em actor} and {\em p2p}
networks.
The same is true for almost-random node attacks:
in most cases, they are an intermediate between random nodes failures and
classical attacks.
Notice however that in several cases, namely {\em actor}, {\em cooccurrence},
 {\em internet-core} and {\em p2p}, the behaviors in the case
 of random node failures and almost-random node attacks are
 very similar. This is due to the fact that, as pointed out above,
 there are few nodes of degree $1$ with respect to the total
 number of nodes.

Concerning link removals, we observe
two different types of behaviors.
The {\em co-authoring}, {\em internet1}, {\em protein}, and {\em www}
networks more or less conform to the expected behavior.
Notice however that the {\em www} network is very resilient to all link removal strategies.
We will come back to this point later.

The {\em  actor}, {\em cooccurrence}, {\em internet-core},
and {\em p2p} networks,
however, behave very differently from what was expected.
For these networks,  almost
 all links have to be removed to simply {\em reduce significantly} the
 size of the largest connected component,
for all strategies except node
 failures seen from the link point of view.
This can be explained as follows.
First, 
these networks have very strong average degrees:
they are the four networks with the strongest average degree.
Denser networks are naturally more resilient to link removals.
Also, notice that these networks have a relatively low fraction of
nodes of degree one, compared to the other networks.
This naturally induces that a very large fraction of the links are
attached to the highest degree nodes.
The removal of a few of these nodes does not have a strong impact
on the network, however this induces the removal of {\em most} of
the links, causing the plot for classical attacks seen from the
link point of view to be almost horizontal in the beginning.
The fact that there are few nodes of degree one also explains
why random link failures and almost-random link attacks
behave very similarly.
Finally, counter-intuitively, the most efficient link removal
strategy for these networks is node failures seen from the link
point of view (though this strategy cannot be called  efficient in itself,
since all links must be removed to break down the networks):
random node failures cause the size of the largest connected component
to decrease linearly in this case 
(meaning that the largest connected component contains all nodes except
the ones actually removed by failures).
We have seen in Proposition~\ref{prop_psa_mp} that, when a fraction $p$
of the nodes are removed by random failures,
the corresponding fraction of removed links is $2p - p^2$.
The size of the largest component as a function of the fraction of removed
links is therefore expected to evolve as $\sqrt{1-x}$,
which corresponds to the shape of the random node failures seen from the
link point of view plot.

It must be noticed finally that though the {\em  actor}, {\em cooccurrence}, {\em internet-core},
and {\em p2p} networks behave similarly and have high average degrees,
there is a large difference in
the average degree of the {\em internet-core} network and the one of the
three others.
The average degree of the {\em www} network is in fact almost the same as
the one of the {\em internet-core} network, but it behaves differently
(though it is indeed very resilient to link removal
strategies).
This means that the average degree does not uniquely determine the behavior
of a network, and that other properties are in cause.



\medskip

 The behaviors of the two maps of the internet are very different,
 which shows that one must be very careful  when deriving conclusions
 about such networks. Indeed, the measurement procedure only gives
 a partial and biased view, see~\cite{Guillaume2005IPTopology,Vespignani2005TCS,chen02origin,chang01inferring,IPconn,Moore2005Traceroute,lakhina02sampling}.
This moderates the often
 claimed assertion that the internet is very robust to failures and
 very sensitive to attacks, which has been derived from such
 experiments, typically conducted on maps of the kind of {\em
 internet1}.

 In this last case, one may notice that almost-random link attacks
 destroy the network more efficiently than classical attacks
 viewed from the link point of view. This indicates that the robustness
 of this network in case of failures is strongly due to the fact that
 the amount of nodes of degree $1$ is huge.



\medskip
In conclusion,
these experiments showed that, concerning random node failures and
classical attacks, real-world networks behave accordingly to theoretical
predictions.
When taking into account almost-random node attacks,
as well as link removal strategies, however,
more subtle behaviors occur.

This shows that all the aspects we have discussed in this
paper must be taken into account when dealing with practical cases.
This also  shows that much remains to be done
to fully understand the observed phenomena;
we will discuss this further in the conclusion.
On the other hand, one may see the study of the resilience of a given
network as a way to deepen the understanding of its structure and point out
some non-trivial features which should be explored.
This is the case,
for instance, of the remarks we made on the {\em www} network above.


\section{Conclusion and discussion.}
\label{sec_conclu}

In this contribution, we focused on a set of previously known
results which received much attention in the last few years. These
results state that, although power-law networks are very resilient
to random (node or link) failures and Poisson ones are not, they
are very sensitive to a special type of attacks
(which we call {\em classical attacks}) consisting of removing the highest degree nodes first, while Poisson networks are not. This had led to the
conclusion that its power-law degree distribution may be seen as an
{\em Achille's heel of the internet}~\cite{nature00cover,Barabasi2003Handbook}.

These results were first obtained empirically~\cite{albert00error,broder00graph}, but
an important analytic effort has been made to prove them with
mean-field and asymptotic approximations~\cite{Cohen2000RandomBreakdown,Cohen2001Attack,Cohen2003Handbook,Newman2000Robustness,Newman2003Handbook}
using two different techniques.

Our first contribution is to give a unified and complete
presentation of these results (both empiric and analytic ones).
Since some of the involved techniques (in particular mean-field
approximations) are unusual in computer science, we emphasized on
the methods used and gave much more detailed proofs than in the original
papers. In particular, we pointed out the approximations where they
occur, discussed them in the light of the experiments, and
tried generally to give a didactic presentation.

\bigskip

Our second contribution is to introduce
some new results on cases which received less
attention,
maybe because these
results are less striking. They are however essential for
deepening one's understanding of this topic.
We focused in
particular on two aspects: studying the finite case,
and studying the link point of view of random node failures,
and classical attacks.
We also introduced new attacks,
very similar to random failures,
which allowed us to deepen our understanding of the phenomena at play.

Finally, we conducted extensive simulations,
 on random graphs of different types in order to confront them to theoretical results,
but also on real-world networks,
which put in evidence complex behaviors,
not attributable to the sole degree distribution.

All this showed that many of the classical conclusions of the field
should be discussed further.
We may now put all these results and their relations together to derive
global conclusions.

\medskip

Concerning random node and link failures (\ie{} random node and link removals),
the striking point is that, although analysis
predict completely different behaviors for Poisson and power-law
networks, in practice the differences, though  important, are not huge
(see Tables~\ref{tab_ps} to~\ref{tab_pa}).
This is even more pronounced for link failures.
This overestimation of the
difference was due to the study of the infinite limit
and to the approximations made. It may also be a consequence of our choice to
consider that
a network breakdown occurs when the size of the
the giant component reaches $5\%$ of all nodes,
but other conventions lead to similar conclusions.

Concerning classical attacks (\ie{} removal of nodes in decreasing order of their degree),
we have shown that, although the thresholds
for power-law networks indeed are very low, and much smaller than the ones for Poisson networks,
our other observations tend to moderate this conclusion.
Indeed, as one may have guessed, the number of links removed during a
classical attack is huge. When one considers the number of removed links,
power-law networks are not more fragile than Poisson ones.

The large number of removed links, though it clearly plays a role,
is however not sufficient to explain
the efficiency of classical attacks: if one removes the same fraction
of links randomly, then there is no breakdown.
This invalidates the often claimed explanation that
classical attacks are very efficient on power-law networks because they remove
many links.

Going further, if one removes the same, or even a smaller,
 fraction of links, but {\em almost} randomly
(\ie\ randomly among the ones which are linked to nodes of degree at
least $2$) then a breakdown occurs. In terms of the fraction
of removed links, classical attacks therefore lie between random
link failures and almost-random link attacks, which makes them not
so efficient.

Finally, the efficiency of classical attacks resides mainly in the
fact that it removes many links, and that these links are mostly
attached to nodes of degree larger than $1$.
Conversely, this explains the robustness of power-law networks to
random node failures:
 such failures often remove nodes
of degree $1$ and/or links attached to such nodes.

\medskip

Another conclusion of interest comes from the study of classical
attacks on Poisson networks (which was not done in depth until now).
Although these networks behave similarly in case of random
node failures and classical attacks, it must be noted that their
threshold is significantly lower in the second case. This goes
against the often claimed assumption that, because all nodes
have almost the same degree in a Poisson network,
there is little difference between
random node failures and classical attacks.
This is worth noticing, since it reduces the
difference, often emphasized, in the behavior of Poisson and power-law
networks.

\medskip

The observation of practical cases in Section~\ref{sec_real} also
provided interesting insights: in several cases, some observed behaviors
may be explained using the results in this paper  and our knowledge
of the properties of the underlying network. It  appears clearly
however that
other properties than degree distributions play an important role on network
resilience.
This points out interesting directions for further analysis.
Conversely, this shows that one may see the study of a particular network's resilience
as a way to obtain some insight on its structure.

\medskip

All these results led us to the conclusion that, although
random node failures and classical
attacks clearly behave differently and though the Poisson or
power-law nature of the network has a strong influence in this,
one should
be careful in deriving conclusions. This is confirmed by our
experiments on real-world networks. The sensitivity of
networks to attacks relies less on the presence of high-degree nodes
than on the fact that they have many low-degree nodes.
Conversely, their robustness to failures relies strongly on the fact that when we
choose a node at random, we choose such a node with high probability,
and not so much on the fact that high-degree nodes hold the network together.
Moreover, the fact that a classical attack on a power-law network
removes many links may be considered as partly, but not fully,
responsible for its rapid breakdown.

\bigskip

Although this paper is already quite long,
we had to make some choices in the presented results, and
there are of course many omissions.
For instance, we did not mention random networks with degree
correlations, on which interesting results exist~\cite{Boguna2003EpidemiCorrelations,Vazquez2003ResilienceCorrelations}, or other types of modeling, for instance the HOT framework~\cite{Doyle05robust,Li04first}. We also
ignored the various contributions considering other attack
strategies~\cite{lee05robustness,crucitti04error,crucitti04model,newth04evolving,Broido2002Resilience,park03static,Holme2002Vulnerability,Flajolet2002Robustness,Motter2002RangeAttacks,motter02cascade,Motter2004Cascade,Pertet2005Cascading,Zhao2004Cascading} and other definitions of the robustness than the size of the giant component~\cite{Latora2001Efficient,Holme2002Vulnerability,park03static,Crucitti2003EfficiencyScale-Free,Li04first,Doyle05robust}.
Likewise,
we could have compared real-world networks in Section~\ref{sec_real}
with random networks having exactly the same degree distribution,
which would certainly be enlightening. It would be interesting also
to compare more precisely the results obtained
with other practical definitions of a network breakdown.

Presenting and discussing all these aspects is impossible in a
reasonable space; instead, we chose to deepen the basic results of the
field, which we hope leads to a significantly improvement of our
understanding of them. This work could serve as a basis for further
deepening of some aspects we have voluntary omitted, such as the ones pointed out
above.

Going further, it must be clear that other properties than the
degree distribution, which should be investigated,
certainly play a role in real-world complex networks' robustness. This
appears clearly in Section~\ref{sec_real}, where several classes
of behaviors appear.

It seems obvious, for instance, that the
fact that nodes of real-world complex networks are organized in communities (groups of densely connected nodes), which is also captured
in part by the notion of clustering
coefficient~\cite{Blondel08fast,gibson98inferring,kumar99trawling,flake00efficient,girvan01community,flake02self-organization,Newman2004Communities,Latapy05Communities}, plays a key role. However, these properties are not captured by the random
networks  we considered here.

Studying robustness of networks with more subtle properties than
degree distributions would therefore be highly relevant, but most
remains to be done. In particular, while there is a consensus on
the modeling  of networks with a given degree distribution in the community
(through the use of the configuration model~\cite{Bender1978Configuration}, like here,
or the preferential attachment principle~\cite{barabasi99emergence}), there is no
consensus for more subtle properties like the clustering. Many
models have been proposed, but each has its own advantages and
drawbacks. Some of them seem however well suited for analysis,
as they are simple extensions of the configuration model~\cite{Guillaume2004BipartiIPL,Guillaume2004BipartiCAAN}
or of the preferential attachment one~\cite{dorogovtsev02evolution,dorogovtsev00structure}.

Finally, let us insist once more on the necessity of developing formal
results to enhance our understanding of empiric results.
It makes no doubt that experiments
(in our case  first obtained in~\cite{albert00error,broder00graph})
bring much understanding and intuition on
phenomena of interest.
The need for rigor and for a deeper understanding of what
happens during these experiments
is however strong.
It led to several approaches to analyze them. The main ones in
our context are developed in~\cite{Cohen2000RandomBreakdown,Cohen2001Attack,Cohen2003Handbook,Newman2000Robustness,Newman2003Handbook}.
They all rely on
mean-field approximations,
and we gave here the details of the underlying approximations and assumptions. Such an approach is
definitively rigorous, but is not {\em formal}. Obtaining exact results, or even approximate results,
with formal methods
would be another improvement. Some results begin
to appear in this direction~\cite{Bollobas2003Robustness}, but
much remains to be done and the task is challenging.



\medskip
\subsubsection*{Acknowledgments}

We thank the anonymous referees for taking the time to read this
paper in-depth and making very valuable comments for improving it.
We thank Fabien Viger for valuable comments on degree distributions.
This work was supported in part by the
European MAPAP SIP-2006-PP-221003 project,
the French ANR MAPE project,
the MetroSec ({\em Metrology of the internet for Security and quality of services}, \url{http://www.laas/fr/~METROSEC}) project
and by the GAP ({\em Graphs, Algorithms and Probabilities}) project.


\bibliographystyle{plain}
\bibliography{xbib}

\end{document}